\documentclass[12pt,draftcls,onecolumn]{IEEEtran}
%\documentclass[journal,twocolumn]{IEEEtran}

%% depending on your installation, you may wish to adjust the top margin:
% \addtolength{\topmargin}{9mm}

%%%%%%
%% Packages:
%% Some useful packages (and compatibility issues with the IEEE format)
%% are pointed out at the very end of this template source file (they are 
%% taken verbatim out of bare_conf.tex by Michael Shell).
%
% *** Do not adjust lengths that control margins, column widths, etc. ***
% *** Do not use packages that alter fonts (such as pslatex).         ***
%
\usepackage[utf8]{inputenc} 
\usepackage[T1]{fontenc}
\usepackage{url}
\usepackage{ifthen}
\usepackage{cite}
\usepackage{mathrsfs}
\usepackage{color}
\usepackage[cmex10]{amsmath} % Use the [cmex10] option to ensure complicance
                             % with IEEE Xplore (see bare_conf.tex)

%\usepackage{amssymb,amsthm,acronym,xcolor,xspace,stmaryrd,graphicx,subcaption,paralist,acronym,dsfont}

\usepackage{amssymb,amsthm,acronym,graphicx,dsfont,stmaryrd,paralist}

% Expectation of a random variable
\newcommand{\E}[2][]{{\mathbb{E}_{#1}}{\left(#2\right)}}       
% Probability of event
\renewcommand{\P}[2][]{{\mathbb{P}_{#1}}{\left(#2\right)}}
% variance of random variable
       
% Divergence

\newcommand{\avgD}[2]{{{\mathbb{D}}\!\left({#1\Vert#2}\right)}}
% Variational distance

% Mutual information

% Average entropy

% Average diffenrental entropy

% Average binary entropy

% Galois Field

% Hamming distance

% Weight

% General commands
           % Cardinality
\newcommand{\abs}[1]{\ensuremath{\left|#1\right|}}              % Absolute value        
   % Norm
\newcommand{\eqdef}{\ensuremath{\triangleq}}                    % Defined by equality
  % Sequence of integers

\renewcommand{\leq}{\leqslant}
\renewcommand{\geq}{\geqslant}

%%%%%% Matrix operations
  % Trace
  % Rank
                              % Det
  % Kernel

%% Legacy stuff
% closure
% convex hull

% Probability law

% Random variable mutual information

% plimsup

% \newcommand{\plimsup}[2][n\rightarrow\infty]{\underset{#1}{\overline{\textnormal{p-lim}}\;}{#2}}
% pliminf

% \newcommand{\pliminf}[2][n\rightarrow\infty]{\underset{#1}{\underline{\textnormal{p-lim}}\;}{#2}}
% plim (convergence in probability)

% Strong typical set

% Strong typical set with non empty sequences conditionally typical ones

% Mutual information
% ex: \info{X}{\rvY|\rvZ} for I(X;\rvY|\rvZ)

% KL divergence
% ex: \KLdiv{X}{\rvY|\rvZ} for I(X;\rvY|\rvZ)

% Entropy
% ex: \entropy{X|\rvY} for H(X|\rvY)

% Differential entropy
% ex: \diffentropy{X|\rvY} for h(X|\rvY)

% Expectaction
% ex: \expect{X^{2}} or \expect[X]{X^{2}} or 

% Discrete probability
% ex: \dprob{X=3} \dprob[X\rvY]{X=3,\rvY=2}

% Vector
% ex: \vct{x} or \vct{X}

% Vector expanded
% ex: \vctlist{x}{1}{n}

% Set
% ex: \setlist{1}{}

% Weakly typical set

% Comments

%\newcommand{\note}[2]{\fbox{\begin{minipage}{0.9\linewidth}{\noindent\textbf{\PencilRightDown~#1}~{#2}}\end{minipage}}\newline\hspace{25pt}}

%%%%%% Sets, ensembles, etc.

% citations needed

%% Definitions of random variable fonts.
\DeclareMathAlphabet{\eurm}{U}{eur}{m}{n}
\DeclareMathAlphabet{\mathbsf}{OT1}{cmss}{bx}{n}% bold sans serif
\DeclareMathAlphabet{\mathssf}{OT1}{cmss}{m}{sl}% slanted sans serif
\DeclareMathAlphabet{\mathcsf}{OT1}{cmss}{sbc}{n}% condensed sans serif

%% font change
%\newcommand{\samplevalue}[1]{#1}
%\newcommand{\randomvalue}[1]{\eurm{#1}}

%% case change

% define some useful uppercase Greek letters in regular and bold sf
\DeclareSymbolFont{bsfletters}{OT1}{cmss}{bx}{n}  
\DeclareSymbolFont{ssfletters}{OT1}{cmss}{m}{n}
\DeclareMathSymbol{\bsfGamma}{0}{bsfletters}{'000}
\DeclareMathSymbol{\ssfGamma}{0}{ssfletters}{'000}
\DeclareMathSymbol{\bsfDelta}{0}{bsfletters}{'001}
\DeclareMathSymbol{\ssfDelta}{0}{ssfletters}{'001}
\DeclareMathSymbol{\bsfTheta}{0}{bsfletters}{'002}
\DeclareMathSymbol{\ssfTheta}{0}{ssfletters}{'002}
\DeclareMathSymbol{\bsfLambda}{0}{bsfletters}{'003}
\DeclareMathSymbol{\ssfLambda}{0}{ssfletters}{'003}
\DeclareMathSymbol{\bsfXi}{0}{bsfletters}{'004}
\DeclareMathSymbol{\ssfXi}{0}{ssfletters}{'004}
\DeclareMathSymbol{\bsfPi}{0}{bsfletters}{'005}
\DeclareMathSymbol{\ssfPi}{0}{ssfletters}{'005}
\DeclareMathSymbol{\bsfSigma}{0}{bsfletters}{'006}
\DeclareMathSymbol{\ssfSigma}{0}{ssfletters}{'006}
\DeclareMathSymbol{\bsfUpsilon}{0}{bsfletters}{'007}
\DeclareMathSymbol{\ssfUpsilon}{0}{ssfletters}{'007}
\DeclareMathSymbol{\bsfPhi}{0}{bsfletters}{'010}
\DeclareMathSymbol{\ssfPhi}{0}{ssfletters}{'010}
\DeclareMathSymbol{\bsfPsi}{0}{bsfletters}{'011}
\DeclareMathSymbol{\ssfPsi}{0}{ssfletters}{'011}
\DeclareMathSymbol{\bsfOmega}{0}{bsfletters}{'012}
\DeclareMathSymbol{\ssfOmega}{0}{ssfletters}{'012}

%% Random variable/vector declarations.  Please add in alphabetical
%% order.  First section is for capitals.  Second for lower case.
	% A

	% L
	% M
	% N
	% R
	% R
	% S
	% S
	% U
	% V
	% W
  	% X
	% Y
	% Z

	% a
	% a

	% b
	% b
%\newcommand{\rvbh}{\hat{\rvb}}
	% c
	% c

	% d

	% e

	% f

	% h
	% c

	% i
	% i
	% k
	% k

	% m
	% m

	% n

	% q

	% r

	% s

	% t

	% u

	% v

	% w

	% x

	% y

	% z

% Handle uppercase Greek differently

% Caligraphy

\newcommand{\calM}{{\mathcal{M}}}

\newcommand{\calQ}{{\mathcal{Q}}}

\newcommand{\calX}{{\mathcal{X}}}

\acrodef{AEP}{Asymptotic Equipartition Property}
\acrodef{AoA}{Angle of Arrival}
\acrodef{AWGN}{Additive White Gaussian Noise}
\acrodef{BER}{Bit-Error-Rate}
\acrodef{BEC}{Binary Erasure Channel}
\acrodef{BPSK}{Binary Phase-Shift Keying}
\acrodef{BSC}{Binary Symmetric Channel}
\acrodef{CDF}[CDF]{Cumulative Distribution Function}
\acrodef{CLT}[CLT]{Central Limit Theorem}
\acrodef{CSI}[CSI]{Channel State Information}
\acrodef{DMC}[DMC]{Discrete Memoryless Channel}
\acrodef{DMS}[DMS]{Discrete Memoryless Source}
\acrodef{iid}[i.i.d.]{independent and identically distributed}
\acrodef{LPD}[LPD]{Low Probability of Detection}
\acrodef{LDPC}[LDPC]{Low-Density Parity-Check}
\acrodef{MAC}[MAC]{multiple-access channel}
\acrodef{MIMO}[MIMO]{Multiple-Input Multiple-Output}
\acrodef{MISO}{Multiple-Input Single-Output}
\acrodef{PDF}[PDF]{Probability Distribution Function}
\acrodef{PMF}[PMF]{Probability Mass Function}
\acrodef{PPM}[PPM]{Pulse Position Modulation}
\acrodef{PSD}{Power Spectral Density}
\acrodef{QPSK}{Quadrature Phase-Shift Keying}
\acrodef{SIMO}{Single-Input Multiple-Output}
\acrodef{SNR}{Signal-to-Noise Ratio}
\acrodef{wrt}[w.r.t.]{with respect to}
\acrodef{WSS}{Wide Sense Stationary}

%% Please note that the amsthm package must not be loaded with
%% IEEEtran.cls because IEEEtran provides its own versions of
%% theorems. Also note that IEEEXplore does not accepts submissions
%% with hyperlinks, i.e., hyperref cannot be used.

\interdisplaylinepenalty=2500 % As explained in bare_conf.tex

\graphicspath{{graphics/}}
%\centerfigcaptionstrue

%\theoremstyle{remark}
\newtheorem{theorem}{Theorem}
\newtheorem{remark}{\textbf{Remark}}
\newtheorem{definition}{\textbf{Definition}}
\newtheorem{lemma}{Lemma}
\newtheorem{proposition}{Proposition}

\setlength{\belowdisplayskip}{0.0cm} \setlength{\belowdisplayshortskip}{0.0cm}
\setlength{\abovedisplayskip}{0.0cm} \setlength{\abovedisplayshortskip}{0.0cm}

% %%%%-----MACRO--------
\makeatletter
\if@twocolumn%
\newcommand{\hspaceonetwocol}[2]{\hspace{#2}}
\newcommand{\includeonetwocol}[2]{#2}
\def\twocolbreak{\nonumber\\ & }%
\def\linesplit{\nonumber\\ }%
\else% \@twocolumnfalse
\newcommand{\hspaceonetwocol}[2]{\hspace{#1}}
\newcommand{\includeonetwocol}[2]{#1}
\def\twocolbreak{}%
\def\linesplit{}%
\fi%
\makeatother%
%%%%---- END MACRO -------

\allowdisplaybreaks

\acrodef{KL}[KL]{Kullback-Leibler}

\begin{document}
\sloppy
%\savebox\strutbox{$\vphantom{\dfrac11}$}

\title{Cooperative Resolvability and Secrecy in the Cribbing Multiple-Access Channel}
\author{Noha Helal, {\em Student Member,~IEEE}, Matthieu Bloch, {\em Senior Member,~IEEE},\\ and Aria Nosratinia, {\em Fellow,~IEEE}
\thanks{Noha Helal and Aria Nosratinia are with the department of Electrical Engineering, University of Texas at Dallas, Dallas, TX 75080 USA, (e-mail: noha.helal@utdallas.edu and aria@utdallas.edu). Matthieu Bloch is with the School of Electrical and Computer Engineering, Georgia Institute of Technology, Atlanta, GA 30332 USA, (e-mail: matthieu.bloch@ece.gatech.edu).}
\thanks{This work was presented in part in ISIT 2018.}
}

\maketitle

\begin{abstract}

 We study channel resolvability for the discrete memoryless multiple-access channel with cribbing, i.e., the characterization of the amount of randomness required at the  inputs to approximately produce a chosen i.i.d. output distribution according to \acl{KL} divergence. We analyze resolvability rates when one encoder cribs (i) the input of the other encoder; or the output of the other encoder, (ii)  non-causally, (iii) causally, or (iv) strictly-causally.
For scenarios (i)-(iii), we exactly characterize the channel resolvability region. For (iv), we provide inner and outer bounds for the channel resolvability region; the crux of our achievability result is to  handle the strict causality constraint with a block-Markov coding scheme in which dependencies across blocks are suitably hidden. Finally, we leverage the channel resolvability results to derive achievable secrecy rate regions for each of the cribbing scenarios under strong secrecy constraints.

\end{abstract}

\section{Introduction}
\label{intro}

Producing an approximation of a desired statistic at a channel output via application of minimal randomness at its input has emerged as a useful building block for many problems in information theory, including source coding~\cite{resolvability_coding}, rate distortion theory~\cite{resolvability_distortion}, coordination~\cite{synthesis} and strong secrecy~\cite{Hayashi2006,strong_broadcast_resolvability,strong_interference_resolvability_journal,strong_secrecy_cooperative,e_resolvability}. The origins of this problem can be traced back to Wyner's work~\cite{wyner_common} on the characterization of common randomness among two dependent random variables, in which he used a normalized \ac{KL} divergence as a measure of approximation. The problem was subsequently formalized and generalized for total variation using information-spectrum methods~\cite{resolvability} and the optimal amount of randomness was called {\em channel resolvability.} Subsequent works have simplified proofs, both for total variation~\cite{synthesis} and \ac{KL} divergence~\cite{Hou2013}, and studied multi-user settings, such as \acfp{MAC} with non-cooperative encoders~\cite{resolvability_mac,Frey2017,strongmac,strong_twoway}.

This paper studies the channel resolvability region for discrete memoryless \acfp{MAC} with cooperating encoders; specifically, we focus on a cribbing model~\cite{cribbingmeulen} in which one encoder has access to the input or output of the other encoder subject to various causality constraints. The cribbing model captures the \emph{essence} of cooperation and produces results and insights that are independent of cooperation signaling mechanisms. 
A key contribution of this paper is the characterization of the channel resolvability region for four cribbing situations while approximating the output distribution according to non-normalized \ac{KL} divergence:
(i)~ degraded message sets,
(ii) non-causal cribbing,
(iii) causal cribbing and 
(iv) strictly-causal cribbing.
%\end{inparaenum}

In the case of \emph{non-causal} cribbing, an achievability region is already available from~\cite[Corollary VII.8]{synthesis} for approximation in terms of total variation, in which case our contribution is only to derive an achievability and matching converse for \ac{KL} divergence. In the case of \emph{causal} and \emph{strictly-causal cribbing}, we not only develop converse results but also propose an achievability scheme that handles the causality constraints with a block-Markov coding scheme; in particular, we show that the dependencies created by the block-Markov coding scheme can be \emph{hidden} through appropriate recycling of secret randomness after identifying a wiretap channel embedded in the model.

In the second part of this paper, strong secrecy~\cite{almost_indep,weak2strong} achievable rates are obtained for the \ac{MAC} with cribbing.  We develop a wiretap coding scheme that achieves strong secrecy, fueled by the results in the earlier part of this paper on the resolvability of cribbing \ac{MAC}. The secrecy rates for MAC with degraded message sets as well as non-causal cribbing follow from corresponding resolvability results without complication. However, a novel approach is needed for the causal and strictly causal cribbing, because the cribbing MAC codebook devised by Willems and van der Muelen~\cite{cribbingmeulen} is not directly compatible with the randomness recycling in the resolvability codebooks under causality constraints. Accordingly, a notable contribution of this work is a new superposition coding strategy that uses all components of the cribbing signal for cooperation to achieve efficient decoding at the legitimate receiver,  while at the same time forcing a part of the randomness within the cribbing signal to remain non-cooperative {\em from the viewpoint of the eavesdropper.} This feature is needed for randomness recycling and is crucial to avoid secrecy rate loss under strictly causal and causal cribbing.

Strong secrecy for \ac{MAC} with {\em non-cooperating} encoders has been studied in~\cite{Frey2017,strongmac,strong_twoway}. To the best of our knowledge, strong secrecy in multi-terminal settings under cooperating encoders has not been comprehensively studied. For completeness we highlight examples of the investigation of  {\em weak} secrecy under cooperation: \ac{MAC} with cooperating or partially cooperating encoders~\cite{simeonecognitive,mac_conference,tang2007multiple,Awan:TIFS13}, interference channel with cooperating encoders~\cite{interference_gf,Liang:IT09,Farsani:isit2014}, relay channel with an external eavesdropper~\cite{relaywt,relay_wt} and broadcast channel with cooperating receivers~\cite{coop_relay_bc}. These works follow the classical approach of Wyner~\cite{wyner_wiretap} and Csisz\'ar~\cite{csiszar_wiretap} to develop weak secrecy results. To the best of our understanding there has been limited work on any type of cooperative strong secrecy; notable examples are Goldfeld {\em et al.}~\cite{strong_secrecy_cooperative} on {\em receive side} cooperation in the broadcast channel, Watanabe and Oohama~\cite{Watanabe:IT14} on the cognitive interference channel with confidential messages, and Chou and Yener~\cite{Chou:ISIT16} on Polar coding for the MAC wiretap channel with cooperative jamming.

The remainder of this paper is organized as follows. Section~\ref{sec:notation} clarifies our notation, Section~\ref{sec:def-main} introduces the problem definition and presents the main resolvability results of the paper, Section~\ref{sec:proofs} produces the resolvability proofs, and Section~\ref{sec:strong_sec} presents the strong secrecy results. 

%=====================================================================================================
\begin{figure*}
\begin{minipage}{0.5\textwidth}
  \begin{center}
    \includegraphics[width=0.9\textwidth]{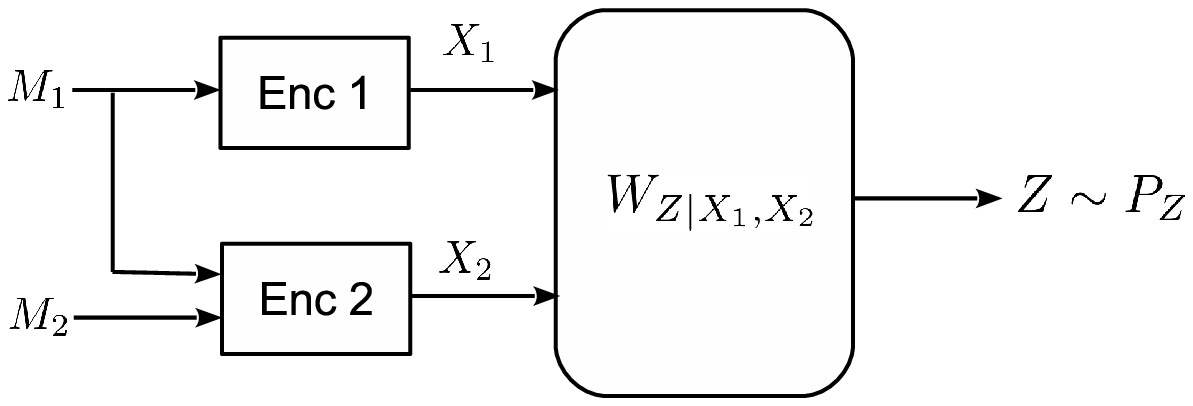} \\[0.1in]
{\footnotesize (a) Degraded message sets \ac{MAC}}
  \end{center}
\end{minipage}
\begin{minipage}{0.5\textwidth}
  \begin{center}
    \includegraphics[width=0.9\textwidth]{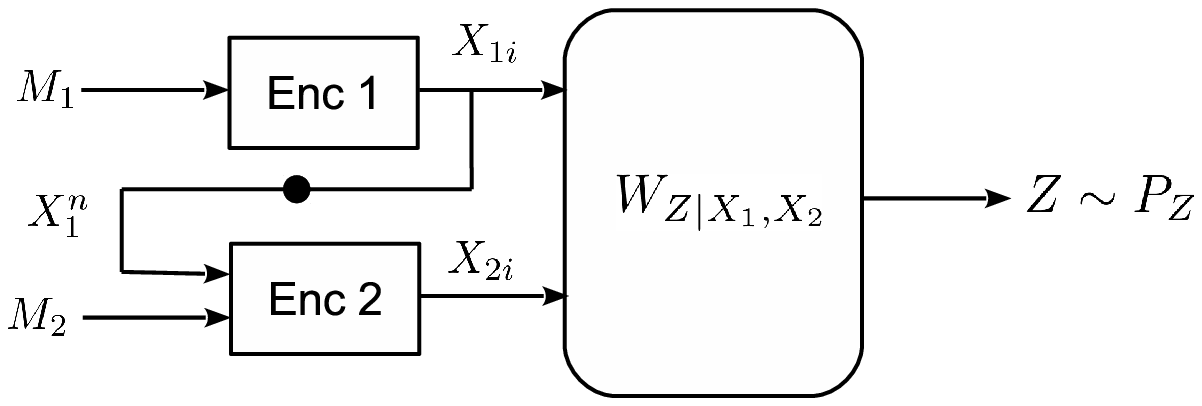} \\[0.1in]
{\footnotesize (b) Non-causal cribbing \ac{MAC} }
  \end{center}
\end{minipage}\\[0.3in]
\begin{minipage}{0.5\textwidth}
  \begin{center}
    \includegraphics[width=0.9\textwidth]{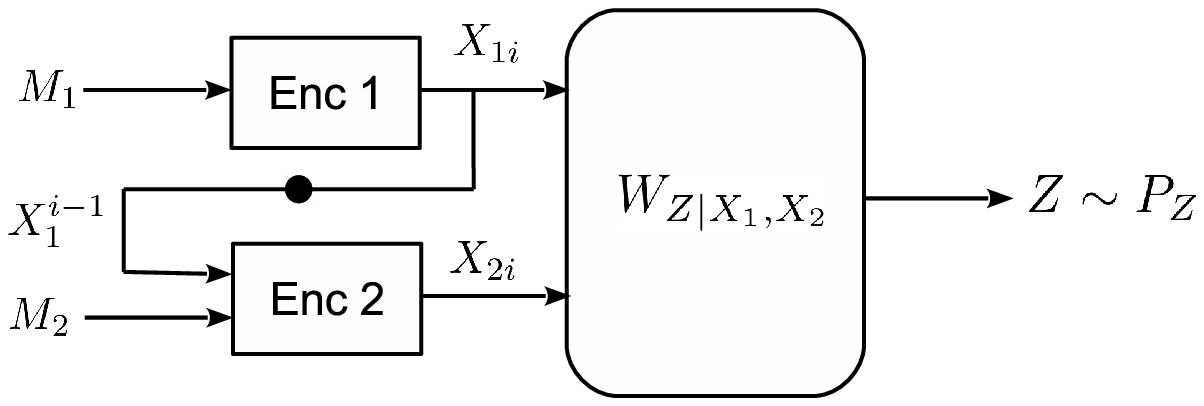} \\[0.1in]
{\footnotesize (c) Strictly-causal cribbing \ac{MAC}}
  \end{center}
\end{minipage}
\begin{minipage}{0.5\textwidth}
  \begin{center}
    \includegraphics[width=0.9\textwidth]{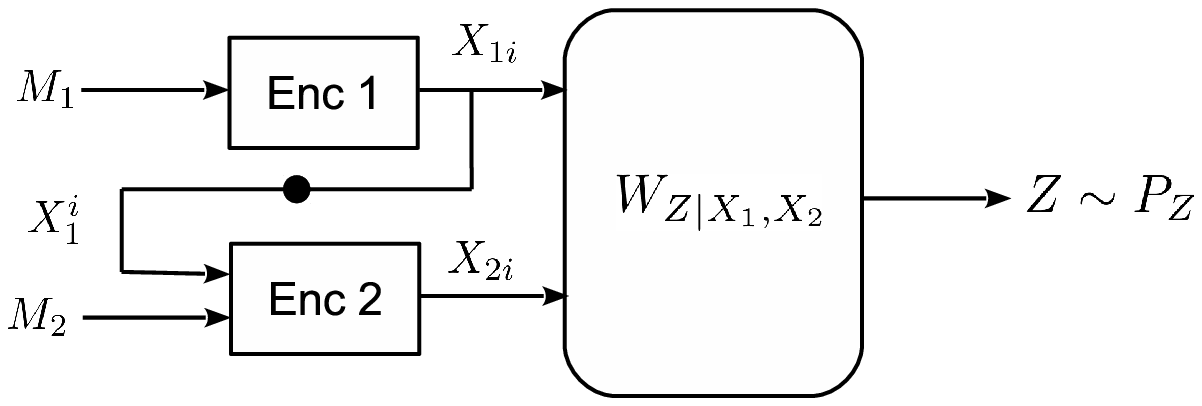} \\[0.1in]
{\footnotesize (d) Causal cribbing \ac{MAC}}
  \end{center}
\end{minipage}\\[0.2in]
\caption{Cribbing \ac{MAC} scenarios}
\label{fig1}
\end{figure*}

\section{Notation}
\label{sec:notation}

Random variables are represented by upper case letters and their realizations by the corresponding lower case letters, e.g., $x$ is a realization of the random variable $X$. Superscripts denote the length of a sequence of symbols and subscripts denote the position of a symbol in a sequence. Calligraphic letters represent sets, the cardinality of which is denoted by $|\cdot|$. For example, $X^n=\{X_1,\dots,X_n\}$ where $X_i$ belongs to the alphabet $\mathcal{X}$ of size $|\mathcal{X}|$.
$P_X$ and $P_{XY}$ denote probability distributions on $\mathcal{X}$ and $\mathcal{X}\times\mathcal{Y}$, respectively. We sometimes omit the subscripts in probability distributions if they are clear from the context, i.e., we write $P(x)$ instead of $P_X(x)$.

For two distributions $P$ and $Q$ on the same alphabet, the \ac{KL} divergence is defined by $\avgD{P}{Q} \triangleq \sum_x P(x) \log \frac{P(x)}{Q(x)}$ and the variational distance is defined by $\mathbb{V}(P,Q)\triangleq\sum_x |P(x)-Q(x)|$. Throughout the paper, $\log$ denotes the base~2 logarithm. For an i.i.d.\ vector whose components are distributed according to $P_X$, the product distribution is denoted by $P_X^{\otimes n}(x^n)\eqdef\prod_{i=1}^n P_X(x_i)$. The set of $\epsilon$-strongly-typical sequences of length $n$ with respect to $P_X$ is defined as:
\[
\mathcal{T}^n_\epsilon(P_X) \triangleq \Big\{x^n: \abs{\frac{N(a|x^n)}{n}-P_X(a)} \leq \epsilon P_X(a), \forall a \in \mathcal{X}  \Big\}.
\]
For a set of random variables $\{X_i\}_{i\in\calM}$ indexed over a countable set $\calM$, $\E[\setminus m]{\cdot}$ denotes the expectation over all random variables with indices in $\mathcal{M}$ except that with index $m$. $\E[M]{\cdot}$ is the expectation w.r.t. the random variable $M$ and  $\mathds{1}_{\{\cdot\}}$ is the indicator function.

\section{Problem Definition and Main Results}
\label{sec:def-main}

\subsection{Problem Definition: Channel Resolvability of \ac{MAC} with Cribbing}
\label{sec:mac-resolv-with}

We consider four scenarios for the two-user discrete memoryless \ac{MAC} with cribbing (see~Fig.~\ref{fig1}). The discrete memoryless \ac{MAC} $(\mathcal{X}_1 \times \mathcal{X}_2, W_{Z|X_1,X_2},\mathcal{Z})$ consists of finite input alphabets~$\mathcal{X}_1$ and $\mathcal{X}_2$, and finite output alphabet~$\mathcal{Z}$, together with a channel transition probability $W_{Z|X_1X_2}$. 
For a joint input distribution $P_{X_1,X_2}$ on $\calX_1\times\calX_2$, the output is distributed according to $Q_Z(z)=\sum_{x_1,x_2} P_{X_1,X_2}(x_1,x_2) W_{Z|X_1,X_2}(z|x_1,x_2)$.
A $(2^{nR_1},2^{nR_2},n)$ channel resolvability code consists of two encoders $f_1$ and $f_2$ with inputs $M_1$ and $M_2$ defined on  $\mathcal{M}_1= \llbracket 1,2^{nR_1} \rrbracket$ and $\mathcal{M}_2=\llbracket 1,2^{nR_2} \rrbracket$. In the four scenarios studied in this paper, the per-symbol encoding functions are defined as follows.

In the \ac{MAC} with degraded message sets (Fig.~\ref{fig1}(a)), Encoder~2 has access to the input of Encoder~1 
\begin{align}
f_{1i}: \mathcal{M}_1 \rightarrow \mathcal{X}_1 \qquad
f_{2i}: \mathcal{M}_1 \times \mathcal{M}_2 \rightarrow \mathcal{X}_2. \label{encfn_degraded}
\end{align}
In the \ac{MAC} with non-causal cribbing  (Fig.~\ref{fig1}(b)), Encoder~2 has non-causal access to the entire current codeword of Encoder~1 
\begin{align}
f_{1i}: \mathcal{M}_1 \rightarrow \mathcal{X}_1    \qquad
f_{2i}: \mathcal{M}_2 \times \mathcal{X}_1^n \rightarrow \mathcal{X}_2. \label{encfn_noncausal}
\end{align}
In the \ac{MAC} with strictly-causal cribbing  (Fig.~\ref{fig1}(c)), Encoder~2 has access to the output of Encoder~1 with a one-symbol delay
\begin{align}
f_{1i}: \mathcal{M}_1 \rightarrow \mathcal{X}_1    \qquad 
f_{2i}: \mathcal{M}_2 \times \mathcal{X}^{i-1} \rightarrow \mathcal{X}_2. \label{encfn_strict}
\end{align}
In the \ac{MAC} with causal cribbing (Fig.~\ref{fig1}(d)), Encoder~2 has access to the output of Encoder~1 with zero delay
\begin{align}
f_{1i}: \mathcal{M}_1 \rightarrow \mathcal{X}_1     \qquad
f_{2i}: \mathcal{M}_2 \times \mathcal{X}_1^{i} \rightarrow \mathcal{X}_2. \label{encfn_causal}
\end{align}
%In each case $i$ is the time index. The causal and strictly causal cribbing encoding functions have a time-varying structure, while the other two are time-invariant. We consider $\mathcal{M}_1=\llbracket 1,2^{nR_1} \rrbracket$ and $\mathcal{M}_2=\llbracket 1,2^{nR_2} \rrbracket$ and correspondingly a $(2^{nR_1},2^{nR_2},n)$ code. The random variables driving the input of the encoders are denoted $M_1$ and $M_2$ which, through the codebooks and the channel, induce an output distribution $P_{Z^n}$. 

\begin{definition}
A rate pair $(R_1,R_2)$ is said to be achievable for the discrete memoryless \ac{MAC} $(\mathcal{X}_1 \times \mathcal{X}_2, W_{Z|X_1X_2},\mathcal{Z})$ if for a given\footnote{Originally in~\cite{resolvability} the resolvability rate was defined as ``the number of random bits required per channel use in order to generate an input that achieves arbitrarily accurate approximation of the output statistics {\em for any given input process}.'' More recent works consider a fixed but arbitrary $Q_Z$, leaving out the implied intersection of rate regions over different $Q_Z$. This is more convenient in several ways, including in the application of resolvability to secrecy problems where only the simulation of a certain $Q_Z$ is of interest.}  $Q_Z$ there exists a sequence of $(2^{nR_1},2^{nR_2},n)$ codes  such that $\lim_{n\to\infty}\avgD{P_{Z^n}}{Q_Z^{\otimes n}}=0$. The \ac{MAC} resolvability region is the closure of the set of achievable rate pairs $(R_1,R_2)$.
\end{definition}

\subsection{Statement of Main Results}
\label{sec:main-results} 
We first recall the known \ac{MAC} resolvability region with non-cooperating encoders, which will serve as a reference to assess the benefits of cooperation.
\begin{theorem}[\cite{Frey2017}] 
\label{th_noncoop}
The resolvability region for the \ac{MAC} with non-cooperating encoders is the set of rate pairs $(R_1,R_2)$ such that 
\begin{align}
R_1 &\geq I(X_1;Z|V),\\
R_2 &\geq I(X_2;Z|V),\\
R_1+R_2 &\geq I(X_1,X_2;Z|V),
\end{align}
for some joint distribution $P_{VX_1X_2Z}\eqdef P_VP_{X_1|V}P_{X_2|V}W_{Z|X_1X_2}$ with marginal $Q_Z$.
\end{theorem}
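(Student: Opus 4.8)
The statement is from~\cite{Frey2017}; I would prove it in two directions.

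\textbf{Achievability.} The plan is a superposition random-coding/soft-covering argument covering the whole pentagon at once. First I would fix a joint law $P_{VX_1X_2Z}=P_VP_{X_1|V}P_{X_2|V}W_{Z|X_1X_2}$ with $Z$-marginal $Q_Z$ and a pair $(R_1,R_2)$ with $R_1>I(X_1;Z|V)$, $R_2>I(X_2;Z|V)$, and $R_1+R_2>I(X_1,X_2;Z|V)$; since the resolvability region is closed it suffices to treat these strict inequalities. Next I would draw a sequence $V^n$ i.i.d.\ $\sim P_V$, known to both encoders, and conditionally on $V^n$ draw codewords $\{x_1^n(m_1)\}\sim\prod_iP_{X_1|V}(\cdot|V_i)$ and $\{x_2^n(m_2)\}\sim\prod_iP_{X_2|V}(\cdot|V_i)$ independently of each other --- Encoder~2 sees neither $M_1$ nor $x_1^n$, which is exactly why only $X_1-V-X_2$ laws arise --- with Encoder~$k$ transmitting $x_k^n(M_k)$; the induced output law is then the double mixture $P_{Z^n}=\E[V^n]{2^{-n(R_1+R_2)}\sum_{m_1,m_2}\prod_iW_{Z|X_1X_2}(\cdot|x_{1i}(m_1),x_{2i}(m_2))}$.

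The core step, and the one I expect to be the main obstacle, is a \ac{KL}-divergence soft-covering estimate for the \ac{MAC}: for $V^n$ of type close to $P_V$, the inner double sum covers $\prod_iQ_{Z|V}(\cdot|V_i)$ in \emph{expected} \ac{KL} divergence under exactly those three rate conditions. I would prove this by writing each summand $W^{\otimes n}(z^n\,|\,x_1^n(m_1),x_2^n(m_2))/\prod_iQ_{Z|V}(z_i|V_i)$, splitting it by the joint typicality of $(V^n,x_1^n(m_1),x_2^n(m_2),z^n)$, showing the fully-typical part concentrates around $1$, and bounding the expected-divergence contribution of the atypical configurations --- the three conditions $R_1>I(X_1;Z|V)$, $R_2>I(X_2;Z|V)$, $R_1+R_2>I(X_1,X_2;Z|V)$ controlling respectively the ``uncovered'' codewords of Encoder~1, those of Encoder~2, and the uncovered codeword pairs. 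The delicate point is that, \ac{KL} divergence not being a metric, one cannot chain single-user covering bounds but must work directly on the divergence, in the spirit of~\cite{Hou2013,synthesis} adapted to two independent codebooks. To finish I would average over $V^n$ and invoke joint convexity of \ac{KL} divergence, $\avgD{P_{Z^n}}{Q_Z^{\otimes n}}=\avgD{\E[V^n]{P_{Z^n|V^n}}}{\E[V^n]{\prod_iQ_{Z|V}(\cdot|V_i)}}\leq\E[V^n]{\avgD{P_{Z^n|V^n}}{\prod_iQ_{Z|V}(\cdot|V_i)}}\to0$ in expectation over the random code, so a good code exists; taking the union of the pentagons over all admissible $P_{VX_1X_2Z}$ gives the achievability.

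\textbf{Converse.} Here I expect only routine single-letterization. Given codes with $\avgD{P_{Z^n}}{Q_Z^{\otimes n}}=\delta_n\to0$, I would start from the identity
\begin{align}
\avgD{P_{Z^n}}{Q_Z^{\otimes n}}=\left(\sum_{i=1}^n H(Z_i)-H(Z^n)\right)+\sum_{i=1}^n\avgD{P_{Z_i}}{Q_Z},\nonumber
\end{align}
whose two nonnegative terms must each be $\leq\delta_n$; in particular the outputs are asymptotically independent and, by convexity, $\avgD{\tfrac1n\sum_iP_{Z_i}}{Q_Z}\to0$. Using $M_1\perp M_2$ and that the encoders are deterministic over a memoryless channel, so $I(M_1;Z^n|X_1^n)=0$,
\begin{align}
nR_1&\geq H(M_1)\geq I(X_1^n;Z^n)=H(Z^n)-H(Z^n|X_1^n)\nonumber\\
&\geq\sum_i H(Z_i)-\delta_n-\sum_i H(Z_i|X_{1i})=\sum_i I(X_{1i};Z_i)-\delta_n,\nonumber
\end{align}
and symmetrically $nR_2\geq\sum_i I(X_{2i};Z_i)-\delta_n$, while $n(R_1+R_2)\geq H(M_1)+H(M_2)\geq I(X_1^n,X_2^n;Z^n)\geq\sum_i I(X_{1i},X_{2i};Z_i)-\delta_n$. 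Introducing a uniform time-sharing index $J$ on $\intseq{1}{n}$ and setting $V\eqdef J$, $X_1\eqdef X_{1J}$, $X_2\eqdef X_{2J}$, $Z\eqdef Z_J$, one gets $X_1-V-X_2$ (from $M_1\perp M_2$), $P_{Z|X_1X_2V}=W_{Z|X_1X_2}$, the bounds $R_1\geq I(X_1;Z|V)-\delta_n/n$, $R_2\geq I(X_2;Z|V)-\delta_n/n$, $R_1+R_2\geq I(X_1,X_2;Z|V)-\delta_n/n$, and $Z$-marginal $\tfrac1n\sum_iP_{Z_i}\to Q_Z$. Finally I would cap $\card{\calV}$ uniformly in $n$ with the support lemma (preserving $P_Z$, the three mutual informations and the factorization) and pass to a convergent subsequence, obtaining one law of the required form with marginal exactly $Q_Z$ for which $(R_1,R_2)$ satisfies the three inequalities.
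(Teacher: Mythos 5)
There is nothing in the paper to compare your proof against: Theorem~\ref{th_noncoop} is quoted from~\cite{Frey2017} (the paper only remarks that the result there was derived for variational distance) and is never proved here. The closest internal analogues are the proofs of Theorems~\ref{th_degraded}--\ref{th_causal}, and your converse does mirror that machinery faithfully: the decomposition of $\avgD{P_{Z^n}}{Q_Z^{\otimes n}}$ into the output-correlation term and $\sum_i\avgD{P_{Z_i}}{Q_Z}$, the bounds $nR_1\geq\sum_iI(X_{1i};Z_i)-\delta_n$, etc., the time-sharing index $V=J$ (correctly chosen so that the conditional independence $X_1-V-X_2$ induced by $M_1\perp M_2$ survives single-letterization), and a support-lemma/limit step playing the role of the continuity argument the paper borrows from~\cite[Sec.~VI.C]{synthesis}. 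That direction is sound.

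The genuine gap is in your achievability, at the treatment of $V^n$. A sequence $V^n$ drawn i.i.d.\ and ``known to both encoders'' is common randomness, which non-cooperating encoders (deterministic functions of independent $M_1,M_2$) do not possess, and your closing step --- joint convexity, $\avgD{\mathbb{E}_{V^n}(P_{Z^n|V^n})}{\mathbb{E}_{V^n}(\prod_iQ_{Z|V}(\cdot|V_i))}\leq\mathbb{E}_{V^n}(\avgD{P_{Z^n|V^n}}{\prod_iQ_{Z|V}(\cdot|V_i)})$, hence ``a good code exists'' --- does not repair this, because the reference measure inside the expectation depends on $V^n$, so no single realization can be extracted. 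Once you fix a realization $v^n$ (as you must to obtain a legitimate code), the codebook-averaged output is exactly $\prod_iQ_{Z|V}(\cdot|v_i)$, and by convexity of the divergence in its first argument the expected divergence to the actual target satisfies $\mathbb{E}_{\mathcal{C}}(\avgD{P_{Z^n}}{Q_Z^{\otimes n}})\geq\sum_i\avgD{Q_{Z|V=v_i}}{Q_Z}\approx nI(V;Z)$ whenever $v^n$ has type close to $P_V$; this is $\Theta(n)$, not vanishing, unless $Q_{Z|V=v}=Q_Z$ for (essentially) all $v$. So as written your argument only establishes the pure time-sharing case $Q_{Z|V=v}\equiv Q_Z$, or else a common-randomness-assisted statement rather than the one quoted. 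This is precisely the delicate point hidden in the cited theorem: read literally (only the $V$-average of the conditional output laws equal to $Q_Z$), the region would even contain $(R_1,R_2)=(0,0)$ for the channel $Z=X_1$ with uniform $Q_Z$ by taking $V=X_1$, which no deterministic code can approach since then $H(Z^n)\leq nR_1$. You therefore need to handle the auxiliary $V$ exactly as the model of~\cite{Frey2017} permits (or restrict it), rather than by the averaging step you propose; the KL soft-covering core of your achievability (the typicality split at rates exceeding the three conditional mutual informations) is otherwise consistent with the $\Psi_1/\Psi_2$-type arguments the paper uses for its own theorems.
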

The resolvability region in~\cite{Frey2017} was derived with respect to variational distance approximation. 

\begin{theorem} 
\label{th_degraded}
The resolvability region for the \ac{MAC}  with degraded message sets is the set of rate pairs $(R_1,R_2)$ such that 
\begin{align}
R_1 &\geq I(X_1;Z),\\
R_1+R_2&\geq I(X_1,X_2;Z),
\end{align}
for some joint distribution $P_{X_1X_2Z}\eqdef P_{X_1X_2}W_{Z|X_1X_2}$ with marginal $Q_Z$.
\end{theorem}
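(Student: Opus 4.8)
\emph{Overall approach.} The region has the structure of a superposition (layered) soft-covering statement: $M_1$ acts as a cloud-center index that $(M_1,M_2)$ then refines, and the absence of a separate constraint $R_2\ge I(X_2;Z|X_1)$ signals that encoder~1's surplus rate must be spent helping encoder~2. For achievability the plan is to fix an input distribution $P_{X_1X_2}=P_{X_1}P_{X_2|X_1}$ whose $Z$-marginal is $Q_Z$, to set $P_{Z|X_1}(z|x_1):=\sum_{x_2}P_{X_2|X_1}(x_2|x_1)W_{Z|X_1X_2}(z|x_1,x_2)$, and to split $R_1=R_1'+R_1''$ with
\[
R_1'\in\big(I(X_1;Z),\ \min\{R_1,\ R_1+R_2-I(X_2;Z|X_1)\}\big),
\]
a nonempty interval precisely because $R_1>I(X_1;Z)$ and $R_1+R_2>I(X_1,X_2;Z)=I(X_1;Z)+I(X_2;Z|X_1)$. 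Writing $M_1=(M_1',M_1'')$ with $|\mathcal{M}_1'|=2^{nR_1'}$, I would use a random code in which $X_1^n$ depends only on $M_1'$, with $x_1^n(m_1')$ drawn i.i.d.\ $\sim P_{X_1}^{\otimes n}$ and $x_2^n(m_1',m_1'',m_2)\sim P_{X_2|X_1}^{\otimes n}(\cdot\,|\,x_1^n(m_1'))$, so that the satellite layer carries effective rate $\tilde R_2:=R_1''+R_2>I(X_2;Z|X_1)$.

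\emph{Achievability, completion.} Let $P_{Z^n}$ be the output induced by the random code and let $\Gamma_{Z^n}$ be the output that would result were encoder~2 to reproduce $\prod_iP_{Z|X_1}(\cdot\,|\,x_{1,i}(M_1'))$ exactly, so that $\Gamma_{Z^n}$ depends on the code only through the cloud-center codebook. For each realization of the codebook I would expand
\[
\avgD{P_{Z^n}}{Q_Z^{\otimes n}}=\avgD{P_{Z^n}}{\Gamma_{Z^n}}+\avgD{\Gamma_{Z^n}}{Q_Z^{\otimes n}}+\sum_{z^n}\big(P_{Z^n}(z^n)-\Gamma_{Z^n}(z^n)\big)\log\frac{\Gamma_{Z^n}(z^n)}{Q_Z^{\otimes n}(z^n)}
\]
and bound the code-averaged value of each term: the first by a conditional soft-covering lemma for the satellite layer (legitimate since $\tilde R_2>I(X_2;Z|X_1)$, after discarding the exponentially rare atypical $x_1^n(m_1')$, for which the divergence is crudely $O(n)$); the second by the ordinary soft-covering lemma for the auxiliary channel $P_{Z|X_1}$ with $2^{nR_1'}$ codewords (legitimate since $R_1'>I(X_1;Z)$); and the last via Pinsker's inequality together with the coarse bound $|\log(\Gamma_{Z^n}/Q_Z^{\otimes n})|=O(n)$ on reachable sequences, so that it vanishes because the soft-covering lemmas deliver exponentially small expected divergence. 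A selection argument then extracts a deterministic code with $\avgD{P_{Z^n}}{Q_Z^{\otimes n}}\to0$.

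\emph{Converse.} With $M_1,M_2$ uniform and independent, $X_1^n=f_1(M_1)$, $X_2^n=f_2(M_1,M_2)$, and $Z^n$ depending on $(M_1,M_2)$ only through $(X_1^n,X_2^n)$, one has $nR_1=H(M_1)\ge I(M_1;Z^n)\ge I(X_1^n;Z^n)$ and $n(R_1+R_2)=H(M_1,M_2)\ge I(M_1,M_2;Z^n)\ge I(X_1^n,X_2^n;Z^n)$, the last inequality in each chain because the conditioning message(s) determine $X_1^n$, resp.\ $(X_1^n,X_2^n)$. Setting $\epsilon_n:=\avgD{P_{Z^n}}{Q_Z^{\otimes n}}$, the identity
\[
\epsilon_n=\Big(\sum_iH(Z_i)-H(Z^n)\Big)+\sum_i\avgD{P_{Z_i}}{Q_Z}
\]
displays $\epsilon_n$ as a sum of two nonnegative quantities, so each is $\le\epsilon_n\to0$. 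Hence $H(Z^n)\ge\sum_iH(Z_i)-\epsilon_n$, and combining with $H(Z^n|X_1^n)\le\sum_iH(Z_i|X_{1,i})$ and the memoryless identity $H(Z^n|X_1^n,X_2^n)=\sum_iH(Z_i|X_{1,i},X_{2,i})$ gives $nR_1\ge\sum_iI(X_{1,i};Z_i)-\epsilon_n$ and $n(R_1+R_2)\ge\sum_iI(X_{1,i},X_{2,i};Z_i)-\epsilon_n$. Introducing a uniform time-sharing index $T$, letting $\bar P_{X_1X_2}:=P_{X_{1,T},X_{2,T}}$, and using $\sum_i\avgD{P_{Z_i}}{Q_Z}\le\epsilon_n$ to bound $I(T;Z_T)$, I get $I(X_1;Z)\le R_1+o(1)$ and $I(X_1,X_2;Z)\le R_1+R_2+o(1)$ under $\bar P_{X_1X_2}$, whose $Z$-marginal tends to $Q_Z$; passing to a limit point of $\bar P_{X_1X_2}$ along a subsequence (finite alphabets, continuity of mutual information) yields a single $P_{X_1X_2}$ with $Z$-marginal exactly $Q_Z$ satisfying both inequalities.

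\emph{Expected main difficulty.} The substantive and least routine step is the achievability: recognizing that the single-user satellite constraint $R_2\ge I(X_2;Z|X_1)$ may be traded against the slack $R_1-I(X_1;Z)$ through the rate split above, and then chaining the two soft-covering estimates under \emph{non-normalized} \ac{KL} divergence rather than total variation --- which is exactly where the cross-term estimate and the exponential decay of the soft-covering lemmas are needed. On the converse side the only delicate point is upgrading the approximate output marginal to an exact one, which is handled by the compactness argument.
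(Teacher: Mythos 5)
Your proposal is correct, but its achievability takes a genuinely different route from the paper's. The paper keeps the natural superposition codebook in which \emph{every} $m_1\in\llbracket 1,2^{nR_1}\rrbracket$ indexes its own cloud center $x_1^n(m_1)$ and $x_2^n(m_1,m_2)$ is drawn from $P_{X_2|X_1}^{\otimes n}$, and then bounds $\mathbb{E}\big(\mathbb{D}(P_{Z^n}\Vert Q_Z^{\otimes n})\big)$ in one shot: Jensen's inequality pulls the expectation over the other codewords inside the logarithm, the sum splits into a typical part $\Psi_1$ and an atypical part $\Psi_2$, and the only surviving cross term (same $m_1$, different $m_2'$) is governed by $2^{-nR_1}$ against $2^{-n I(X_1;Z)}$ — so the absence of an individual $R_2$ constraint falls out of the computation automatically, with no rate splitting. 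You instead make that mechanism explicit: split $R_1=R_1'+R_1''$, let $X_1^n$ depend only on $M_1'$, donate $R_1''$ to the satellite layer, and chain an ordinary soft-covering estimate at rate $R_1'>I(X_1;Z)$ with a conditional one at rate $R_1''+R_2>I(X_2;Z|X_1)$, plus the Pinsker/$O(n)$ cross-term bound in your three-term KL decomposition. This modular argument is sound (your interval for $R_1'$ is nonempty exactly on the interior of the region, and the support/absolute-continuity issues you flag do work out), but it leans on KL soft-covering lemmas with exponential decay that you do not prove — the content of those lemmas is essentially the paper's $\Psi_1/\Psi_2$ computation — so it is arguably less self-contained while being more transparent about why encoder 1's surplus randomness can cover encoder 2. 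Your converse is essentially the paper's: identical first steps $nR_1\ge I(X_1^n;Z^n)$ and $n(R_1+R_2)\ge I(X_1^n,X_2^n;Z^n)$ exploiting the deterministic encoders, the same decomposition of $\mathbb{D}(P_{Z^n}\Vert Q_Z^{\otimes n})$ into an entropy deficit plus per-letter divergences; your single-letterization via a time-sharing variable $T$ with $I(T;Z_T)\le\frac{1}{n}\sum_i\mathbb{D}(P_{Z_i}\Vert Q_Z)$ is equivalent to the paper's Jensen/convexity step on the time-averaged distributions, and your compactness/subsequence argument to force the exact marginal $Q_Z$ replaces the paper's appeal to the continuity argument of \cite[Section VI.C]{synthesis}.
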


\begin{proof}
See Section~\ref{degmsg}.
%Please refer to the extended version~\cite{isit18exendedversion}.
\end{proof}
The absence of the individual rate constraint $R_2$ is a direct consequence of the degraded message model that allows Encoder~2 to benefit from the randomness provided via $R_1$, reducing the {\em required} individual rate constraint for User~2 to $R_2\ge 0$, which is omitted. Another difference between the regions described by Theorem~\ref{th_noncoop} and Theorem~\ref{th_degraded} is that the former includes a convexifying auxiliary random variable that is missing in the latter. The region described by Theorem~\ref{th_degraded} is already convex due to the larger set of available input distributions, as proved in Appendix~\ref{appendix:convexity_degraded}.

\begin{theorem}\label{th_noncausal}
The resolvability region for the \ac{MAC} with non-causal cribbing is the set of rate pairs $(R_1,R_2)$ such that 
\begin{align}
R_1 &\geq I(X_1;Z),\\
R_2&\geq I(X_1,X_2;Z)-H(X_1),\\
R_1+R_2&\geq I(X_1,X_2;Z),
\end{align}
for some joint distribution $P_{X_1X_2Z}\eqdef P_{X_1X_2}W_{Z|X_1X_2}$ with marginal $Q_Z$.
\end{theorem}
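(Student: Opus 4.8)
The plan is to establish the region by a matching achievability and converse, the relevant single‑letter random variables being $(X_1,X_2,Z)$ with joint law $P_{X_1X_2}W_{Z|X_1X_2}$ and $Z$‑marginal $Q_Z$.

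\textbf{Achievability.} I would use a superposition random code whose inner codebook is indexed by the \emph{cribbed sequence}. Draw outer codewords $X_1^n(m_1)$, $m_1\in\intseq{1}{2^{nR_1}}$, i.i.d.\ $\sim P_{X_1}^{\otimes n}$; independently, for every $x_1^n\in\calX_1^n$ and every $m_2\in\intseq{1}{2^{nR_2}}$, draw an inner codeword $X_2^n(x_1^n,m_2)\sim P_{X_2|X_1}^{\otimes n}(\cdot|x_1^n)$. Encoder~1 sends $X_1^n(M_1)$ and Encoder~2, observing $X_1^n(M_1)$ noncausally, sends $X_2^n(X_1^n(M_1),M_2)$, which respects~\eqref{encfn_noncausal}. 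Over the random code the output distribution is a superposition soft‑covering ensemble, and I would show that $\E[\calC]{\avgD{P_{Z^n}}{Q_Z^{\otimes n}}}\to 0$ — hence the existence of a good deterministic code — whenever $R_1\geq I(X_1;Z)$ and $\min\{R_1,H(X_1)\}+R_2\geq I(X_1,X_2;Z)$, the latter being exactly the pair of constraints $R_1+R_2\geq I(X_1,X_2;Z)$ and $R_2\geq I(X_1,X_2;Z)-H(X_1)$. This is the KL‑divergence counterpart of the total‑variation soft‑covering argument behind~\cite[Cor.~VII.8]{synthesis}, adapted to this code structure; the $H(X_1)$ cap appears because $X_1^n(M_1)$ takes at most $\approx 2^{nH(X_1)}$ typical values, so however large $R_1$ is the cribbed observation can inject at most $H(X_1)$ bits of randomness into the inner layer. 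Technically I would bound $\E[\calC]{\avgD{P_{Z^n}}{Q_Z^{\otimes n}}}$, in the spirit of the soft‑covering lemma of~\cite{Hou2013}, by controlling the ``inner'' collision event (two inner codewords attached to the same cribbed sequence both being jointly typical with $z^n$), the ``outer'' collision event, and the collision event at the level of distinct cribbed sequences, \emph{jointly} rather than by a layered argument.

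\textbf{Converse.} Given codes with $D_n\eqdef\avgD{P_{Z^n}}{Q_Z^{\otimes n}}\to 0$, I would first record $\sum_{i=1}^n H(Z_i)-H(Z^n)\leq D_n$, $\sum_{i=1}^n\avgD{P_{Z_i}}{Q_Z}\leq D_n$, and $I(T;Z_T)\leq D_n/n$ for $T$ uniform on $\intseq{1}{n}$ (the last since mutual information is the least average divergence to a fixed reference, here $Q_Z$). The cribbing structure gives the Markov chains $M_1\to X_1^n\to Z^n$ and $(M_1,M_2)\to(X_1^n,X_2^n)\to Z^n$, together with $H(Z^n|M_1)=H(Z^n|X_1^n)$ and $H(Z^n|M_1,M_2)=\sum_i H(Z_i|X_{1i},X_{2i})$ by memorylessness of $W$. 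From $nR_1\geq H(M_1)\geq I(X_1^n;Z^n)$ and $n(R_1+R_2)\geq I(X_1^n,X_2^n;Z^n)$, lower‑bounding $H(Z^n)$ by $\sum_i H(Z_i)-D_n$ and upper‑bounding the remaining conditional entropies letter by letter, I get $R_1\geq\frac1n\sum_i I(X_{1i};Z_i)-\frac{D_n}{n}$ and $R_1+R_2\geq\frac1n\sum_i I(X_{1i},X_{2i};Z_i)-\frac{D_n}{n}$. For the third bound I would use $nR_2\geq H(M_2)=H(M_2|M_1)\geq I(M_2;Z^n|M_1)=H(Z^n|X_1^n)-\sum_i H(Z_i|X_{1i},X_{2i})$ and then $H(Z^n|X_1^n)=H(Z^n)-I(X_1^n;Z^n)\geq\sum_i H(Z_i)-D_n-H(X_1^n)\geq\sum_i H(Z_i)-D_n-\sum_i H(X_{1i})$, giving $R_2\geq\frac1n\sum_i I(X_{1i},X_{2i};Z_i)-\frac1n\sum_i H(X_{1i})-\frac{D_n}{n}$. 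Finally, writing all three with the time‑sharing variable $T$, using $I(\,\cdot\,;Z_T|T)\geq I(\,\cdot\,;Z_T)-I(T;Z_T)$ and $\frac1n\sum_i H(X_{1i})=H(X_{1T}|T)\leq H(X_{1T})$, and passing to a convergent subsequence of the finite‑alphabet single‑letter laws $\frac1n\sum_i P_{X_{1i}X_{2i}Z_i}$ — which factor as $P_{X_1X_2}W_{Z|X_1X_2}$ and whose $Z$‑marginal tends to $Q_Z$ by convexity of the divergence — yields the three claimed inequalities for the limiting distribution, after which closure of the achievable region finishes the proof.

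\textbf{Main obstacle.} The crux is the achievability: a naive ``resolve the inner layer, then the outer layer'' argument only proves the strictly smaller region $R_1\geq I(X_1;Z)$, $R_2\geq I(X_2;Z|X_1)$, so obtaining the stated region requires the joint soft‑covering analysis of the superposition ensemble together with the careful accounting that the cribbed signal carries at most $H(X_1)$ bits; by contrast the converse is comparatively routine once one observes that, under cribbing, $X_1^n$ itself plays the role of the randomness shared between the encoders, which is precisely what produces the $-H(X_1)$ term.
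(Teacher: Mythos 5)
Your proposal is correct and follows essentially the same route as the paper: the identical superposition codebook whose inner layer is indexed by the cribbed sequence $x_1^n$ (so that collisions $X_1^n(m_1')=X_1^n(m_1)$ among distinct outer messages are what produce the $-H(X_1)$ term in the soft-covering analysis), and the same converse bounds $nR_1\geq I(X_1^n;Z^n)$, $nR_2\geq I(M_2;Z^n|X_1^n)$ with $I(X_1^n;Z^n)\leq H(X_1^n)$, and $n(R_1+R_2)\geq I(X_1^n,X_2^n;Z^n)$. The only differences are cosmetic: you single-letterize with a time-sharing variable and a compactness/subsequence argument where the paper uses Jensen-averaged distributions and cites \cite{synthesis} for continuity at $\epsilon\to 0$, and your achievability bound is sketched rather than computed, but the stated collision events and resulting rate conditions match the paper's calculation exactly.
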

\begin{proof}
See Section~\ref{noncausal}.
%Please refer to the extended version~\cite{isit18exendedversion}.
\end{proof}
The achievability result was derived in~\cite[Corollary VII.8]{synthesis} for approximation of the output statistics in terms of variational distance. Our contribution here is to provide achievability and converse proofs for approximation in terms of \ac{KL} divergence. Compared with the \ac{MAC} with degraded message sets, more randomness is required as seen by the presence of an individual rate constraint for User~2. As already discussed in~\cite{synthesis}, this stems from the impossibility for Encoder~2 to extract uniform randomness from the observation of the output of Encoder~1 at a rate exceeding $H(X_1)$. This rate region is convex (see Appendix~\ref{appendix:convexity_causal}).

\begin{theorem}
\label{th_strict}
The resolvability region for the  \ac{MAC} with  strictly-causal cribbing is included in the set of rate pairs $(R_1, R_2)$ such that
\begin{align}
R_1 &\geq I(U,X_1;Z),\\
R_2&\geq I(X_1,X_2;Z)-H(X_1|U),\\
R_1+R_2&\geq I(X_1,X_2;Z),
\end{align}
for some joint distribution $P_{UX_1X_2Z}\eqdef P_{U}P_{X_1|U}P_{X_2|U}W_{Z|X_1X_2}$ with marginal $Q_Z$. An achievable
region is characterized by the same rate constraints and distribution, but subject to the additional
constraint $H(X_1|U) > I(U,X_1;Z)$.
\end{theorem}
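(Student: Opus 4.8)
\emph{Strategy.} The statement has two halves: an outer bound on the resolvability region, and an inner bound valid under the extra constraint. I would prove the outer bound by a single-letterization in the spirit of the converses for Theorems~\ref{th_noncoop}--\ref{th_noncausal}, and the inner bound by a block-Markov superposition soft-covering scheme in which the block-to-block coupling is masked by a wiretap code.

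\emph{Converse.} Fix a sequence of $(2^{nR_1},2^{nR_2},n)$ codes with $\avgD{P_{Z^n}}{Q_Z^{\otimes n}}\to0$. Introduce $U_i\eqdef X_1^{i-1}$, a uniform time index $T\in\intseq{1}{n}$ independent of everything, and set $U\eqdef(T,X_1^{T-1})$, $X_1\eqdef X_{1T}$, $X_2\eqdef X_{2T}$, $Z\eqdef Z_T$. The structural point that makes the required factorization $P_UP_{X_1|U}P_{X_2|U}W_{Z|X_1X_2}$ hold is that $X_1^{i-1}$ is a deterministic function of $M_1$ only: conditioning on it leaves $M_1$ and $M_2$ independent, $X_{1i}$ is a function of $M_1$, and $X_{2i}=f_{2i}(M_2,X_1^{i-1})$ is, given $X_1^{i-1}$, a function of $M_2$; hence $X_1-U-X_2$. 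The three rate inequalities then come from $nR_1\ge H(M_1)\ge H(X_1^n)=nH(X_1|U)$ and $nR_1\ge H(M_1)\ge I(X_1^n;Z^n)$; from $nR_2\ge H(M_2)\ge H(X_2^n|X_1^{n-1})\ge H(X_2^n|X_1^n)=H(X_1^n,X_2^n)-H(X_1^n)\ge I(X_1^n,X_2^n;Z^n)-nH(X_1|U)$; and from $n(R_1+R_2)\ge H(M_1,M_2)\ge I(X_1^n,X_2^n;Z^n)$. To single-letterize I would use the identity $\avgD{P_{Z^n}}{Q_Z^{\otimes n}}=\sum_iI(Z_i;Z^{i-1})+\sum_i\avgD{P_{Z_i}}{Q_Z}$, whose two nonnegative sums both vanish: this kills the ``memory'' terms $\sum_iI(Z_i;Z^{i-1})$ appearing when one expands $I(X_1^n;Z^n)$ and $I(X_1^n,X_2^n;Z^n)$, and it also gives $I(T;Z_T)=\frac1n\sum_i\avgD{P_{Z_i}}{\bar P_Z}\le\frac1n\sum_i\avgD{P_{Z_i}}{Q_Z}\to0$ (with $\bar P_Z\eqdef\frac1n\sum_iP_{Z_i}$ the $Z$-marginal of the single-letter law), which is what absorbs the time-sharing variable so that, e.g., $\sum_iI(X_1^i;Z_i)=nI(U,X_1;Z)-o(n)$ and $I(X_1^n,X_2^n;Z^n)\ge nI(X_1,X_2;Z)-o(n)$. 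A support-lemma bound on $\card{\calU}$ and passage to a convergent subsequence (along which $\bar P_Z\to Q_Z$, so the limiting distribution has exact marginal $Q_Z$) complete the outer bound.

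\emph{Achievability.} For the inner bound I would run a block-Markov scheme over $B$ blocks of length $n$, each carrying a fresh instance of a three-layer superposition codebook $U^n\to(X_1^n,X_2^n)$ generated i.i.d.\ from $P_UP_{X_1|U}P_{X_2|U}$. In block~$b$ both encoders already agree on the auxiliary codeword $U^n(b)$; Encoder~1 picks its $X_1^n$-cloud codeword and Encoder~2 its $X_2^n$-cloud codeword, and a per-block superposition soft-covering argument, together with Encoder~2 reusing the previously cribbed $X_1$-signal as common randomness, yields the three stated inequalities (the reuse is what removes $H(X_1|U)$ from the $R_2$ constraint, and the fact that Encoder~1 alone must cover both the $U$-layer and its own cloud is what produces $R_1\ge I(U,X_1;Z)$). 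The new ingredient is how $U^n(b)$ comes to be shared: Encoder~1 piggybacks the \emph{index} of $U^n(b+1)$ onto block~$b$'s transmission, which Encoder~2 reads off by cribbing $X_1^{i-1}(b)$.

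\emph{The extra constraint and the main obstacle.} The difficulty is that this piggybacked index makes the output of block~$b$ statistically dependent on that of block~$b+1$, which is incompatible with approximating $Q_Z^{\otimes nB}$. I would neutralize it by carrying the index of $U^n(b+1)$ as the \emph{confidential message} of a wiretap code embedded in Encoder~1's $X_1^n$-cloud, whose ``eavesdropper'' is the block-$b$ output $Z^n(b)$; the recycled index is then, in KL divergence, asymptotically uniform and independent of $Z^n(b)$, so an induction/telescoping over the $B$ blocks gives $\avgD{P_{Z^{nB}}}{Q_Z^{\otimes nB}}\to0$ without the per-block error terms accumulating. Such a wiretap code exists iff each $U^n$-cloud has enough codewords to confuse $Z^n(b)$ about the index --- cloud rate above $I(U,X_1;Z)$ --- \emph{and} few enough codewords that Encoder~2 can identify the cribbed one unambiguously --- cloud rate below $H(X_1|U)$ --- and these are compatible exactly when $H(X_1|U)>I(U,X_1;Z)$, which is the additional constraint. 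I expect the genuinely hard part to be this last step: a uniform-in-$b$ control of the residual cross-block dependence in \emph{KL divergence} rather than variational distance, together with the bookkeeping that keeps the cloud rate inside the narrow admissible window while still meeting the superposition soft-covering requirements; the single-block initialization costs $O(n)$ and is amortized as $B\to\infty$.
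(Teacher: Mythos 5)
Your proposal follows essentially the same route as the paper: the converse uses the identification $U_i=X_1^{i-1}$ together with a uniform time index and the vanishing-divergence single-letterization, and the achievability is the same block-Markov scheme in which part of Encoder~1's per-block randomness is recycled as the next block's cloud index and hidden from $Z$ via the wiretap channel embedded between Encoder~1 and the cribbing Encoder~2, which is exactly where the extra constraint $H(X_1|U)>I(U,X_1;Z)$ originates. The differences are only bookkeeping (you dispose of the memory terms with the chain-rule identity for the KL divergence, whereas the paper substitutes $Q_Z^{\otimes n}$ at cost $\epsilon$ and cites a lemma from the channel-synthesis literature, and it additionally controls the mismatch caused by Encoder~2's decoding errors through a total-variation-to-KL bound), so the proposal is sound.
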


\begin{proof}
See Section~\ref{strictlycausal}.
\end{proof}
The strict causality constraint leads to the appearance of a new random variable $U$ in several terms in the right hand side of the individual rate constraints, thus the individual rate constraints can be larger than Theorem~\ref{th_noncausal}, i.e., more needed randomness. The achievable region provided by Theorem~\ref{th_strict} does not provably match the outer bound because the set of probability distributions available in the achievable region is smaller than in the converse, due to the additional constraint $H(X_1|U) > I(U,X_1;Z)$. The rate regions defined by the inner and outer bounds are convex (see Appendix~\ref{appendix:convexity_strict}).

\begin{theorem}\label{th_causal}
The resolvability region for the  \ac{MAC} with  causal cribbing is the set of rate pairs $(R_1,R_2)$ such that
\begin{align}
R_1 &\geq I(X_1;Z)\\
R_2&\geq I(X_1,X_2;Z)-H(X_1)\\
R_1+R_2&\geq I(X_1,X_2;Z)
\end{align}
for some joint distribution $P_{X_1X_2Z}\eqdef P_{X_1X_2}W_{Z|X_1X_2}$ with marginal $Q_Z$.
\end{theorem}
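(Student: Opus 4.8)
The plan is to prove the two inclusions of the claimed region equality separately: the outer bound is essentially immediate, and all the work is in the achievability. For the outer bound, note that causal cribbing is a restriction of non-causal cribbing: given any $(2^{nR_1},2^{nR_2},n)$ causal code with Encoder~2 maps $f_{2i}: \mathcal{M}_2\times\mathcal{X}_1^i\to\mathcal{X}_2$, the assignment $f_{2i}'(m_2,x_1^n)\eqdef f_{2i}(m_2,x_1^i)$ is a valid non-causal code inducing the identical output distribution $P_{Z^n}$. Hence every rate pair achievable under causal cribbing is achievable under non-causal cribbing, so by Theorem~\ref{th_noncausal} the causal resolvability region is contained in the region stated above.

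For achievability I would show the same region is attainable with causal encoders by causalizing the non-causal achievability scheme. Recall its ingredients: Encoder~1 draws a codebook $\{x_1^n(m_1)\}$ i.i.d.\ $\sim P_{X_1}$, and $R_1\geq I(X_1;Z)$ makes this layer soft-cover $Q_Z^{\otimes n}$ through the marginal channel $W_{Z|X_1}$; Encoder~2 draws a codebook $\{x_2^n(m_2,k)\}$ i.i.d.\ $\sim P_{X_2}$ and, after observing the codeword $x_1^n$, selects a covering index $k$ so that $(x_1^n,x_2^n(m_2,k))$ is jointly typical, which succeeds with high probability. The covering step consumes only the randomness of $x_1^n$ that is ``visible'' at the output, leaving roughly $H(X_1|Z)$ bits of $x_1^n$ to be recycled; this is exactly what relaxes Encoder~2's individual requirement from the naive $I(X_2;Z|X_1)$ down to $I(X_2;Z|X_1)-H(X_1|Z)=I(X_1,X_2;Z)-H(X_1)$, while the sum constraint $R_1+R_2\geq I(X_1,X_2;Z)$ makes the $2^{n(R_1+R_2)}$ induced pairs $(x_1^n,x_2^n)$ soft-cover $Q_Z^{\otimes n}$. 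The KL-divergence soft-covering lemma then yields $\avgD{P_{Z^n}}{Q_Z^{\otimes n}}\to 0$.

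The obstruction is that this covering step inspects all of $x_1^n$ before Encoder~2 emits any symbol. I would remove it using the extra information available under causal (as opposed to strictly-causal) cribbing, namely that $X_{1,i}$ is revealed before $X_{2,i}$ is produced, together with memorylessness of $W_{Z|X_1X_2}$ and of the target: split the block into $B$ sub-blocks, have Encoder~1 generate and transmit them sequentially, and let Encoder~2 run the covering construction on sub-block $b$ only after it has been fully observed, so that the $X_2$-symbols of sub-block $b+1$ depend only on sub-blocks $1,\dots,b$ of $X_1$ plus the causally available current symbol --- a legitimate causal operation. The randomness recycled from each completed sub-block of $X_1$ plays the same role as in the non-causal scheme, so the per-sub-block rate accounting is unchanged up to $o(1)$ terms, and by the chain rule for KL divergence $\avgD{P_{Z^n}}{Q_Z^{\otimes n}}$ is at most the sum of the $B$ per-sub-block divergences; taking $B\to\infty$ slowly sends this to zero.

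The part I expect to require genuine care is proving that the randomness recycled from the already-observed sub-blocks of $X_1$ is fresh enough --- essentially independent of what the output has revealed so far --- to be safely reused as the covering index in the next sub-block, so that the induced joint codeword distribution still resembles $P_{X_1X_2}^{\otimes n}$ across sub-blocks and the soft-covering lemma applies to the whole block. This is the same ``hiding of cross-block dependencies'' that the paper confronts for strictly-causal cribbing in Theorem~\ref{th_strict}, but it is lighter here precisely because $X_{1,i}$ is available directly; that is why no auxiliary variable $U$ and no rate penalty arise, and the causal region coincides with the non-causal one.
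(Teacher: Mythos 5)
Your converse is exactly the paper's: a causal code is a special case of a non-causal one, so the outer bound of Theorem~\ref{th_noncausal} applies verbatim. The achievability, however, has genuine gaps. The paper does not causalize the non-causal scheme directly; it invokes the strictly-causal achievability of Theorem~\ref{th_strict} on the Shannon-strategy \ac{MAC} $(\mathcal{X}_1\times\mathscr{T},W^{+}_{Z|X_1,T},\mathcal{Z})$ with $T$ a map $\mathcal{X}_1\to\mathcal{X}_2$ and $X_{2,i}=T_i(X_{1,i})$, restricted to product distributions $P_UP_{X_1}P_T$, and then uses the Willems--van der Meulen fact that any $P^*_{X_1X_2}$ is induced by some $P_{X_1}P_T$. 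Your block-wise ``causalization'' misses the central difficulty that this construction solves: under causal cribbing, the only way Encoder~2 can make $X_{2,i}$ correlated with the \emph{current} symbol $X_{1,i}$ is through a per-letter strategy chosen before $X_{1,i}$ is seen. Running a covering construction on sub-block $b$ after observing it and emitting the resulting $x_2^r$ in sub-block $b+1$ correlates $X_2$ with the \emph{previous} sub-block of $X_1$, so the per-letter input law in each sub-block is a product $P_{X_1}P_{X_2}$ rather than $P_{X_1X_2}$, and the scheme cannot induce a general target $Q_Z$. Saying ``plus the causally available current symbol'' gestures at the fix but does not supply it; once you make it precise you are forced into exactly the Shannon-strategy codebook of the paper, and the question of which distributions $P_{X_1}P_T$ are admissible (and why the constraint $H(X_1|U)>I(U,X_1;Z)$ of Theorem~\ref{th_strict} is then automatically satisfied) has to be answered.

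Two further steps you treat as technicalities are in fact where the work lies. First, the chain rule does \emph{not} bound $\avgD{P_{Z^n}}{Q_Z^{\otimes n}}$ by the sum of per-sub-block divergences: as in the paper's expansion~\eqref{eq6}, the inter-block mutual information terms $I(Z_b^r;Z_{b+1}^{B,r})$ appear additively, and killing them is precisely the randomness-recycling/wiretap argument; only randomness at rate up to $H(X_1)-I(X_1;Z)$ can be shown nearly independent of $Z^r_b$ and safely reused, and establishing this requires the resolvability analysis of the pair $(Z_b^r,M_1'')$, not just joint typicality of a covering index. Your parallel claim about the non-causal scheme (recycling ``$H(X_1|Z)$ bits of $x_1^n$'') is likewise asserted rather than proved, and it is not how the paper obtains the $R_2\geq I(X_1,X_2;Z)-H(X_1)$ constraint there (that comes from indexing Encoder~2's codewords by the \emph{value} of $x_1^n$ and a collision analysis). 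Second, any recycling-based block-Markov scheme breaks down in the extremal case $H(X_1|Z)=0$, where there is no secret randomness to recycle; the paper must handle this case by a separate direct argument (log-sum inequality reducing to a state-dependent point-to-point channel with $R_1>H(X_1)$, $R_2>I(X_2;Z|X_1)$). Your proposal neither flags nor covers this case, so as written the achievability is incomplete even granting the recycling step.
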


\begin{proof}
See Section~\ref{causal}.
%Please refer to the extended version~\cite{isit18exendedversion}.
\end{proof}
Similar to what was observed for the \ac{MAC} reliability region~\cite{cribbingmeulen}, the \ac{MAC} resolvability region with causal cribbing is \emph{identical} to that obtained for non-causal cribbing. 
\begin{remark}
  Theorem~\ref{th_causal} is established from Theorem~\ref{th_strict} using Shannon strategies as in~\cite{cribbingmeulen}, yet Theorem~\ref{th_causal} has an achievable region that is tight against the outer bound when Theorem~\ref{th_strict} does not. Perhaps surprisingly, this happens because the choice of random variables in the Shannon strategy automatically satisfies the constraint imposed in the achievability of Theorem~\ref{th_strict}.
\end{remark}

%=====================================================================================================

\section{Strong Secrecy from Channel Resolvability}
\label{sec:strong_sec}

In this section we use the resolvability results to study the multiple-access wiretap channel with cribbing. For each of the cribbing models previously discussed, an achievable strong secrecy rate region is presented.
Consider a \ac{MAC} with cribbing $(\mathcal{X}_1 \times \mathcal{X}_2, W_{Z|X_1,X_2},\mathcal{Y},\mathcal{Z})$ where $\mathcal{X}_1$ and $\mathcal{X}_2$ are finite input alphabets, $\mathcal{Y}$ and $\mathcal{Z}$ are the finite output alphabets of the legitimate receiver and the wiretapper, respectively. A $(2^{nR_1},2^{nR_2},n)$ code consists of two encoders $f_1$ and $f_2$ and a decoder $g$. The encoders are defined similar to~\eqref{encfn_degraded}-\eqref{encfn_causal} but the functions $f_{1i}$ and $f_{2i}$ are now stochastic and not deterministic. The decoding function at the legitimate receiver is defined as:
\begin{align}
g:\mathcal{Y}^n  \rightarrow  \mathcal{\hat{M}}_1 \times \mathcal{\hat{M}}_2.
\end{align}

 The probability of error at the legitimate receiver is defined as $P_e^{(n)}=\mathbb{P}\Big( (\hat{M}_1,\hat{M}_2) \neq (M_1,M_2) \Big)$.
The strong secrecy metric adopted in this paper is the total amount of leaked confidential information per codeword, defined as $L^{(n)}=I(M_1,M_2;Z^n)$.

\begin{definition}
A strong secrecy rate pair $(R_1,R_2)$ is said to be achievable for the discrete memoryless \ac{MAC} $(\mathcal{X}_1 \times \mathcal{X}_2, W_{Z|X_1X_2},\mathcal{Y},\mathcal{Z})$ if there exists a sequence of $(2^{nR_1},2^{nR_2},n)$ codes such that $P_e^{(n)}$ and $L^{(n)}$ vanish as $n\to\infty$.
\end{definition}

\begin{proposition} 
\label{degraded_secrecy}
 For the multiple-access wiretap channel with degraded message sets, the following strong-secrecy rate region is achievable: 
\[
(R_1,R_2) = \bigcup_{ P_{X_1X_2} W_{{YZ}|X_1X_2}} \mathcal{R}_{\textnormal{DM}}^{(\textnormal{in})},\\
\]
\begin{equation}
\mathcal{R}_{\textnormal{DM}}^{(\textnormal{in})}=
\left\{ 
\begin{array}{ll}
R_1, R_2 \geq 0\\
R_2 \leq  I(X_2;Y|X_1) \\
R_1+R_2 \leq I(X_1,X_2;Y)-I(X_1,X_2;Z)\\
\end{array} 
\right\}. 
\label{eq:proposition1}
\end{equation}
\end{proposition}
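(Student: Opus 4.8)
The plan is to derive Proposition~\ref{degraded_secrecy} from the resolvability result of Theorem~\ref{th_degraded} using the now-standard ``resolvability-based'' wiretap construction. The high-level idea: in the degraded message-set wiretap MAC, we reserve part of each user's message rate for confidential information and the remaining part for local randomness ("stochastic" part of the encoders). The randomness is chosen so that, conditioned on any confidential-message pair, the induced output distribution at the eavesdropper is approximately $Q_Z^{\otimes n}$ regardless of the messages, which forces $I(M_1,M_2;Z^n)\to 0$; simultaneously, the total rate (confidential plus randomness) must be low enough that the legitimate receiver can still decode everything reliably via a standard MAC-with-degraded-message-sets channel-coding argument.

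Concretely, first I would fix $P_{X_1X_2}$ and the induced $Q_Z$, and split $R_1 = \tilde R_1 + R_1'$ and $R_2 = \tilde R_2 + R_2'$, where $(\tilde R_1,\tilde R_2)$ are the secret rates and $(R_1',R_2')$ are dummy-randomness rates. I would invoke Theorem~\ref{th_degraded}: as long as the \emph{total} pair $(R_1, R_2)=(\tilde R_1+R_1', \tilde R_2+R_2')$ lies in the resolvability region relative to $Q_Z$, i.e. $R_1'+\tilde R_1 \ge I(X_1;Z)$ and $R_1'+\tilde R_1 + R_2'+\tilde R_2 \ge I(X_1,X_2;Z)$, there is a codebook generating $P_{Z^n}$ close to $Q_Z^{\otimes n}$ in KL divergence. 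Applying this conditionally on each secret-message pair $(m_1,m_2)$ (the dummy indices supplying the resolvability randomness), averaging, and using that KL divergence upper-bounds mutual information / total variation in the usual way, gives $I(M_1,M_2;Z^n)\to 0$. To make the dummy rates as small as possible while satisfying the resolvability constraints, I would take $R_1' = I(X_1;Z)$ and $R_1'+R_2' = I(X_1,X_2;Z)$ (so $R_2' = I(X_1,X_2;Z)-I(X_1;Z)=I(X_2;Z|X_1)$). Subtracting these from the reliable-decoding region of the degraded-message-set MAC — namely $R_1 \le I(X_1;Y)$ (with $R_2 \le I(X_2;Y|X_1)$ and $R_1+R_2\le I(X_1,X_2;Y)$, where I would exploit that Encoder~2 knows $M_1$ to decode in the superposition order) — yields the secret-rate bounds $\tilde R_2 \le I(X_2;Y|X_1) - I(X_2;Z|X_1)$ and $\tilde R_1 + \tilde R_2 \le I(X_1,X_2;Y) - I(X_1,X_2;Z)$; the individual constraint $\tilde R_1 \le I(X_1;Y)-I(X_1;Z)$ turns out to be implied (or can be absorbed), leaving the region $\mathcal{R}_{\textnormal{DM}}^{(\textnormal{in})}$ as stated, which has only the two active upper bounds. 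A brief check that $I(X_2;Y|X_1)-I(X_2;Z|X_1)$ can be negative but is dominated by the non-negativity constraint, and that the claimed region coincides with the stated one, completes the reduction.

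The two technical points requiring care: (i) combining the reliability and secrecy codebooks into a single code. One standard route is to first build the resolvability code of Theorem~\ref{th_degraded} conditioned on each $(m_1,m_2)$, show it simultaneously has vanishing error probability at the legitimate receiver (because the total rate is inside the reliable region, a random-coding/typicality decoding argument with the superposition structure works), and then invoke a selection/expurgation argument to fix one deterministic code that is good for both error probability and leakage; the KL-divergence resolvability statement is convenient here since the leakage bound $I(M_1,M_2;Z^n)\le \mathbb{E}[\avgD{P_{Z^n|M_1M_2}}{Q_Z^{\otimes n}}]$ follows directly. (ii) Handling the degraded message structure in the channel-coding part: Encoder~2 has $(M_1,M_2)$ so the codebook is a superposition code $x_1^n(m_1,r_1)$ and $x_2^n(m_1,m_2,r_1,r_2)$, and the receiver does joint typicality decoding; the rate constraints on the "outer" index $(m_1,r_1)$ and "inner" index $(m_2,r_2)$ must be matched against $I(X_1;Y)$ and $I(X_2;Y|X_1)$ respectively.

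I expect the main obstacle to be the joint code construction in (i) — reconciling the averaging over dummy randomness needed for the resolvability/secrecy bound with the simultaneous requirement of a single code that decodes reliably, and verifying that no secrecy rate is lost in the process. This is where one must be careful that the resolvability codebook of Theorem~\ref{th_degraded}, applied conditionally message-by-message, still admits a reliable decoder at the stated total rates; once that is in place, the rate-region arithmetic (subtracting the resolvability rates from the reliability region and simplifying $I(X_1,X_2;Z)-I(X_1;Z)=I(X_2;Z|X_1)$) is routine. I would therefore structure the proof as: (1) code construction with message/randomness split and superposition structure; (2) reliability analysis citing the degraded-message-set MAC coding theorem; (3) secrecy analysis via conditional application of Theorem~\ref{th_degraded} and the KL-bound on leakage; (4) selection of a single good code and the rate-region computation.
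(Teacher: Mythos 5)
Your overall architecture matches the paper's: a superposition codebook $x_1^n(m_1,m_1')$, $x_2^n(m_1,m_1',m_2,m_2')$ with a message/dummy-randomness split, leakage bounded by $I(M_1,M_2;Z^n)\leq \mathbb{E}_{M_1,M_2}\big[\mathbb{D}(P_{Z^n|M_1,M_2}\Vert Q_Z^{\otimes n})\big]$ and Theorem~\ref{th_degraded} applied conditionally on each message pair, then a selection/expurgation step and rate arithmetic. The genuine gap is in the rate arithmetic. By fixing the dummy rates at $R_1'=I(X_1;Z)$ and $R_2'=I(X_2;Z|X_1)$ you commit to one particular allocation of the resolvability randomness, and subtracting it from the reliability constraints yields $R_2\leq I(X_2;Y|X_1)-I(X_2;Z|X_1)$, which is strictly smaller than the stated bound $R_2\leq I(X_2;Y|X_1)$ whenever $I(X_2;Z|X_1)>0$; your assertion that ``the claimed region coincides with the stated one'' is false (e.g., if $I(X_1;Y)\geq I(X_1,X_2;Z)$, the point $(R_1,R_2)=(0,I(X_2;Y|X_1))$ lies in $\mathcal{R}_{\textnormal{DM}}^{(\textnormal{in})}$ but violates your $R_2$ bound). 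Recovering the proposition requires exploiting precisely the feature Theorem~\ref{th_degraded} provides — no individual resolvability constraint on user~2 — together with the fact that the degraded-message-set MAC has no individual reliability constraint on user~1: all dummy randomness can be loaded onto Encoder~1 (take $R_2'=0$ and $R_1'>I(X_1,X_2;Z)$). The paper therefore keeps $(R_1',R_2')$ as free variables subject to $R_2+R_2'<I(X_2;Y|X_1)$, $R_1+R_1'+R_2+R_2'<I(X_1,X_2;Y)$, $R_1'>I(X_1;Z)$, $R_1'+R_2'>I(X_1,X_2;Z)$ and performs Fourier--Motzkin elimination, after which exactly the two constraints of \eqref{eq:proposition1} survive.

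A related flaw: your reliable-decoding region includes the individual constraint $R_1+R_1'\leq I(X_1;Y)$, which is not part of the degraded-message-set MAC region (when $\hat m_1$ is wrong, both codewords are jointly wrong, so that error event is already covered by the sum-rate constraint; the paper's reliability conditions are only \eqref{eq70}--\eqref{eq71}). Keeping it, your claim that the induced constraint $\tilde R_1\leq I(X_1;Y)-I(X_1;Z)$ is ``implied'' by the other two is also untrue in general: with $\tilde R_2=0$ it would require $I(X_2;Y|X_1)\leq I(X_2;Z|X_1)$. Both issues disappear once you drop the superfluous individual constraint and eliminate $(R_1',R_2')$ by FME rather than fixing them at per-user minimal values.
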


\begin{proof}
See Appendix~\ref{appendix:mac_degraded}.
\end{proof}

%=====================================================================================================

\begin{proposition} \label{noncausal_secrecy}
 For the multiple-access wiretap channel with non-causal cribbing, the following strong-secrecy rate region is achievable: 
\[
(R_1,R_2) = \bigcup_{ P_{X_1X_2} W_{{YZ}|X_1X_2}}\mathcal{R}_{\textnormal{NC}}^{(\textnormal{in})},\\
\]
\begin{equation}
\mathcal{R}_{\textnormal{NC}}^{(\textnormal{in})}=
\left\{ 
\begin{array}{ll}
R_1, R_2 \geq 0\\
R_1 \leq H(X_1) -I(X_1;Z) \\
R_2 \leq  I(X_2;Y|X_1) \\
R_1+R_2 \leq I(X_1,X_2;Y)-I(X_1,X_2;Z)\\
\end{array} 
\right\}.
\label{eq:proposition2}
\end{equation}
\end{proposition}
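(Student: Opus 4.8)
The plan is to obtain Proposition~\ref{noncausal_secrecy} from the non-causal resolvability result of Theorem~\ref{th_noncausal} through the standard ``wiretap coding from resolvability'' recipe, the one extra ingredient being a reliability check showing that the resolvability codebook also supports decoding at the legitimate receiver. Fix an input distribution $P_{X_1X_2}$ and let $Q_Z$ be the $Z$-marginal induced through $W_{Z\mid X_1X_2}$. Split the index of Encoder~$i$ into a confidential part of rate $R_i$ and a local-randomness part of rate $R_i'$, so that the whole codebook carries rates $(R_1+R_1',\,R_2+R_2')$; Encoder~$i$ draws its local-randomness index uniformly, which realizes the required stochastic encoder. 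Build the codebook exactly as in the achievability of Theorem~\ref{th_noncausal}: an i.i.d.\ $P_{X_1}^{\otimes n}$ codebook at Encoder~1, and at Encoder~2 the construction that, upon observing $X_1^n$, recycles part of its randomness and combines it with fresh randomness indexed by the $R_2'$ component to produce $X_2^n$.

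For the secrecy claim I would fix a confidential pair $(\tilde m_1,\tilde m_2)$ and observe that the corresponding sub-codebook, obtained by letting the two local-randomness indices vary, is itself a non-causal cribbing resolvability code with rates $(R_1',R_2')$ and target $Q_Z$; hence by Theorem~\ref{th_noncausal}, $\avgD{P_{Z^n\mid \tilde m_1\tilde m_2}}{Q_Z^{\otimes n}}\to 0$ as soon as $(R_1',R_2')$ satisfies $R_1'\ge I(X_1;Z)$, $R_2'\ge I(X_1,X_2;Z)-H(X_1)$ and $R_1'+R_2'\ge I(X_1,X_2;Z)$. I would then bound the leakage using $L^{(n)}=I(\tilde M_1,\tilde M_2;Z^n)=\mathbb{E}\big[\avgD{P_{Z^n\mid \tilde M_1\tilde M_2}}{P_{Z^n}}\big]\le \mathbb{E}\big[\avgD{P_{Z^n\mid \tilde M_1\tilde M_2}}{Q_Z^{\otimes n}}\big]$, the inequality following from the identity $\mathbb{E}\big[\avgD{P_{Z^n\mid \tilde M_1\tilde M_2}}{Q_Z^{\otimes n}}\big]=I(\tilde M_1,\tilde M_2;Z^n)+\avgD{P_{Z^n}}{Q_Z^{\otimes n}}$ and nonnegativity of divergence. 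Taking expectation over the random codebook and invoking the (uniform) convergence of Theorem~\ref{th_noncausal} gives $\mathbb{E}[L^{(n)}]\to 0$.

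For reliability I would argue that the same codebook is reliably decoded at the legitimate receiver. Since Encoder~2 observes $X_1^n$ and superimposes $X_2^n$ on it, the receiver can first pin down $X_1^n$ inside the typical set (of size $2^{nH(X_1)}$) and thereby recover $(\tilde M_1,M_1')$ provided $R_1+R_1'\le H(X_1)$, and then recover $(\tilde M_2,M_2')$ by joint-typicality decoding given $X_1^n$, which succeeds when $R_2+R_2'\le I(X_2;Y\mid X_1)$ and $(R_1+R_1')+(R_2+R_2')\le I(X_1,X_2;Y)$; that is, the total rates must lie in the reliability region of the non-causal cribbing MAC~\cite{cribbingmeulen}. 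A routine ensemble-averaging and selection argument then yields a single sequence of codes with $P_e^{(n)}\to 0$ and $L^{(n)}\to 0$ simultaneously.

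It remains to eliminate the auxiliary rates: intersecting the resolvability constraints on $(R_1',R_2')$ with the reliability constraints on the totals and projecting onto $(R_1,R_2)$ (a short Fourier--Motzkin step) should produce $R_1\le H(X_1)-I(X_1;Z)$, $R_2\le I(X_2;Y\mid X_1)$ and $R_1+R_2\le I(X_1,X_2;Y)-I(X_1,X_2;Z)$, which after the union over $P_{X_1X_2}$ is exactly $\calR_{\textnormal{NC}}^{(\textnormal{in})}$. I expect the delicate point to be precisely this elimination together with the reliability step: one must check that the randomness-recycling construction of Theorem~\ref{th_noncausal} does not interfere with joint-typicality decoding at the legitimate receiver, and --- since the randomness Encoder~2 recycles from $X_1^n$ is correlated with $X_1^n$ --- that the individual resolvability burden $I(X_1,X_2;Z)-H(X_1)$ is borne by the randomness carried in $X_1^n$ rather than charged against the confidential rate $R_2$, so that no spurious $I(X_2;Z\mid X_1)$ penalty appears on $R_2$. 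Because there is no block-Markov structure here (in contrast to the causal and strictly causal cases), once this accounting is set up correctly the argument is routine.
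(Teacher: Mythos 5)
Your proposal is correct and follows essentially the same route as the paper: the same superposition codebook with confidential indices $(M_1,M_2)$ and local-randomness indices $(M_1',M_2')$, secrecy obtained by bounding $I(M_1,M_2;Z^n)$ by the expected conditional divergence $\mathbb{E}\big[\mathbb{D}(P_{Z^n|M_1M_2}\|Q_Z^{\otimes n})\big]$ and invoking the resolvability constraints of Theorem~\ref{th_noncausal} on the sub-codebooks, reliability given by the non-causal cribbing \ac{MAC} region of~\cite{cribbingmeulen}, and a Fourier--Motzkin elimination of $(R_1',R_2')$ yielding \eqref{eq:proposition2}. The only slips are attributional rather than substantive: the constraint $R_1+R_1'\leq H(X_1)$ comes from Encoder~2 having to identify user~1's index pair from the cribbed $X_1^n$ (not from the legitimate receiver ``pinning down'' $X_1^n$), and the non-causal resolvability construction involves no randomness recycling in the block-Markov sense --- Encoder~2 simply superimposes its codewords on the observed $X_1^n$, which is precisely why the burden $I(X_1,X_2;Z)-H(X_1)$ on $R_2'$ is relaxed --- but the resulting constraint set and the elimination are identical to the paper's.
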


\begin{proof}
See Appendix~\ref{appendix:mac_noncausal}.

\end{proof}

%=====================================================================================================

\begin{proposition} 
\label{strictly_secrecy}
 For the multiple-access wiretap channel with strictly-causal cribbing, the following strong-secrecy rate region is achievable: 
\[
(R_1,R_2) = \bigcup_{ P_U P_{X_1|U} P_{X_2|U} W_{{YZ}|X_1X_2}} \mathcal{R}_{\textnormal{SC}}^{(\textnormal{in})},\\
\]
\begin{equation}
\mathcal{R}_{\textnormal{SC}}^{(\textnormal{in})}=
\left\{ 
\begin{array}{ll}
R_1, R_2 \geq 0\\
R_1 \leq  H(X_1|U) - I(U,X_1;Z)  \\
R_2 \leq  I(X_2;Y|X_1,U) \\
R_1+R_2 \leq H(X_1|U)+I(X_2;Y|X_1,U) -I(X_1,X_2;Z)\\
R_1+R_2 \leq I(X_1,X_2;Y)- I(X_1,X_2;Z) 
\end{array} 
\right\}.
\label{eq:proposition3}
\end{equation}
\end{proposition}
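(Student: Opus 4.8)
The plan is to derive Proposition~\ref{strictly_secrecy} by wrapping a wiretap layer around the resolvability code of Theorem~\ref{th_strict}, in the spirit of the ``resolvability-based'' secrecy arguments in which each confidential message indexes a sub-codebook that is itself a resolvability code for the eavesdropper's channel $W_{Z|X_1X_2}$. Fix $P_UP_{X_1|U}P_{X_2|U}$ with $W_{YZ|X_1X_2}$ marginal, and note that $\mathcal{R}_{\textnormal{SC}}^{(\textnormal{in})}$ is nonempty only when $H(X_1|U)\geq I(U,X_1;Z)$, which is exactly the regime in which the achievable resolvability code of Theorem~\ref{th_strict} exists. First I would re-use the block-Markov superposition construction behind that achievability: a cloud codebook for $U^n$, a satellite codebook for $X_1^n$ on top of the cloud, and a codebook for $X_2^n$ on top of the cloud with Encoder~2 additionally harvesting the ``above-cloud'' randomness of the cribbed $X_1^{i-1}$. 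Encoder~1's index is split into a confidential part $M_1$ of rate $R_1$ and a randomization part of rate at least $I(U,X_1;Z)$, with the recycling mechanism forcing the fresh per-block satellite rate to be $H(X_1|U)$; Encoder~2's index is split into a confidential part $M_2$ of rate $R_2$ and randomization of rate at least $I(X_1,X_2;Z)-H(X_1|U)$, the randomization rates being chosen jointly inside the resolvability region of Theorem~\ref{th_strict}.

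For reliability, the legitimate receiver performs Willems--van der Meulen style block-Markov backward decoding, recovering the cloud, $X_1^n$ and $X_2^n$ in each block and hence $(M_1,M_2)$ together with all randomization indices; a routine joint-typicality analysis gives $P_e^{(n)}\to 0$ provided the total rates obey the cribbing-MAC reliability constraints. Substituting the randomization-rate requirements from the construction into those constraints collapses them to exactly the four inequalities defining $\mathcal{R}_{\textnormal{SC}}^{(\textnormal{in})}$: the two sum constraints correspond to the two ways the receiver is charged for the $X_1$-layer --- through its cribbed reuse, giving the $H(X_1|U)+I(X_2;Y|X_1,U)$ bound, or through direct joint decoding, giving the $I(X_1,X_2;Y)$ bound --- each reduced by the eavesdropper's resolvability cost $I(X_1,X_2;Z)$.

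For strong secrecy I would argue that for each fixed $(m_1,m_2)$ the residual code obtained by letting all randomization indices vary is, by the rate choices above, a bona fide resolvability code for the cribbing channel $W_{Z|X_1X_2}$, so Theorem~\ref{th_strict} yields $\avgD{P_{Z^n|M_1=m_1,M_2=m_2}}{Q_Z^{\otimes n}}\to 0$ on average over the random codebook. Plugging this into $I(M_1,M_2;Z^n)\leq\E[M_1M_2]{\avgD{P_{Z^n|M_1M_2}}{Q_Z^{\otimes n}}}$ gives $L^{(n)}\to 0$, and a standard selection-and-expurgation step then extracts a single deterministic code that is simultaneously reliable and strongly secure.

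The main obstacle is reconciling the randomness recycling of the block-Markov resolvability scheme with the secrecy requirement. One must verify that, over the whole $B$-block transmission, $(M_1,M_2)$ stays nearly independent of $Z^n$ even though randomization harvested in one block is reused in the next; this requires careful control of how $B$ vanishing per-block divergences chain together, and, crucially, a superposition design in which \emph{all} of the cribbed signal $X_1$ feeds cooperation at the legitimate decoder while only the cloud $U$ is ``cooperative'' as seen by the eavesdropper, so that the rate-$H(X_1|U)$ above-cloud layer remains confidential and may legitimately serve as recyclable secret randomness. Getting this separation right --- using every component of $X_1$ for decoding yet keeping $H(X_1|U)$ of it hidden from the eavesdropper --- is the new coding ingredient flagged in the introduction, and is where most of the technical work will go.
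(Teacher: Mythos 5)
Your outline coincides with the paper's own route: block-Markov superposition with a cloud $U$, \emph{all} of the cribbed $X_1$ recycled as the next block's cloud, Willems--van der Meulen backward decoding at $Y$ (giving $R_1+\rho_1'+\rho_1''<H(X_1|U)$, $R_2+\rho_2<I(X_2;Y|X_1,U)$, and the joint constraint against $I(X_1,X_2;Y)$), per-block resolvability constraints for the eavesdropper, and a corner-point/elimination step producing the two sum-rate bounds. So the approach is not different from the paper's; the issue is that the one step you specify concretely for secrecy would not work as stated, and the steps you defer are exactly the content of the proposition.

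The gap is this: you propose that, for each fixed $(m_1,m_2)$, the residual code obtained by letting the dithers vary is a resolvability code to which Theorem~\ref{th_strict} applies as a black box. It is not. In the secrecy scheme the cloud of block $b+1$ is indexed by the \emph{entire} satellite triple of block $b$ --- the secret message $M_1^{(b)}$ and the local dither $M_1'^{(b)}$ included --- whereas the Theorem~\ref{th_strict} codebook has a cloud of rate only about $I(U;Z)$ made purely of recycled randomness; this modified codebook is precisely why the paper redoes the per-block analysis with the covering of the cloud charged to the common-randomness component alone, i.e.\ $\rho_1''>I(U;Z)$, $\rho_1'+\rho_1''>I(U,X_1;Z)$, $\rho_1''+\rho_2>I(U,X_2;Z)$, $\rho_1'+\rho_1''+\rho_2>I(X_1,X_2;Z)$. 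Moreover, closeness of $P_{Z^n|m_1,m_2}$ to $Q_Z^{\otimes n}$ is not the per-block statement the chaining needs: to cut dependencies across blocks one must show that $Z_b^r$ is nearly independent of the whole tuple $(M_0^{(b)},M_1^{(b)},M_1'^{(b)},M_1''^{(b)},\hat{M}_0^{(b)},\hat{M}_0'^{(b)},M_2^{(b)})$ --- in particular of the recycled randomness $M_1''^{(b)}$ and of Encoder~2's copy of the local dither --- and then telescope over the $B$ blocks via the functional-dependence-graph Markov chains, finally passing from the idealized distribution $\bar{P}$ (Encoder~2 uses the true cloud indices) to the actual $P$ (Encoder~2 uses cribbing estimates) through a total-variation/coupling step combined with the bound $\mathbb{D}(P\Vert Q)\leq\log(1/\mu)\,\mathbb{V}(P,Q)$. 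You correctly flag the ``use all of $X_1$ for decoding at $Y$ while keeping the rate-$H(X_1|U)$ above-cloud layer hidden from $Z$'' principle and the need to chain $B$ per-block divergences, but the proposal leaves exactly these arguments unproven, and the black-box reduction it offers in their place does not yield the strengthened per-block independence statement on which the chaining rests.
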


\begin{proof}

See Appendix~\ref{appendix:mac_strictlycausal}.

\end{proof}

\begin{remark}
Recall the resolvability achievable rate region under the strictly causal cribbing had a constraint $H(X_1|U)> I(U,X_1;Z)$ on the allowable probability distributions. This constraint is implicit in Proposition~\ref{strictly_secrecy} in the form of the non-negativity constraint on $R_1$.
\end{remark}

%=====================================================================================================

\begin{proposition}  \label{causal_secrecy}
 For the multiple-access wiretap channel with causal cribbing, the following strong-secrecy rate region is achievable: 
\[
(R_1,R_2) = \bigcup_{ P_{X_1X_2} W_{{YZ}|X_1X_2}} \mathcal{R}_{\textnormal{C}}^{(\textnormal{in})},\\
\]
\begin{equation}
\mathcal{R}_{\textnormal{C}}^{(\textnormal{in})}=
\left\{ 
\begin{array}{ll}
R_1, R_2 \geq 0\\
R_1 \leq  H(X_1) - I(X_1;Z)  \\
R_2 \leq  I(X_2;Y|X_1) \\
R_1+R_2 \leq I(X_1,X_2;Y)- I(X_1,X_2;Z) 
\end{array} 
\right\}.
\label{eq:proposition4}
\end{equation}
\end{proposition}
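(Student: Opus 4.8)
The plan is to derive Proposition~\ref{causal_secrecy} from Proposition~\ref{strictly_secrecy} by the same Shannon-strategy device used to obtain Theorem~\ref{th_causal} from Theorem~\ref{th_strict}: I would reduce the causal cribbing wiretap channel to a strictly-causal one over an augmented channel. Introduce the finite alphabet $\mathcal{S}\eqdef\mathcal{X}_2^{\mathcal{X}_1}$ of all maps $s:\mathcal{X}_1\to\mathcal{X}_2$ and the augmented wiretap channel $\tilde{W}_{YZ|X_1S}$ defined by $\tilde{W}(y,z|x_1,s)\eqdef W_{YZ|X_1X_2}(y,z|x_1,s(x_1))$. Given any code for the \emph{strictly-causal} cribbing wiretap channel $(\mathcal{X}_1\times\mathcal{S},\tilde{W}_{YZ|X_1S},\mathcal{Y},\mathcal{Z})$ --- in which, following~\eqref{encfn_strict}, Encoder~2 emits at time $i$ a strategy $S_i$ depending only on $M_2$ and $X_1^{i-1}$ --- one obtains a code for the original \emph{causal} cribbing wiretap channel by setting $X_{2i}\eqdef S_i(X_{1i})$; this is a valid causal encoder in the sense of~\eqref{encfn_causal}, since $X_{2i}$ is then a function of $M_2$ and $X_1^{i}$. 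Because $\tilde{W}^{\otimes n}(y^n,z^n|x_1^n,s^n)=\prod_{i=1}^{n}W_{YZ|X_1X_2}(y_i,z_i|x_{1i},s_i(x_{1i}))$, this substitution leaves the joint law of $(M_1,M_2,X_1^n,Y^n,Z^n)$, and hence $P_e^{(n)}$ and $L^{(n)}=I(M_1,M_2;Z^n)$, unchanged; so every strong-secrecy rate pair achievable on the augmented strictly-causal channel is achievable on the original causal channel.

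I would then apply Proposition~\ref{strictly_secrecy} to $\tilde{W}_{YZ|X_1S}$, with $S$ playing the role of the second encoder's input, the auxiliary variable $U$ taken constant, and $P_S$ chosen so that $(X_1,S(X_1))$ has the target law $P_{X_1X_2}$ while $X_1\perp S$ --- for instance, letting the coordinates $\{S(a)\}_{a\in\mathcal{X}_1}$ be independent with $S(a)\sim P_{X_2|X_1}(\cdot|a)$. With $U$ constant, the Markov structure $P_UP_{X_1|U}P_{S|U}\tilde{W}$ required by Proposition~\ref{strictly_secrecy} reduces to $X_1\perp S$, which holds. Writing $X_2\eqdef S(X_1)$, the definition of $\tilde{W}$ yields $I(S;Y|X_1,X_2)=I(S;Z|X_1,X_2)=0$, whence $I(S;Y|X_1)=I(X_2;Y|X_1)$, $I(X_1,S;Y)=I(X_1,X_2;Y)$ and $I(X_1,S;Z)=I(X_1,X_2;Z)$. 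Substituting these identities into~\eqref{eq:proposition3} leaves the constraints $R_1,R_2\geq 0$, $R_1\leq H(X_1)-I(X_1;Z)$, $R_2\leq I(X_2;Y|X_1)$, $R_1+R_2\leq H(X_1)+I(X_2;Y|X_1)-I(X_1,X_2;Z)$ and $R_1+R_2\leq I(X_1,X_2;Y)-I(X_1,X_2;Z)$.

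Finally I would discard the fourth constraint above as redundant: since $H(X_1)\geq I(X_1;Y)$ we have $H(X_1)+I(X_2;Y|X_1)\geq I(X_1,X_2;Y)$, so it is implied by $R_1+R_2\leq I(X_1,X_2;Y)-I(X_1,X_2;Z)$; and the constraint $H(X_1|U)>I(U,X_1;Z)$ implicit in Proposition~\ref{strictly_secrecy} reduces here to $H(X_1)>I(X_1;Z)$, which is already subsumed by the non-negativity of the $R_1$ bound. What remains is exactly $\mathcal{R}_{\textnormal{C}}^{(\textnormal{in})}$ of~\eqref{eq:proposition4}, and taking the union over $P_{X_1X_2}W_{YZ|X_1X_2}$ finishes the proof. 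The only delicate point I anticipate is the reduction step: verifying that the substitution $X_{2i}=S_i(X_{1i})$ both respects the causal encoding constraint~\eqref{encfn_causal} and leaves the joint distribution --- and therefore the reliability and strong-secrecy quantities --- exactly invariant; everything else consists of routine information-theoretic identities, exactly paralleling the passage from Theorem~\ref{th_strict} to Theorem~\ref{th_causal}.
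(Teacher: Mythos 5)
Your proposal is correct and follows essentially the same route as the paper: apply Proposition~\ref{strictly_secrecy} to the Shannon-strategy augmented channel $W^{+}_{YZ|X_1T}$ with a trivial/independent $U$ and a product law $P_{X_1}P_{T}$ realizing the target $P_{X_1X_2}$ (exactly the construction of~\cite[Eq.~(44)]{cribbingmeulen}), then use $I(T;Y|X_1)=I(X_2;Y|X_1)$, $I(X_1,T;Y)=I(X_1,X_2;Y)$, $I(X_1,T;Z)=I(X_1,X_2;Z)$ to land on $\mathcal{R}_{\textnormal{C}}^{(\textnormal{in})}$. Your explicit verification that substituting $X_{2i}=S_i(X_{1i})$ respects the causal constraint~\eqref{encfn_causal} and preserves the joint law, and your observation that the sum-rate constraint $H(X_1)+I(X_2;Y|X_1)-I(X_1,X_2;Z)$ is redundant, are points the paper leaves implicit, but the argument is the same.
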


\begin{proof}

See Appendix~\ref{appendix:mac_causal}.

\end{proof}

%=====================================================================================================

\section{Proofs of Main Results}
\label{sec:proofs}

\subsection{Theorem~\ref{th_degraded}: \ac{MAC} with Degraded Message Sets}
\label{degmsg}

\subsubsection{Achievability} \hfill\\
Consider a distribution $P(x_1,x_2)=P(x_1)P(x_2|x_1)$ such that $\sum_{x_1,x_2} P(x_1,x_2)W(z|x_1,x_2)=Q_Z(z)$.
\begin{itemize}
\item Independently generate $2^{nR_1}$ codewords $x_1^n$ each with probability $P(x_1^n)=P_{X_1}^{\otimes n}(x_1^n)$. Label them $x_1^n(m_1)$, $m_1 \in \llbracket 1,2^{nR_1}\rrbracket$.
\item For every $m_1$, independently generate $2^{nR_2}$ codewords $x_2^n$ each with probability $P(x_2^n|x_1^n(m_1))=P_{X_2|X_1}^{\otimes n}(x_2^n|x_1^n(m_1))$. Label them $x_{2}^n(m_1,m_2)$,  $m_{2} \in \llbracket 1,2^{nR_2}\rrbracket$.
\end{itemize}

For $m_1\in \llbracket 1,2^{nR_1}\rrbracket$ and $m_{2} \in \llbracket 1,2^{nR_2}\rrbracket$,
let $X^n_1(m_1)$ and $X^n_2(m_1,m_2)$ denote the random variables representing the randomly generated codewords. The average \ac{KL} divergence is:

\begin{align}
&\mathbb{E} \big( \mathbb{D} ( P_{Z^n}||Q_Z^{\otimes n})\big) \nonumber \\
& = \mathbb{E}\left(\sum_{z^n} P_{Z^n}(z^n) \log \frac{P_{Z^n}(z^n)}{Q_Z^{\otimes n}(z^n)}\right)   \nonumber\\
& = \mathbb{E}\bigg(\!\! \sum_{z^n} \frac{1}{2^{n(R_1+R_2)}} \sum_{m_1} \sum_{m_2} W^{\otimes n}(z^n|X^n_1(m_1),X^n_2(m_1,m_2)) \nonumber\\ 
 &\hspaceonetwocol{3in}{0.5in} \log \frac{\sum_{m'_1} \sum_{m'_2} W^{\otimes n}(z^n|X^n_1(m'_1),X^n_2(m'_1,m'_2))}{2^{n(R_1+R_2)} Q_Z^{\otimes n}(z^n)}\bigg)  \nonumber\\
&\stackrel{(a)}{\leq} \frac{1}{2^{n(R_1+R_2)}} \sum_{m_1} \sum_{m_2} \sum_{z^n} \sum_{x_1^n(m_1)} \sum_{x_2^n(m_1,m_2)} \twocolbreak
 \hspaceonetwocol{0.0in}{0.2in} P(x_1^n(m_1),x_2^n(m_1,m_2),z^n) \nonumber\\
 &\hspaceonetwocol{2in}{0.2in}  \log \mathbb{E}_{\setminus (m_1,m_2)} \frac{\sum_{m'_1} \sum_{m'_2} W^{\otimes n}(z^n|X^n_1(m'_1),X^n_2(m'_1,m'_2))}{2^{n(R_1+R_2)} Q_Z^{\otimes n}(z^n)}  \nonumber\\
&= \frac{1}{2^{n(R_1+R_2)}} \sum_{m_1} \sum_{m_2} \sum_{z^n} \sum_{x_1^n(m_1)} \sum_{x_2^n(m_1,m_2)} \twocolbreak
 \hspaceonetwocol{0.0in}{0.2in}  P(x_1^n(m_1),x_2^n(m_1,m_2),z^n)  \nonumber\\
 &\hspaceonetwocol{0.2in}{0.2in}  \log \mathbb{E}_{\setminus (m_1,m_2)} \bigg( \frac{W^{\otimes n}(z^n|x^n_1(m_1),x^n_2(m_1,m_2))}{2^{n(R_1+R_2)} Q_Z^{\otimes n}(z^n)}  \twocolbreak
 \quad +\sum_{m'_2 \neq m_2} \frac{ W^{\otimes n}(z^n|x^n_1(m_1),X^n_2(m_1,m'_2))}{2^{n(R_1+R_2)} Q_Z^{\otimes n}(z^n)} \nonumber\\
&\hspaceonetwocol{3in}{0.2in}  +\sum_{m'_1 \neq m_1}\sum_{m'_2} \frac{ W^{\otimes n}(z^n|X^n_1(m'_1),X^n_2(m'_1,m'_2))}{2^{n(R_1+R_2)} Q_Z^{\otimes n}(z^n)}\bigg) \nonumber\\
&\leq  \frac{1}{2^{n(R_1+R_2)}} \sum_{m_1} \sum_{m_2} \sum_{z^n} \sum_{x_1^n(m_1)} \sum_{x_2^n(m_1,m_2)} \twocolbreak
 \hspaceonetwocol{0.0in}{0.2in} P^{\otimes n}(x_1^n(m_1),x_2^n(m_1,m_2),z^n)       \nonumber\\
&\hspaceonetwocol{0.2in}{0.2in}  \log \bigg( \frac{W^{\otimes n}(z^n|x^n_1(m_1),x^n_2(m_1,m_2))}{2^{n(R_1+R_2)} Q_Z^{\otimes n}(z^n)} \twocolbreak
 \hspaceonetwocol{0.0in}{0.2in} +\sum_{m'_2 \neq m_2} \frac{  P^{\otimes n}(z^n|x^n_1(m_1))}{2^{n(R_1+R_2)} Q_Z^{\otimes n}(z^n)} +1 \bigg) \label{eq:bound_D}
\end{align}
where $(a)$ follows by Jensen's inequality. We finally write the right-hand side of~\eqref{eq:bound_D} as $\Psi_1 + \Psi_2$ with
\begin{align*}
\Psi_1& \triangleq \frac{1}{2^{n(R_1+R_2)}} \sum_{m_1} \sum_{m_2} \twocolbreak
\hspaceonetwocol{0in}{0.2in}  \sum_{(x_1^n,x_2^n,z^n) \in \mathcal{T}_\epsilon^n(P_{X_1,X_2,Z})} P^{\otimes n} (x^n_1(m_1),x^n_2(m_1,m_2),z^n)\nonumber\\
&\hspaceonetwocol{0.1in}{0.2in} \log \bigg( \frac{W^{\otimes n}(z^n|x^n_1(m_1),x^n_2(m_1,m_2))}{2^{n(R_1+R_2)} Q_Z^{\otimes n}(z^n)} \twocolbreak
 \hspaceonetwocol{0in}{0.2in} +\sum_{m'_2 \neq m_2} \frac{  P^{\otimes n}(z^n|x^n_1(m_1))}{2^{n(R_1+R_2)} Q_Z^{\otimes n}(z^n)} +1 \bigg)\nonumber\\
&  \leq \log \Big( \frac{2^{-n(1-\epsilon)H(Z|X_1,X_2)}}{2^{n(R_1+R_2)} 2^{-n(1+\epsilon)H(Z)}}  \twocolbreak
\hspaceonetwocol{0in}{0.2in}  + \frac{2^{nR_2} 2^{-n(1-\epsilon)H(Z|X_1)}}{2^{n(R_1+R_2)} 2^{-n(1+\epsilon)H(Z)}} +1 \Big) \nonumber\\
& \leq \log \Big( 2^{-n(R_1+R_2-I(X_1,X_2;Z)-2\epsilon H(Z))} \twocolbreak
\hspaceonetwocol{0in}{0.2in}  + 2^{-n(R_1-I(X_1;Z)-2\epsilon H(Z))} + 1  \Big) \nonumber
\end{align*}
and 
\begin{align*}
\Psi_2& \triangleq \frac{1}{2^{n(R_1+R_2)}} \sum_{m_1} \sum_{m_2} \twocolbreak
\hspaceonetwocol{0in}{0.2in}  \sum_{(x_1^n,x_2^n,z^n) \notin \mathcal{T}_\epsilon^n(P_{X_1,X_2,Z})} P^{\otimes n} (x^n_1(m_1),x^n_2(m_1,m_2),z^n)\nonumber\\
&\hspaceonetwocol{0.1in}{0.2in}  \log \bigg( \frac{W^{\otimes n}(z^n|x^n_1(m_1),x^n_2(m_1,m_2))}{2^{n(R_1+R_2)} Q_Z^{\otimes n}(z^n)} \twocolbreak
 \hspaceonetwocol{0in}{0.2in}  +\sum_{m'_2 \neq m_2} \frac{   P^{\otimes n}(z^n|x^n_1(m_1))}{2^{n(R_1+R_2)} Q_Z^{\otimes n}(z^n)} +1 \bigg)\nonumber\\
& \leq 2 |\mathcal{X}_1||\mathcal{X}_2||\mathcal{Z}| e^{-n\epsilon^2  \mu_{X_1X_2Z}} n \log (\frac{2}{\mu_Z} +1) \nonumber
\end{align*}
 where
 \begin{align}
\mu_Z &=  \min_{\substack{z \in \mathcal{Z}\\\text{s.t. } Q(z)>0}} Q(z) \nonumber\\
\mu_{X_1 X_2 Z} &= \min_{\substack{(x_1,x_2,z) \in (\mathcal{X}_1,\mathcal{X}_2,\mathcal{Z}) \\ \text{s.t. } Q(x_1,x_2,z)>0 } } Q(x_1,x_2,z) \nonumber
\end{align}
Combining the bounds on $\Psi_1$ and $\Psi_2$ we obtain $\mathbb{E}(\mathbb{D}(P_{Z^n}||Q_Z^{\otimes n})) \rightarrow 0$ exponentially with $n$ if $R_1>I(X_1;Z)+2\epsilon H(Z)$ and $R_1+R_2>I(X_1,X_2;Z)+2\epsilon H(Z)$. This implies, by  Markov's inequality, that $\text{Pr}(\mathbb{D}(P_{Z^n}||Q_Z^{\otimes n})>\eta_n)\xrightarrow[]{n\to \infty} 0$ for a suitable choice of $\eta_n$;  $\eta_n=e^{-n\alpha}$ for $\alpha>0$.

%=====================================================================================================

\subsubsection{Converse} \hfil\\
By assumption, 
\begin{align*}
 \epsilon &\geq \mathbb{D}(P_{Z^n}||Q_Z^{\otimes n}) \\
 &=\sum_{z^n} P(z^n)\log\frac{P(z^n)}{Q_Z^{\otimes n}(z^n)}\\
 &=\sum_{i=1}^n \bigg(\sum_{z_i}P_Z(z_i)\log \frac{1}{Q(z_i)}- H(Z_i|Z^{i-1}) \Big)\\
 & \stackrel{(a)}{\geq} \sum_{i=1}^n \bigg(\sum_{z_i}P(z_i)\log \frac{1}{Q(z_i)}- H(Z_i)\bigg)\\
 &=\sum_{i=1}^n \mathbb{D} (P_{Z_i}||Q_Z)\\
 & \stackrel{(b)}{\geq} n \mathbb{D}(\tilde{P}_Z||Q_Z)
\end{align*}
where (a) follows because conditioning does not increase entropy and (b) follows by Jensen's Jensen's inequality and the convexity of $\mathbb{D}(\cdot||\cdot)$ with $\tilde{P}_Z(z)\triangleq \frac{1}{n}\sum_{i=1}^n P_{Z_i}(z)$. First, note that

\begin{align}
n R_1 &= H(M_1) \label{conv1}\\
&\geq I(M_1;Z^n) \nonumber\\
&\stackrel{(a)}{=}  I(M_1,X_1^n;Z^n) \nonumber\\
&\geq I(X_1^n;Z^n) \nonumber\\
&= I(X_1^n,X_2^n;Z^n) - I(X_2^n;Z^n|X_1^n) \nonumber\\
&\stackrel{(b)}{\geq}  \sum_{x_1^n} \sum_{x_2^n} \sum_{z^n} P(x_1^n,x_2^n,z^n) \log \frac{W^{\otimes n}(z^n|x_1^n,x_2^n)}{P_{Z^n}(z^n)}  \twocolbreak
 \hspaceonetwocol{0.0in}{0.2in}-\sum_i I(X_{2i};Z_i|X_{1i}) \nonumber\\
&= \sum_{x_1^n} \sum_{x_2^n} \sum_{z^n} P(x_1^n,x_2^n,z^n) \log \frac{W^{\otimes n}(z^n|x_1^n,x_2^n)}{Q_Z^{\otimes n}(z^n)}  \twocolbreak
 \hspaceonetwocol{0.0in}{0.2in}-\mathbb{D}(P_{Z^n}||Q_Z^{\otimes n}) - \sum_i I(X_{2i};Z_i|X_{1i}) \nonumber\\
&\geq \sum_{x_1^n} \sum_{x_2^n} \sum_{z^n} P(x_1^n,x_2^n,z^n) \log \frac{W^{\otimes n}(z^n|x_1^n,x_2^n)}{Q_Z^{\otimes n}(z^n)} \twocolbreak
\hspaceonetwocol{0.0in}{0.2in}- \sum_i I(X_{2i};Z_i|X_{1i}) - \epsilon \nonumber\\
&= \sum_i \sum_{x_{1i}} \sum_{x_{2i}} \sum_{z_i} P(x_{1i},x_{2i},z_i)\bigg( \log \frac{W(z_i|x_{1i},x_{2i})}{Q(z_i)} \twocolbreak
 \hspaceonetwocol{0.0in}{0.2in}- \log \frac{W(z_i|x_{1i},x_{2i})}{P(z_i|x_{1i})} \bigg) -\epsilon\nonumber\\
&= \sum_i  \sum_{x_{1i}} \sum_{x_{2i}} \sum_{z_i} P(x_{1i},x_{2i},z_i) \log \frac{P(z_i|x_{1i})}{Q(z_i)} -\epsilon\nonumber\\
&=\sum_i  \sum_{x_{1i}} \sum_{z_i} P(x_{1i},z_i) \log \frac{P(z_i|x_{1i})}{Q(z_i)} -\epsilon \nonumber\\
&=\sum_i \mathbb{D}(P_{X_{1i}Z_i}||P_{X_{1i}} Q_{Z_i}) -\epsilon \nonumber\\
&\stackrel{(c)}{\geq}  n \mathbb{D}\bigg(\frac{\sum_i P_{X_{1i}Z_i} }{n} \bigg|\bigg| \frac{\sum_i P_{X_{1i}}}{n}  Q_{Z} \bigg) -\epsilon \nonumber\\  
&\stackrel{(d)}{=} n \mathbb{D}(\tilde{P}_{X_1,Z}||\tilde{P}_{X_1} Q_Z) -\epsilon \nonumber\\
&= n \sum_{x_1} \sum_{z} \tilde{P}_{X_1,Z}(x_1,z) \log \frac{\tilde{P}_{X_1,Z}(x_1,z)}{\tilde{P}_{X_1}(x_1) Q_{Z}(z)} -\epsilon \nonumber\\
&= n \sum_{x_1} \sum_{z} \tilde{P}_{X_1,Z}(x_1,z) \log \frac{\tilde{P}_{X_1,Z}(x_1,z)}{\tilde{P}_{X_1}(x_1) \tilde{P}_{Z}(z)} \twocolbreak
 \hspaceonetwocol{0.0in}{0.2in}  + n \sum_{x_1} \sum_{z} \tilde{P}_{X_1,Z}(x_1,z) \log \frac{\tilde{P}_{Z}(z)}{Q_{Z}(z)} -\epsilon \nonumber\\
&= n I(\tilde{X}_1;\tilde{Z}) + n \mathbb{D}( \tilde{P}_{Z} || Q_{Z}) -\epsilon \nonumber\\
&\geq n I(\tilde{X}_1; \tilde{Z}) -\epsilon \label{conv2}
\end{align}
where 
\begin{enumerate}[(a)]
\item follows from the definition of the deterministic encoding functions in~(\ref{encfn_degraded});
\item follows because conditioning does not increase entropy and the channel is discrete memoryless, therefore $I(X_2^n;Z^n|X_1^n) =\sum H(Z_i|Z^{i-1},X_1^n)-H(Z_i|Z^{i-1},X_1^n,X_2^n) \leq \sum H(Z_i|X_{1i})-H(Z_i|X_{1i},X_{2i})   \leq \sum_{i=1}^n I(X_{2i};Z_{i}|X_{1i})$;
\item follows by Jensen's inequality and the convexity of $\mathbb{D}(\cdot||\cdot)$;
\item follows by defining $\tilde{P}_{X_1,Z}(x_1,z) \triangleq \frac{1}{n} \sum_i P_{X_{1i},Z_i}(x_1,z)$ and  $\tilde{P}_{X_1}(x_1) \triangleq \frac{1}{n} \sum_i P_{X_{1i}}(x_1)$ where $\tilde{P}_{X_1,X_2}(x_1,x_2) \triangleq \frac{1}{n} \sum_i P_{X_{1i},X_{2i}}(x_1,x_2)$, $\tilde{P}_{X_1,X_2,Z}(x_1,x_2,z)\triangleq \frac{1}{n} \sum_i P_{X_{1i},X_{2i},Z_i}(x_1,x_2,z) = W_{Z|X_1,X_2}(z|x_1,x_2) \tilde{P}_{X_1,X_2}(x_1,x_2)$ and $\tilde{P}_{X_1,Z}(x_1,z)=\sum_{x_2}\tilde{P}_{X_1,X_2,Z}(x_1,x_2,z)$ .
\end{enumerate}

Next, observe that
\begin{align}
n (&R_1+R_2) \nonumber\\ 
&= H(M_1,M_2) \label{conv3}\\
&\geq I(M_1,M_2;Z^n) \nonumber\\
&\geq I(X_1^n,X_2^n;Z^n) +\mathbb{D}(P_{Z^n} ||Q_Z^{\otimes n}) - \epsilon \nonumber\\
&= \sum_{x_1^n} \sum_{x_2^n} \sum_{z^n} P(x_1^n,x_2^n,z^n) \log \frac{P(x_1^n,x_2^n,z^n)}{P(x_1^n,x_2^n)P_{Z^n}(z^n)}\twocolbreak
 + \sum_{z^n} P(z^n) \log \frac{P_{Z^n}(z^n)}{Q_Z^{\otimes n}(z^n)} - \epsilon \nonumber\\
&= \sum_{x_1^n} \sum_{x_2^n} \sum_{z^n} P(x_1^n,x_2^n,z^n) \log \frac{W^{\otimes n}(z^n|x_1^n,x_2^n)}{Q_Z^{\otimes n}(z^n)} - \epsilon \nonumber\\
&= \sum_i \sum_{x_{1i}} \sum_{x_{2i}} \sum_{z_i} P(x_{1i},x_{2i},z_i) \log \frac{W(z_i|x_{1i},x_{2i})}{Q(z_i)} -\epsilon \nonumber\\
&= \sum_i \mathbb{D}(P_{X_{1i},X_{2i},Z_i}||P_{X_{1i},X_{2i}}Q_{Z_i}) -\epsilon \nonumber\\
&\stackrel{(a)}{\geq} n \mathbb{D}\bigg(\frac{\sum_i P_{X_{1i},X_{2i},Z_i} }{n} \bigg|\bigg| \frac{\sum_i P_{X_{1i},X_{2i}}}{n}  Q_{Z} \bigg) -\epsilon \nonumber\\
&\stackrel{(b)}{=} n \mathbb{D}(\tilde{P}_{X_{1},X_{2},Z}||\tilde{P}_{X_{1},X_{2}}  Q_{Z}) -\epsilon \nonumber\\
&= n \mathbb{D}(\tilde{P}_{X_{1},X_{2},Z}||\tilde{P}_{X_{1},X_{2}}  \tilde{P}_{Z}) + n \mathbb{D}(\tilde{P}_{Z}||  Q_{Z})  -\epsilon \nonumber\\
&\geq n I(\tilde{X}_{1},\tilde{X}_{2};\tilde{Z})-\epsilon \label{conv4} 
\end{align}
where
\begin{enumerate}[(a)]
\item follows by Jensen's inequality and the convexity of $\mathbb{D}(\cdot||\cdot)$;
\item follows by defining $\tilde{P}_{X_1,X_2,Z}(x_1,x_2,z)\triangleq \frac{1}{n} \sum_i P_{X_{1i},X_{2i},Z_i}(x_1,x_2,z)$ and $\tilde{P}_{X_1,X_2}(x_1,x_2)\triangleq \frac{1}{n} \sum_i P_{X_{1i},X_{2i}}(x_1,x_2)$ with $\tilde{P}_{X_1,X_2,Z}(x_1,x_2,z) = W_{Z|X_1,X_2}(z|x_1,x_2) \tilde{P}_{X_1,X_2}(x_1,x_2,z)$.
\end{enumerate}

The final step of this converse proof, and other converse proofs in this paper, is to show the continuity of the resolvability region at $\epsilon \to 0$. For a proof of this statement, we refer the reader to \cite[Section VI.C]{synthesis}, which can be extended to a \ac{MAC}.  

%=====================================================================================================

\subsection{Theorem~\ref{th_noncausal}: \ac{MAC} with Non-Causal Cribbing} \label{noncausal}

\subsubsection{Achievability} \hfill\\
Consider a distribution  $P(x_1,x_2)=P(x_1)P(x_2|x_1)$ such that $\sum_{x_1,x_2} P(x_1,x_2)W(z|x_1,x_2)=Q_Z(z)$. 
\begin{itemize}
\item Independently generate $2^{nR_1}$ codewords $x_1^n$ each with probability $P(x_1^n)=P_{X_1}^{\otimes n}(x_1^n)$. Label them $x_1^n(m_1)$, $m_1 \in \llbracket 1,2^{nR_1}\rrbracket$.
\item For every $x_1^n(m_1)$, independently generate $2^{nR_2}$ codewords $x_2^n$ each with probability $P(x_2^n|x_1^n(m_1))=P_{X_2|X_1}^{\otimes n}(x_2^n|x_1^n(m_1))$. Label them $x_{2}^n(x_1^n(m_1),m_2)$,  $m_{2} \in \llbracket 1,2^{nR_2}\rrbracket$.
\end{itemize}

Unlike the case of degraded message sets, here the codewords $x_2^n$ are a function of $x_1^n(m_1)$ instead of $m_1$. For $m_1 \in \llbracket 1,2^{nR_1}\rrbracket$ and $m_{2} \in \llbracket 1,2^{nR_2}\rrbracket$, let $X^n_1(m_1)$ and $X^n_2(X_1^n(m_1),m_2)$ denote  random variables representing the randomly generated codewords. The average \ac{KL} divergence is:
\begin{align}
&\mathbb{E}\big(\mathbb{D}(P_{Z^n}||Q_Z^{\otimes n})\big) \nonumber\\
&= \mathbb{E}\big(\sum_{z^n} P_{Z^n}(z^n) \log \frac{P_{Z^n}(z^n)}{Q_Z^{\otimes n}(z^n)}\big) \nonumber\\
&= \mathbb{E}\bigg(\sum_{z^n} \frac{\sum_{m_1,m_2} W^{\otimes n}(z^n|X^n_1(m_1),X^n_2(X^n_1(m_1),m_2))}{2^{n(R_1+R_2)}} \nonumber\\
 &\hspaceonetwocol{2.5in}{0.2in}  \log \frac{\sum_{m'_1,m'_2} W^{\otimes n}(z^n|X^n_1(m'_1),X^n_2(X^n_1(m'_1),m'_2))}{2^{n(R_1+R_2)} Q_Z^{\otimes n}(z^n)}\bigg) \nonumber\\
&\stackrel{(a)}{\leq} \frac{1}{2^{n(R_1+R_2)}} \sum_{m_1,m_2} \sum_{z^n} \sum_{x_1^n(m_1)} \sum_{x_2^n(x_1^n(m_1),m_2)} \twocolbreak
\hspaceonetwocol{0in}{0.2in} P(x_1^n(m_1),x_2^n(x_1^n(m_1),m_2),z^n)  \nonumber\\
&\hspaceonetwocol{2.5in}{0.2in}  \log \mathbb{E}_{\setminus (m_1,m_2)} \frac{\sum\limits_{m'_1,m'_2} \!\! W^{\otimes n}(z^n|X^n_1(m'_1),X^n_2(X^n_1(m'_1),m'_2))}{2^{n(R_1+R_2)} Q_Z^{\otimes n}(z^n) }\nonumber\\
&= \frac{1}{2^{n(R_1+R_2)}} \sum_{m_1,m_2} \sum_{z^n} \sum_{x_1^n(m_1)} \sum_{x_2^n(x_1^n(m_1),m_2)} \twocolbreak
\hspaceonetwocol{0in}{0.2in}  P(x_1^n(m_1),x_2^n(x_1^n(m_1),m_2),z^n)  \nonumber\\
 &\hspaceonetwocol{0.2in}{0.2in}   \log \mathbb{E}_{\setminus(m_1,m_2)} \Bigg( \frac{W^{\otimes n}(z^n|x^n_1(m_1),x^n_2(x^n_1(m_1),m_2))}{2^{n(R_1+R_2)} Q_Z^{\otimes n}(z^n)} \nonumber\\
 &\hspaceonetwocol{2.2in}{0.2in}   \!\!\! + \!\!\! \!\! \sum_{m_1' \neq m_1} \!\!\!\!\! \Big[ \mathds{1}_{\{ x_1^n(m_1')=x_1^n(m_1) \}} \frac{W^{\otimes n}(z^n|X^n_1(m_1'),x^n_2(X^n_1(m_1'),m_2))}{2^{n(R_1+R_2)} Q_Z^{\otimes n}(z^n)}\nonumber\\
&\hspaceonetwocol{2.2in}{0.2in}   +  \mathds{1}_{\{ x_1^n(m_1') \neq x_1^n(m_1)\} } \frac{W^{\otimes n}(z^n|X^n_1(m_1'),x^n_2(X^n_1(m_1'),m_2))}{2^{n(R_1+R_2)} Q_Z^{\otimes n}(z^n)} \Big] \nonumber\\
&\hspaceonetwocol{2.2in}{0.2in}  + \sum_{m_2' \neq m_2} \frac{W^{\otimes n}(z^n|x^n_1(m_1),X^n_2(x^n_1(m_1),m_2'))}{2^{n(R_1+R_2)} Q_Z^{\otimes n}(z^n)}\nonumber\\
 &\hspaceonetwocol{2.2in}{0.2in} + \sum_{\substack{m_2' \neq m_2 \\ m_1' \neq m_1 }} \frac{W^{\otimes n}(z^n|X^n_1(m_1'),X^n_2(X^n_1(m_1'),m_2'))}{2^{n(R_1+R_2)} Q_Z^{\otimes n}(z^n)} \Bigg)  \nonumber\\
&\leq \frac{1}{2^{n(R_1+R_2)}} \sum_{m_1,m_2} \sum_{z^n} \sum_{x_1^n(m_1)} \sum_{x_2^n(x_1^n(m_1),m_2)} \twocolbreak
\hspaceonetwocol{0in}{0.2in}  P(x_1^n(m_1),x_2^n(x_1^n(m_1),m_2),z^n)  \nonumber\\
 &\hspaceonetwocol{0.2in}{0.2in}   \log  \Bigg( \frac{W^{\otimes n}(z^n|x^n_1(m_1),x^n_2(x^n_1(m_1),m_2))}{2^{n(R_1+R_2)} Q_Z^{\otimes n}(z^n)} \nonumber\\
&\hspaceonetwocol{1.5in}{0.2in}  + \mathbb{E}_{\setminus(m_1,m_2)} 
\Big( \sum_{m_1' \neq m_1}  \twocolbreak
 \hspaceonetwocol{0in}{0.2in} \Big[ \mathds{1}_{\{ x_1^n(m_1')=x_1^n(m_1)\} } \frac{W^{\otimes n}(z^n|X^n_1(m_1'),X^n_2(X^n_1(m_1'),m_2))}{2^{n(R_1+R_2)} Q_Z^{\otimes n}(z^n)}\nonumber\\
& \hspaceonetwocol{1.5in}{0.2in}   +  \mathds{1}_{\{ x_1^n(m_1') \neq x_1^n(m_1)\} } \frac{W^{\otimes n}(z^n|X^n_1(m_1'),X^n_2(X^n_1(m_1'),m_2))}{2^{n(R_1+R_2)} Q_Z^{\otimes n}(z^n)} \Big] \Big) \nonumber\\
& \hspaceonetwocol{1.5in}{0.2in}   + \sum_{m_2' \neq m_2} \frac{ P^{\otimes n}(z^n|x^n_1(m_1))}{2^{n(R_1+R_2)} Q_Z^{\otimes n}(z^n)}+1 \Bigg)  \label{eqn5}\\
&\stackrel{(b)}{=} \Psi_1+\Psi_2 \nonumber
\end{align}
where
\begin{enumerate}[(a)]
\item follows by Jensen's inequality;
\item similar to the \ac{MAC} with degraded message sets, $\Psi_1$ is taking the summation $\sum_{x_1^n,x_2^n,z^n}$ in \eqref{eqn5} over $(x_1^n,x_2^n,z^n) \in \mathcal{T}_{\epsilon}^{n}(P_{X_1,X_2,Z})$  and $\Psi_2$ is taking the same summation over $(x_1^n,x_2^n,z^n) \not\in \mathcal{T}_{\epsilon}^{n}(P_{X_1,X_2,Z})$.
\end{enumerate}

Hence,
\begin{align}
\Psi_1&\leq \log \Big( 2^{-n(R_1+R_2)} 2^{-n(1-\epsilon) H(Z|X_1,X_2)} 2^{n(1+\epsilon) H(Z)} \twocolbreak
\includeonetwocol{}{\hspace{0.2in}} + 2^{-nR_2} 2^{-n(1-\epsilon)(H(X_1)+H(Z|X_1,X_2))} 2^{n(1+\epsilon) H(Z) }\nonumber\\
 &\hspaceonetwocol{0.2in}{0.2in}  +2^{-nR_2}+ 2^{-nR_1} 2^{-n(1-\epsilon)(H(Z|X_1))} 2^{n(1+\epsilon) H(Z) } +1 \Big)
\end{align}

Now $\mathbb{E}(\mathbb{D}(P_{Z^n}||Q_Z^{\otimes n})) \xrightarrow[]{n\to \infty} 0$ if $R_1>I(X_1;Z)+2\epsilon H(Z)$, $R_2>I(X_1,X_2;Z)-H(X_1)+2\epsilon H(Z)$ and $R_1+R_2>I(X_1,X_2;Z)+2\epsilon H(Z)$.

%=====================================================================================================
\subsubsection{Converse} \hfill\\
By assumption similar to the degraded message set \ac{MAC}, $\epsilon \geq \mathbb{D}(P_{Z^n}||Q_Z^{\otimes n}) \geq  n\mathbb{D}(\tilde{P}_Z||Q_Z)$ with $\tilde{P}_Z(z)\triangleq \frac{1}{n}\sum_{i=1}^n P_{Z_i}(z)$. The proofs for the rate $R_1$ and the sum rate $R_1+R_2$ are similar to the \ac{MAC} with degraded message sets case by repeating the steps in \eqref{conv1}-\eqref{conv2} and \eqref{conv3}-\eqref{conv4} respectively. So we will proceed by presenting the proof for the rate $R_2$,

Note that
\begin{align}
n R_2 &= H(M_2) \nonumber\\
&\geq H(M_2|X_1^n)\nonumber\\
&\geq I(M_2;Z^n|X^n_1) \nonumber\\
&\stackrel{(a)}{=} I(M_2,X_2^n;Z^n|X_1^n)\nonumber\\
&\geq I(X_2^n;Z^n|X_1^n)\nonumber\\
&= I(X_1^n,X_2^n;Z^n) - I(X_1^n;Z^n)\nonumber\\
&= \sum_{x_1^n} \sum_{x_2^n} \sum_{z^n} P(x_1^n,x_2^n,z^n) \log \frac{P(x_1^n,x_2^n,z^n)}{P(x_1^n,x_2^n)P_{Z^n}(z^n)}  \twocolbreak
 \hspaceonetwocol{0.0in}{0.2in}- H(X_1^n) + H(X_1^n|Z^n)\nonumber\\
&\geq \sum_{x_1^n} \sum_{x_2^n} \sum_{z^n} P(x_1^n,x_2^n,z^n) \log \frac{W^{\otimes n}(z^n|x_1^n,x_2^n)}{Q_Z^{\otimes n}(z^n)} \twocolbreak
 \hspaceonetwocol{0.0in}{0.2in}-\mathbb{D}(P_{Z^n}||Q_Z^{\otimes n}) -H(X_1^n)\nonumber\\
& \geq \sum_i \sum_{x_{1i}} \sum_{x_{2i}} \sum_{z_i} P(x_{1i},x_{2i},z_i) \log \frac{W(z_i|x_{1i},x_{2i})}{Q(z_i)}  \twocolbreak
 -\sum_i H(X_{1i}) -\epsilon \nonumber\\
&= \sum_i \mathbb{D}(P_{X_{1i},X_{2i},Z_i}||P_{X_{1i},X_{2i}}Q_{Z_i}) -\sum_i H(X_{1i})-\epsilon\nonumber\\
&\stackrel{(b)}{\geq} n\mathbb{D}(\tilde{P}_{X_{1},X_{2},Z}||\tilde{P}_{X_{1},X_{2}}  Q_{Z}) - n H(\tilde{X}_1) -\epsilon \nonumber\\
&= n\mathbb{D}(\tilde{P}_{X_{1},X_{2},Z}||\tilde{P}_{X_{1},X_{2}}  \tilde{P}_{Z}) + \mathbb{D}(\tilde{P}_{Z}||  Q_{Z})\twocolbreak
\hspaceonetwocol{0.0in}{0.2in}- n H(\tilde{X}_1) -\epsilon \nonumber\\
&\geq n I(\tilde{X}_{1},\tilde{X}_{2};\tilde{Z})- n H(\tilde{X}_1) -\epsilon \nonumber
\end{align}
where
\begin{enumerate}[(a)]
\item follows from the definition of the deterministic encoding functions in~(\ref{encfn_noncausal});
\item follows by Jensen's inequality, the convexity of $\mathbb{D}(\cdot||\cdot)$, concavity of $H(\cdot)$ and defining $\tilde{P}_{X_1,X_2,Z}(x_1,x_2,z)\triangleq \frac{1}{n} \sum_i P_{X_{1i},X_{2i},Z_i}(x_1,x_2,z)$ and $\tilde{P}_{X_1,X_2}(x_1,x_2)\triangleq \frac{1}{n} \sum_i P_{X_{1i},X_{2i}}(x_1,x_2)$ with $\tilde{P}_{X_1,X_2,Z} (x_1,x_2,z)= W_{Z|X_1,X_2}(z|x_1,x_2) \tilde{P}_{X_1,X_2}(x_1,x_2)$.
\end{enumerate}

To conclude the converse proof, we again refer the reader to~\cite[Section VI.C]{synthesis} for the continuity of the resolvability region at $\epsilon \to 0$.

%=====================================================================================================

\subsection{Theorem~\ref{th_strict}: \ac{MAC} with Strictly-Causal Cribbing}
\label{strictlycausal}

\subsubsection{Achievability} \hfil\\
To handle the strict causality constraint, we adopt a block-Markov encoding scheme over $B > 0$ consecutive and dependent blocks, each consisting of $r$ transmissions such that $n = rB$. The vector of $n$ channel outputs $Z^n$ at the channel output may then be described as $Z^n \triangleq (Z^r_1,\cdots,Z^r_B)$, where each $Z^r_b$ for $b \in \llbracket 1,B \rrbracket$ describes the observations in block $b$. The distribution induced by the coding scheme is the joint distribution $P_Z^n \triangleq P_{Z^r_1,\cdots,P_{Z^r_B}}$, while the target output distributions is a product distribution of product distributions $Q^{\otimes n}_Z \triangleq \prod_{j=1}^B Q_Z^{\otimes r}$ . Notice that
 
\begin{align}
\mathbb{D}(P_{Z^n}||Q_Z^{\otimes n}) &= \mathbb{D}(P_{Z_1^r...Z_B^r}||Q_{Z}^{\otimes rB}) \nonumber\\
&= \sum_{j=1}^{B} \mathbb{D}(P_{Z_j^r|Z_{j+1}^{B,r}}||Q_Z^{\otimes r}|P_{Z_{j+1}^{B,r}}) \nonumber\\
&= \sum_{j=1}^{B} \mathbb{D}(P_{Z_j^r}||Q_Z^{\otimes r}) + \sum_{j=1}^{B} \mathbb{D}(P_{Z_j^r|Z_{j+1}^{B,r}}||P_{Z_j^r}|P_{Z_{j+1}^{B,r}}) \nonumber\\
&= \sum_{j=1}^{B} \mathbb{D}(P_{Z_j^r}||Q_Z^{\otimes r}) + \sum_{j=1}^{B} I(Z_j^r;Z_{j+1}^{B,r}) \label{eq6}
\end{align}
where $Z_{j+1}^{B,r}=\{ Z_{j+1}^r,\dots Z_{B}^r\}$. This suggests that to achieve $\mathbb{D}(P_Z^n||Q_Z^{\otimes n})~\xrightarrow[]{n\rightarrow\infty}~0$, we
may enforce $\forall j=\{1,...,B\}$ both $\mathbb{D}(P_{Z_j^r}||Q_Z^{\otimes r})~\xrightarrow[]{r\rightarrow\infty}~0$ and $I(Z_j^r;Z_{j+1}^{B,r})~\xrightarrow[]{r\rightarrow\infty}~0$ sufficiently fast with $r$. 
As shown next, we achieve this by constructing a code that approximates $Q_Z^{\otimes r}$ in every block and show that the dependencies across blocks, created by block-Markov coding, can be eliminated by suitably recycling randomness from one block to the next.

\begin{figure}
\begin{center}
    \includegraphics[width=0.9\textwidth]{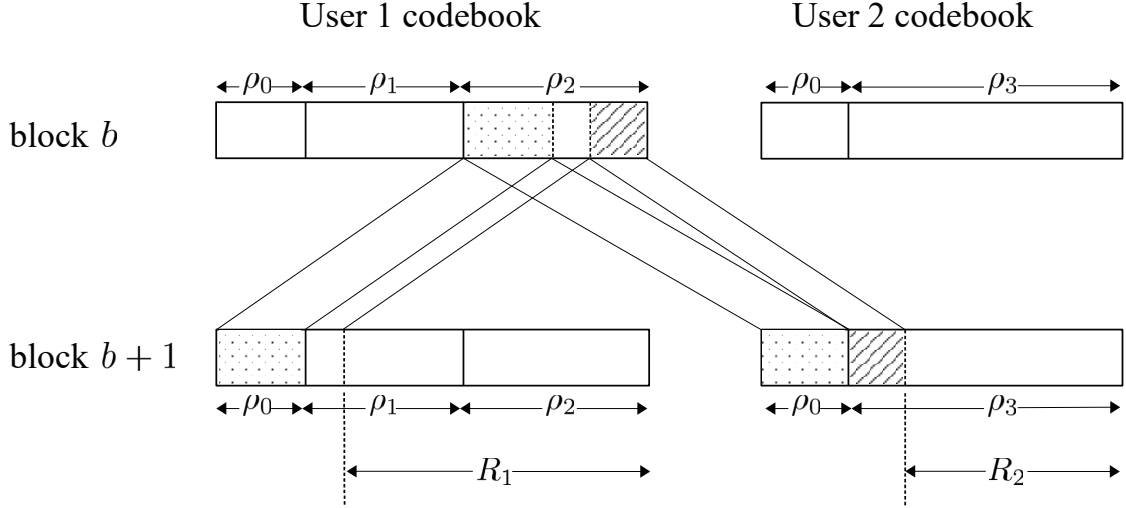}
\end{center}
\caption{Codebook structure at both users.} 
\label{codebook_structure}
\end{figure}

Consider a distribution $P_{UX_1X_2Z} = P_U P_{X_1|U} P_{X_2|U} W_{Z|X_1X_2}$ with marginal $Q_Z$ that satisfies $H(X_1|U)>I(UX_1;Z)$. For every block $b \in \llbracket 1,B \rrbracket$:

\begin{itemize}
\item Independently generate $2^{r \rho_0}$ codewords according to $P_U^{\otimes r}$ and label them $u^r(m_0^{(b)})$, where $m_0^{(b)} \in \llbracket 1,2^{r \rho_0} \rrbracket$.
\item For every $m_0^{(b)}$, independently generate $2^{r(\rho_1+\rho_2)}$ codewords according to $\prod_{i=1}^r P_{X_1|U=u_i^r(m_0^{(b)})}$; label them $x^r_1(m_0^{(b)},m'^{(b)}_1,m''^{(b)}_1)$ for $m'^{(b)}_1 \in \llbracket 1,2^{r \rho_1} \rrbracket $ and $m''^{(b)}_1 \in \llbracket 1,2^{r \rho_2} \rrbracket $. Note that $m_1^{(b)}=(m'^{(b)}_1,m''^{(b)}_1)$.
\item For every $m_0^{(b)}$, independently generate $2^{r\rho_3}$ codewords according to $\prod_{i=1}^r P_{X_2|U=u_i^r(m_0^{(b)})}$; label them $x^r_2(m_0^{(b)},m_2^{(b)})$, $m_2^{(b)} \in \llbracket 1,2^{r \rho_3} \rrbracket $.
\end{itemize}
 
 In every block $b\in \llbracket 1,B \rrbracket$, assuming all messages are chosen independently and uniformly in their respective sets and that both encoders use the same $M^{(b)}_0$, our objective is to establish conditions for $\rho_0$, $\rho_1$, $\rho_2$ and $\rho_3$ such that:
 \begin{enumerate}
 \item messages $M'^{(b)}_1$ and $M''^{(b)}_1$ can be decoded from $X^r_1$ knowing $M^{(b)}_0$;
 \item message $M''^{(b)}_1$ is nearly independent of $Z^r_b$;
 \item the distribution induced by the code, which we denote $\bar{P}_{Z^r_b}$ to indicate that $M_0^{(b)}$ is assumed known to both encoders, approximates $Q_Z^{\otimes r}$.
 \end{enumerate}
The message $M''^{(b)}_1$ is the part of $M^{(b)}_1$ that we wish to recycle toward the creation of $M_0^{(b+1)}$, which itself constitutes the cooperating message between the two encoders. This dependency between blocks can be hidden at the output of the channel by transmitting $M''^{(b)}_1$ securely over the wiretap channel with $X_1$ as its input, the second encoder as the legitimate receiver, and  $Z$ as the wiretapper. If we let $P_e^{(b)}$ denote the average probability of error for decoding $M'^{(b)}_1$ and $M''^{(b)}_1$ from $(X_1^r,M_0^{(b)})$, a standard argument shows that, when averaging over the randomly generated codes, $\mathbb{E}\left(P_e^{(b)}\right) < 2^{-\alpha r}$ for some $\alpha>0$ and all $r$ large enough if
 \begin{align}
 \rho_1+\rho_2<H(X_1|U). \label{rho1}
 \end{align}
 If we let $D^{(b)}\eqdef \mathbb{D}(\bar{P}_{Z_b^r M''^{(b)}_1} || Q_Z^{\otimes r} \bar{P}_{M''^{(b)}_1})$, a standard argument~(see~\cite[Section III]{strong_coord} for a similar result with variational distance and Appendix \ref{appendix:resolv_rates} for more detailed steps) shows that, when averaging over the randomly generated codes, $\mathbb{E} \left(D^{(b)} \right) < 2^{-\beta r}$ for some $\beta > 0$ and all $r$ large enough if
 \begin{align}
 \rho_0 &>I(U;Z),\label{rho00}\\
 \rho_0+\rho_1 &> I(UX_1;Z), \label{rho01}\\
 \rho_0+\rho_1+\rho_3 &> I(X_1X_2;Z), \label{rho02}\\
 \rho_0+\rho_3 &>I(UX_2;Z). \label{rho03}
 \end{align}
 
 Let $\epsilon >0$ and set
 \begin{align}
 \rho_0 &= I(U;Z)+\epsilon, \label{rho3}\\
 \rho_1 &= I(X_1;Z|U)+\epsilon,\\
 \rho_2 &= H(X_1|U) - I(X_1;Z|U) -2 \epsilon, \\
 \rho_3 &= I(X_2;Z|UX_1) +\epsilon, \label{rho4}
 \end{align}
 which is compatible with constraints \eqref{rho1}-\eqref{rho01}. The choice is also compatible with \eqref{rho02} because $I(UX_1X_2;Z)=I(X_1X_2;Z)$. The choice is finally compatible with \eqref{rho03} because $I(X_2;Z|UX_1)=H(X_2|UX_1)-H(X_2|UX_1Z)=H(X_2|U)-H(X_2|UX_1Z)\geq H(X_2|U)-H(X_2|UZ)=I(X_2;Z|U)$. Hence, by an expurgation argument, for every $b\in \llbracket 1,B \rrbracket$ there exists a code for block b such that 
 \begin{align}
 P_e^{(b)} < 2^{-\alpha' r} \quad \text{and} \quad D^{(b)}<2^{-\beta' r} \label{v3}
 \end{align}
 for some $\alpha'$, $\beta'>0$ and all $r$ large enough.

 The codes hence obtained are chained across $B$ blocks as follows. In Block~1, we assume that the encoders have access to a common message $M_0^{(1)}$ through some private common randomness (see Remark~\ref{Remark:noncooperative} for justification). In block $b>1$, we assume for now that Encoder~2 knows $M_0^{(b)}$. It is then able to form estimates $\widehat{M}'^{(b)}_1, \widehat{M}''^{(b)}_1$, which are correct with high probability. Since we have assumed that $H(X_1|U)>I(UX_1;Z)$, we have $\rho_2>\rho_0$ and an amount $\rho_0$ of the rate $\rho_2$ may be \textit{recycled} toward the creation of $M_0^{(b+1)}$ (see Fig.~\ref{codebook_structure}). Furthermore, for $\gamma \in \llbracket 0,1 \rrbracket$, an amount $\gamma(\rho_2-\rho_0)$ may be recycled toward the creation of $M'^{(b+1)}_1$, and an amount $(1-\gamma)(\rho_2-\rho_0)$ may be recycled toward the creation of $M^{(b+1)}_2$.
The key observations here are that (i) this procedure ensures that, with high probability, \textit{both} Encoder~1 and Encoder~2 know messages $M_0^{(b)}, M'^{(b)}_1, M''^{(b)}_1$ at the end of block~$b$, so that they can coordinate their choices of $M_0^{(b+1)}$; and (ii) the dependencies across blocks are only created through $M''^{(b)}_1$, which is nearly independent of $Z_b^r$.

Before formalizing the reasoning above, note that the effective rate of \textit{new} randomness for Encoder~1 in block $b$ is
\begin{align}
R_1 \triangleq \rho_1+\rho_2-\gamma(\rho_2-\rho_0)=\rho_1+(1-\gamma)\rho_2+\gamma \rho_0,
\end{align}
and the effective rate for Encoder~2 is
\begin{align}
R_2 \triangleq \rho_3-(1-\gamma)(\rho_2-\rho_0).
\end{align}
Using the values of $\rho_0,\rho_1,\rho_2,\rho_3$ chosen in~(\ref{rho3})-(\ref{rho4}), we may obtain all\footnote{For $R_1$ we choose $\gamma=1$ and for $R_2$ we choose $\gamma=0$, in each case finding the smallest single-user rate constraint so that the entire rate region is captured. The sum-rate constraint is independent of $\gamma$.}  rate pairs such that
\begin{align*}
R_1 &\geq \rho_1+\rho_0 = I(UX_1;Z)+2\epsilon,\\
R_2 &\geq \rho_3 - \rho_2+\rho_0 = I(X_1X_2;Z) - H(X_1|U) +4\epsilon,\\
R_1+R_2 &\geq \rho_1+\rho_2+\rho_3-(\rho_2-\rho_0)= I(X_1X_2;Z) +3\epsilon,
\end{align*} 
which is the desired rate region.

\begin{figure*}
\centering
\includegraphics[width=0.9\textwidth]{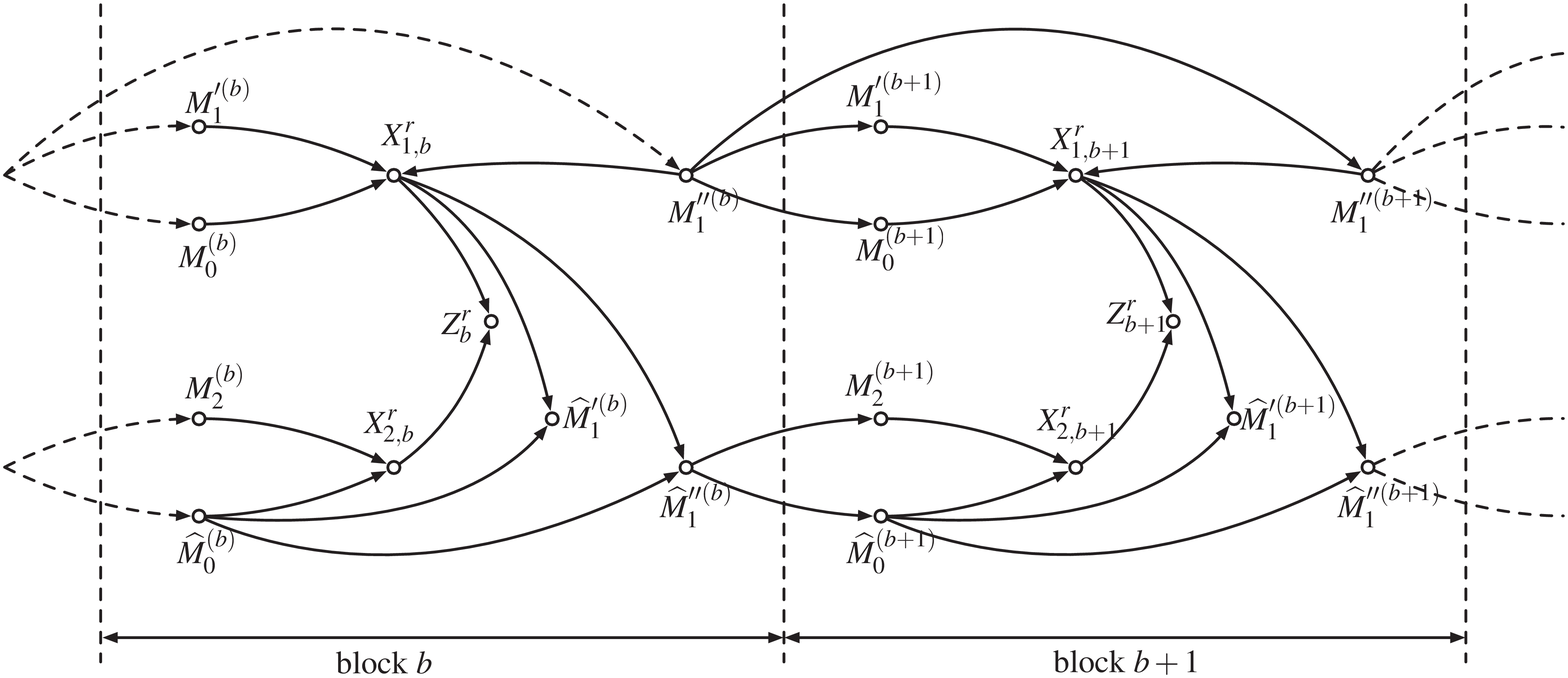}
\caption{Functional dependence graph for the block-Markov encoding scheme for the channel resolvability of the \ac{MAC} with strictly-causal cribbing.}
\label{fig:fdg}
\end{figure*}

It remains to show that this coding scheme guarantees that $\mathbb{D}(P_{Z^n}||Q_Z^{\otimes n})$ can be bounded following the expansion in~\eqref{eq6}.
First, note that for every $b \in \llbracket 2,B \rrbracket$,
\begin{align}
I(Z_b^r;Z_{b+1}^{B,r}) &\leq I(Z_b^r;M''^{(b)}_1\widehat{M}''^{(b)}_1 Z_{b+1}^{B,r})\\
&= I(Z_b^r;M''^{(b)}_1\widehat{M}''^{(b)}_1 ) \label{d1}
\end{align}
since the Markov chain $Z_b^r \rightarrow M''^{(b)}_1\widehat{M}''^{(b)}_1 \rightarrow Z_{b+1}^{B,r}$ holds as seen in the functional dependence graph shown in Fig.~\ref{fig:fdg}. Next, note that
\begin{align}
&I (Z_b^r ; M''^{(b)}_1\widehat{M}''^{(b)}_1 ) \twocolbreak
 \hspaceonetwocol{0.0in}{0.2in} \leq I(Z_b^r ; M''^{(b)}_1 ) +H(\widehat{M}''^{(b)}_1 | M''^{(b)}_1) \\
& \hspaceonetwocol{1.23in}{0.2in} \overset{(a)}{\leq} \mathbb{D}(P_{Z_b^r M''^{(b)}_1 }||P_{Z_b^r} P_{M''^{(b)}_1 })+H(P_e^{(b)}) + P_e^{(b)} r \rho_2 \\
& \hspaceonetwocol{1.23in}{0.2in} \overset{(b)}{\leq} \mathbb{D}(P_{Z_b^r M''^{(b)}_1 }||Q_Z^{\otimes r} P_{M''^{(b)}_1 })+H(P_e^{(b)}) + P_e^{(b)} r \rho_2 
\end{align}
where $(a)$ follows from Fano's inequality and $(b)$ follows from 
\begin{align}
&\mathbb{D}(P_{Z_b^r M''^{(b)}_1 }||P_{Z_b^r} P_{M''^{(b)}_1 })\twocolbreak
 = \mathbb{D}(P_{Z_b^r M''^{(b)}_1 }||Q_Z^{\otimes r} P_{M''^{(b)}_1 })  -\mathbb{D}(P_{Z_b^r}||Q_Z^{\otimes r}). \label{d2}
\end{align}

Combining (\ref{d1})-(\ref{d2}) into (\ref{eq6}), we finally obtain
\begin{align}
\mathbb{D}(P_{Z^n}||Q_Z^{\otimes n}) \leq & 2 \sum_{b=1}^{B}\mathbb{D}(P_{Z_b^r M''^{(b)}_1 }||Q_Z^{\otimes r} P_{M''^{(b)}_1 }) \twocolbreak
+B\left(H(P_e^{(b)}) + P_e^{(b)} r \rho_2 \right). \label{v4}
\end{align}
Finally, we show that $\mathbb{D}(P_{Z_b^r M''^{(b)}_1 }||Q_Z^{\otimes r} P_{M''^{(b)}_1 })$ is not too different from $\mathbb{D}(\bar{P}_{Z_b^r M''^{(b)}_1 }||Q_Z^{\otimes r} \bar{P}_{M''^{(b)}_1 })=\mathbb{D}^{(b)}$, where we recall that $\bar{P}$ is induced when both encoders use $M_0^{(b)}$, while $P$ is induced when Encoder~1 uses $M_0^{(b)}$ and Encoder~2 uses an estimate $\widehat{M}_0^{(b)}$ derived from his estimate $\widehat{M}''^{(b-1)}_1$. The total variation $\mathbb{V}(P_{Z_b^r M''^{(b)}_1},\bar{P}_{Z_b^r M''^{(b)}_1})$ satisfies
\begin{align}
&\mathbb{V}(P_{Z_b^r M''^{(b)}_1},\bar{P}_{Z_b^r M''^{(b)}_1}) \twocolbreak 
 \hspaceonetwocol{0.0in}{0.0in}\leq \mathbb{V}(P_{Z_b^r M_0^{(b)}M'^{(b)}_1M''^{(b)}_1\widehat{M}_0^{(b)}M_2^{(b)}},\bar{P}_{Z_b^r M_0^{(b)}M'^{(b)}_1M''^{(b)}_1{M}_0^{(b)}M_2^{(b)}})\\
&\hspaceonetwocol{1.5in}{0.0in}=\mathbb{V}(P_{M_0^{(b)}\widehat{M}_0^{(b)}},\bar{P}_{M_0^{(b)}{M}_0^{(b)}})\\
&\hspaceonetwocol{1.5in}{0.0in} = 2\P{M_0^{(b)}\neq\widehat{M}_0^{(b)}}\\
& \hspaceonetwocol{1.5in}{0.0in}\leq 2\mathbb{P}(M''^{(b-1)}_1 \neq \widehat{M}''^{(b-1)}_1).
\end{align}
Consequently, since $\bar{P}_{M''^{(b-1)}_1} = P_{M''^{(b-1)}_1}$, we obtain
\begin{align}
\mathbb{V}(P_{Z_b^r M''^{(b)}_1}, Q_Z^{\otimes r} P_{M''^{(b)}_1}) &\leq \mathbb{V}( P_{Z_b^r M''^{(b)}_1},\bar{P}_{Z_b^r M''^{(b)}_1 })\twocolbreak
+ \mathbb{V}(\bar{P}_{Z_b^r M''^{(b)}_1 },Q_Z^{\otimes r} P_{M''^{(b)}_1}) \nonumber\\
&\leq 2 \times 2^{-\alpha' r} + 2^{-\frac{\beta'}{2} r}. \label{v1}
\end{align}
where we have used Pinsker's inequality to bound the last term. To conclude that $\mathbb{D}(P_{Z_b^r M''^{(b)}_1} || Q_Z^{\otimes r} P_{M''^{(b)}_1})$ vanishes, we recall the following result~\cite[Eq. (323)]{Sason2016a}.

\begin{lemma}\label{lemma1}
Let $P$ and $Q$ be two distributions on a finite alphabet $\mathcal{A}$ such that $P$ is absolutely continuous \ac{wrt} $Q$. If $\mu \triangleq \min_{a\in\calQ:Q(a)>0} Q(a)$, we have
\begin{align*}
\mathbb{D}(P||Q) \leq \log\left(\frac{1}{\mu}\right)\mathbb{V}(P,Q).
\end{align*}
\end{lemma}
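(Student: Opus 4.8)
The plan is to reduce the claimed inequality to a one-variable pointwise estimate on $\phi(t)\eqdef t\log t$ and then sum. First I would dispose of a degenerate case: if the support $\{a:Q(a)>0\}$ is a singleton, then absolute continuity forces $P=Q$ and both sides are $0$; so from now on $\mu\le\tfrac12$. Next, restrict attention to $\calA''\eqdef\{a:P(a)>0\}$, which is contained in $\{a:Q(a)>0\}$ by absolute continuity; the indices outside $\calA''$ contribute $0$ to $\mathbb{D}(P\|Q)$ under the usual convention, so $\mathbb{D}(P\|Q)=\sum_{a\in\calA''}Q(a)\,\phi(t_a)$ with $t_a\eqdef P(a)/Q(a)$, and since $P(a)\le1$ and $Q(a)\ge\mu$ we have $t_a\in(0,1/\mu]$ for every $a\in\calA''$.

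The key pointwise estimate is: $\phi(t)\le0$ for $t\in(0,1]$, and $\phi(t)\le 2\log(1/\mu)\,(t-1)$ for $t\in[1,1/\mu]$. The first part is immediate from $\log t\le0$. For the second, I would observe that $t\mapsto\phi(t)/(t-1)=\frac{t\log t}{t-1}$ is nondecreasing on $(1,\infty)$ — a quick derivative computation shows its derivative has the sign of $t-1-\ln t\ge0$ — so on $[1,1/\mu]$ it is maximized at the right endpoint $t=1/\mu$, which gives $\phi(t)\le\frac{\log(1/\mu)}{1-\mu}(t-1)$; and $\frac{1}{1-\mu}\le2$ is exactly the statement $\mu\le\tfrac12$, so the stated bound follows.

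Assembling: split the sum over $\calA''$ according to whether $t_a\le1$ (i.e. $P(a)\le Q(a)$) or $t_a>1$ (i.e. $P(a)>Q(a)$). The first group contributes a nonpositive amount and is discarded. Applying the pointwise bound to the second group yields $\mathbb{D}(P\|Q)\le 2\log(1/\mu)\sum_{a:P(a)>Q(a)}Q(a)(t_a-1)=2\log(1/\mu)\sum_{a:P(a)>Q(a)}\bigl(P(a)-Q(a)\bigr)$, and since $\sum_a\bigl(P(a)-Q(a)\bigr)=0$ the positive part $\sum_{a:P(a)>Q(a)}(P(a)-Q(a))$ equals $\tfrac12\sum_a\abs{P(a)-Q(a)}=\tfrac12\mathbb{V}(P,Q)$, which gives $\mathbb{D}(P\|Q)\le\log(1/\mu)\mathbb{V}(P,Q)$. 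The only point requiring genuine care — the "hard part", such as it is — is the bookkeeping around absolute continuity and the zeros of $P$, so that each $t_a$ and $\log t_a$ is finite and the restriction to $\calA''$ is legitimate, together with pinning down the constant, which is precisely where $\mu\le\tfrac12$ (equivalently $\abs{\{a:Q(a)>0\}}\ge2$) is used; everything else is the single monotonicity computation for $\phi$.
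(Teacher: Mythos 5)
Your proof is correct. Note that the paper itself does not prove this lemma at all: it simply invokes \cite[Eq.~(323)]{Sason2016a}, so your argument is a genuinely different (self-contained) route rather than a reconstruction of the paper's. Your route — writing $\mathbb{D}(P\|Q)=\sum_a Q(a)\phi(P(a)/Q(a))$ with $\phi(t)=t\log t$, discarding the nonpositive terms with $P(a)\le Q(a)$, and bounding $\phi$ linearly on $[1,1/\mu]$ via the monotonicity of $t\mapsto \frac{t\log t}{t-1}$ (whose derivative indeed has the sign of $t-1-\ln t\ge 0$) — checks out in every step: the singleton-support case correctly reduces to $P=Q$, the restriction to $\{a:P(a)>0\}$ is legitimate under absolute continuity, $t_a\le 1/\mu$ follows from $Q(a)\ge\mu$, the identity $\sum_{a:P(a)>Q(a)}(P(a)-Q(a))=\tfrac12\mathbb{V}(P,Q)$ is the standard positive-part computation, and $\tfrac1{1-\mu}\le 2$ is exactly where $\mu\le\tfrac12$ enters. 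In fact, before relaxing the constant you have established the slightly sharper bound $\mathbb{D}(P\|Q)\le\frac{\log(1/\mu)}{2(1-\mu)}\mathbb{V}(P,Q)$, which is closer in spirit to the refined reverse-Pinsker inequalities of Sason--Verd\'u; what the citation buys the paper is brevity and membership in a general family of $f$-divergence bounds, while your elementary argument buys a proof that is fully self-contained and makes the constant's provenance transparent.
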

Note that $P_{Z_b^r M''^{(b)}_1}$ is absolutely continuous \ac{wrt} $Q_Z^{\otimes r} P_{M''^{(b)}_1}$ by definition of $Q_Z$ and the code construction. Hence, using (\ref{v1}) together with Lemma~\ref{lemma1} shows that there exists $\eta>0$ such that for all $r$ large enough
\begin{align}
\mathbb{D}(P_{Z_b^r M''^{(b)}_1} || Q_Z^{\otimes r} P_{M''^{(b)}_1}) < 2^{-\eta r}. \label{v2}
\end{align}
Substituting (\ref{v2}) and (\ref{v3}) into (\ref{v4}) shows the desired result.

\begin{remark}
\label{Remark:noncooperative}
 Recall that some private common randomness is required to jump-start the block-Markov encoding; this common randomness can be collected during a non-cooperative starting phase in the following manner. The two encoders will start transmitting with rates $R_1=H(X_1)$ and $R_2=0$, which exceeds the single-user resolvability rate. Simultaneously, via the usual arguments in the degraded wiretap channel $M_1 \rightarrow X_1^n \rightarrow Z^n$, one can convey $\frac{1}{n}(I(M_1;X_1^n)-I(M_1;Z^n)) = H(X_1|Z)$ bits of randomness from User~1 to User~2 while keeping it independent of $Z$ and maintaining an i.i.d. distribution $Q^{\otimes n}(z)$. By collecting this randomness for $\frac{\rho_0}{H(X_1|Z)}$ blocks, sufficient common randomness will be available to start the block-Markov process. The difference of rates $(R_1, R_2)$ in the starting phase will be amortized over $B$ blocks, with $B$ growing without bound, thus the average rates remain as described. The concept of starting the block-Markov transmission with a non-cooperative phase goes back to the inception of block-Markov encoding~\cite{Cover_ElGamal_Relay}.
\end{remark}
\begin{remark}
The above mentioned initialization of block-Markov coding leaves open the possibility that some $Q(z)$ may be compatible with some joint distribution $p(x_1,x_2)$ but incompatible with all product distributions $p(x_1)p(x_2)$. Such a $Q(z)$ is valid for cooperative transmission but cannot be generated during the non-cooperative initialization of block-Markov encoding. Thus, for a more precise definition of the model for MAC with {\em strictly causal} cribbing, in the context of resolvability, we are presented with three distinct choices: Either (a)~some private shared randomness (with rate that amortizes asymptotically to zero) is made available to the model, or (b) the distribution~$Q(z)$ is limited to the set that can be generated via product distributions $p(x_1)p(x_2)$, or (c)~the distribution of $Z^n$, although still i.i.d., is allowed to deviate from the target $Q(z)$ for a finite number of blocks at the beginning of transmission. Options (b) and (c) are both reasonable for secrecy applications of resolvability; option (b) may affect secrecy rates.
\end{remark}

%=====================================================================================================

\subsubsection{Converse} \hfil\\
We consider a $(2^{nR_1},2^{nR_2},n)$ code such that $\mathbb{D}(P_{Z^n}||Q_Z^{\otimes n}) \leq \epsilon$. Then,

\begin{align}
n  R_1 &= H(M_1) \nonumber\\
&\geq I(M_1;Z^n) \nonumber\\
&\stackrel{(a)}{=} I(M_1,X_1^n;Z^n) \label{bbb}\\
&\geq I(X_1^n;Z^n) \nonumber\\
&= I(X_1^n,X_2^n;Z^n) - I(X_2^n;Z^n|X_1^n) \nonumber\\
&\stackrel{(b)}{\geq} \sum_{x_1^n} \sum_{x_2^n} \sum_{z^n} P(x_1^n,x_2^n,z^n) \log \frac{W^{\otimes n}(z^n|x_1^n,x_2^n)}{P_{Z^n}(z^n)} \twocolbreak
\hspaceonetwocol{0.0in}{0.2in}- \sum_i I(X_{2i};Z_i|X_{1i},X_1^{i-1}) \label{eq18}\\
&\stackrel{(c)}{=} \sum_{x_1^n} \sum_{x_2^n} \sum_{z^n} P(x_1^n,x_2^n,z^n) \log \frac{W^{\otimes n}(z^n|x_1^n,x_2^n)}{Q_Z^{\otimes n}(z^n)} \twocolbreak
 \hspaceonetwocol{0.0in}{0.2in}-\mathbb{D}(P_{Z^n} || Q_Z^{\otimes n})- \sum_i I(X_{2i};Z_i|X_{1i},U_i) \label{eq19}\\
&\geq \sum_i \sum_{u_i} \sum_{x_{1i}} \sum_{x_{2i}} \sum_{z_i} P(u_i,x_{1i},x_{2i},z_i) \log \frac{W(z_i|x_{1i},x_{2i})}{Q(z_i)} \nonumber\\
&\hspaceonetwocol{1.5in}{0.2in}  - \! \sum_i \! \sum_{u_i} \! \sum_{x_{1i}} \! \sum_{x_{2i}} \! \sum_{z_i} P(u_i,x_{1i},x_{2i},z_i) \log \frac{W(z_i|x_{1i},x_{2i})}{P(z_i|x_{1i},u_i)} \twocolbreak
 \hspaceonetwocol{0.0in}{0.2in}-\epsilon \label{aaa} \\
&= \sum_i \sum_{u_i} \sum_{x_{1i}} \sum_{x_{2i}} \sum_{z_i} P(u_i,x_{1i},x_{2i},z_i) \log \frac{P(z_i|x_{1i},u_i)}{Q(z_i)} \twocolbreak
 \hspaceonetwocol{0.0in}{0.2in}-\epsilon \nonumber\\
&= \sum_i \sum_{u_i} \sum_{x_{1i}}  \sum_{z_i} P(u_i,x_{1i},z_i) \log \frac{P(z_i|x_{1i},u_i)}{Q(z_i)} -\epsilon \nonumber\\
&= \sum_i \mathbb{D}(P_{U_i,X_{1i},Z_i}||P_{U_i,X_{1i}} Q_{Z_i})-\epsilon \nonumber\\
&=  \sum_i I(U_i X_{1i};Z_i) + \sum_i \mathbb{D}(P_{Z_i}||Q_Z)-\epsilon\\
&\stackrel{(d)}{\geq} n I(U_Q X_{1Q};Z_Q|Q) -\epsilon  \label{eqc1}\\
& = nI(Q U_Q X_{1Q};Z_Q) -nI(Q;Z_Q)-\epsilon\\
&\stackrel{(e)}{\geq} n I(U X_1;Z) -n\epsilon ' \label{eqc2}
\end{align}
where 
\begin{enumerate}[(a)]
    \item follows from the definition of the deterministic encoding functions in~(\ref{encfn_strict});
    \item follows from $I(X_2^n;Z^n|X_1^n)\leq \sum_{i=1}^nI(X_{2,i};Z_i|X_{1,i}X_1^{i-1})$ since the channel is memoryless  and because conditioning does not increase entropy;
    \item follows setting $U_i\triangleq X_1^{i-1}$;
    \item  follows by introducing a random variable $Q$ uniformly distributed on $\llbracket 1,n \rrbracket$ and independent of all others;
    \item follows by \cite[Lemma VI.3]{synthesis} for some $\epsilon'>0$ with $\lim_{\epsilon \rightarrow 0}\epsilon' =0$ and by setting $U=(Q,U_Q)$, $X_1=X_{1Q}$ and $Z=Z_Q$.
\end{enumerate}
Notice that upon setting $X_2=X_{2Q}$ and recalling that the cribbing is strictly-causal such that $X_{2Q} $ is a function of $(M_2,Q,U_Q)$, and that the Markov chains $M_1,X_1 \to U \to M_2$ and $X_{1Q} \to Q,U_Q \to X_{2Q}$ hold, we have
\begin{align}
P_{Q U_Q X_{1Q} X_{2Q}Z_Q}&=P_{Q U_Q}P_{X_{1Q}|U_Q  Q}P_{X_{2Q}|U_Q Q}W_{Z_Q|X_{1Q} X_{2Q}},\\
&\text{and}  \nonumber\\
P_{UX_1X_2Z}&=P_UP_{X_1|U}P_{X_2|U}W_{Z|X_1X_2}.
\end{align} 

Next, note that 
\begin{align}
n R_2 &= H(M_2) \nonumber\\
&\geq H(M_2|X_1^n) \nonumber\\
&\geq I(M_2;Z^n|X^n_1) \nonumber\\
&\stackrel{(a)}{=} I(M_2,X_2^n;Z^n|X_1^n) \label{ccc}\\
&\geq I(X_2^n;Z^n|X_1^n) \nonumber\\
&= I(X_1^n,X_2^n;Z^n) - I(X_1^n;Z^n) \label{eqc8} \\
&= \sum_{x_1^n} \sum_{x_2^n} \sum_{z^n} P(x_1^n,x_2^n,z^n) \log \frac{W^{\otimes n}(z^n|x_1^n,x_2^n)}{ P_{Z^n}(z^n)}\twocolbreak
 \hspaceonetwocol{0.0in}{0.2in}- I(X_1^n;Z^n) \nonumber\\
&\geq \sum_{x_1^n} \sum_{x_2^n} \sum_{z^n} P(x_1^n,x_2^n,z^n) \log \frac{W^{\otimes n}(z^n|x_1^n,x_2^n)}{ Q_Z^{\otimes n}(z^n)} \twocolbreak
\hspaceonetwocol{0.0in}{0.2in}-\mathbb{D}(P_{Z^n}||Q_Z^{\otimes n})- H(X_1^n) \nonumber\\
&\geq \sum_i \sum_{x_{1i}} \sum_{x_{2i}} \sum_{z_i} P(x_{1i},x_{2i},z_i) \log \frac{W(z_i|x_{1i},x_{2i})}{Q(z_i)} \twocolbreak
\hspaceonetwocol{0.0in}{0.2in}-\sum_i H(X_{1i}|U_i) -\epsilon \nonumber\\
&= \sum_i \sum_i \sum_{x_{1i}} \sum_{x_{2i}} P(x_{1i},x_{2i},z_i) \log \frac{W(z_i|x_{1i},x_{2i})}{P(z_i)} \twocolbreak
\hspaceonetwocol{0.0in}{0.2in}+\mathbb{D}(P_{Z_i}||Q_Z) -\sum_i H(X_{1i}|U_i) -\epsilon \nonumber\\
&\geq \sum_i I(X_{1i},X_{2i};Z_i)-\sum_i H(X_{1i}|U_i) -\epsilon \nonumber\\
&= nI(X_{1Q}X_{2Q};Z_Q|Q)-nH(X_{1Q}|U_Q Q) -\epsilon \nonumber\\
&=nI(Q X_{1Q}X_{2Q};Z_Q)-nI(Q;Z_Q)\twocolbreak
-nH(X_{1Q}|U_Q Q) -\epsilon \nonumber\\
&\stackrel{(b)}{\geq} nI(X_{1Q}X_{2Q};Z_Q)-nH(X_{1Q}|U_Q Q) -n\epsilon ' \nonumber\\
&= nI(X_1X_2;Z) -nH(X_1|U)-n\epsilon '   \label{eqc7}
\end{align}
where $(a)$ follows from the definition of the encoding function and $(b)$ follows by \cite[Lemma VI.3]{synthesis} for some $\epsilon'>0$ with $\lim_{\epsilon \rightarrow 0}\epsilon' =0$.
 Finally,
\begin{align}
n (R_1+R_2)
&=    H(M_1,M_2)  \nonumber\\
&\geq I(M_1,M_2;Z^n) \nonumber\\
&\geq I(X_1^n,X_2^n;Z^n) +\mathbb{D}(P_{Z^n} ||Q_Z^{\otimes n}) - \epsilon \nonumber\\
&\geq nI(X_1,X_2;Z) - n \epsilon ' \nonumber
\end{align}
where we have merely repeated the steps in (\ref{eqc8})-(\ref{eqc7}). To conclude the converse proof, we again refer the reader to~\cite[Section VI.C]{synthesis} for the continuity of the resolvability region at $\epsilon \to 0$.

%=====================================================================================================

\subsection{Theorem~\ref{th_causal}: \ac{MAC} with Causal Cribbing} \label{causal}

\subsubsection{Achievability}\hfil\\
%The achievability follows by using Shannon strategy along similar lines of \cite{cribbingmeulen}. 
The proof of the \ac{MAC} with causal cribbing is similar to the \ac{MAC} with strictly-causal cribbing, however, we use a Shannon strategy to generate $X_2$ rather than codewords~\cite{cribbingmeulen}.
Let $\mathscr{T} \eqdef\mathcal{X}_2^{|\mathcal{X}_1|}$ be the set of all strategies that map $\mathcal{X}_1$ into $\mathcal{X}_2$, and for $t\in\mathcal{T}$ denote by $t(x_1)\in\calX_2$ the image of $x_1\in\calX_1$. The \ac{MAC} induced by the Shannon strategy is denoted by $(\mathcal{X}_1 \times \mathscr{T}, W^{+}_{Z|X_1,T}, \mathcal{Z} )$ where $W^{+}_{Z|X_1,T} \triangleq W_{Z|X_1,X_2=T(X_1)} $.
%Note that for an arbitrary distribution $P^{*}(x_1,x_2)$, there always exists a product distribution $P(x_1,t)=P(x_1)P(t)$ such that $P^*(x_1,x_2)=P(x_1) \sum_{t:t(x_1)=x_2}P(t)$. 
%This is achieved by choosing~\cite[Eq.~(44)]{cribbingmeulen}
%\begin{align*}
%    &P(x_1)=\sum_{x_2}P^*(x_1,x_2),\\
%    &P(t)=\prod_{x_1} %\frac{P^*(x_1,x_2=t(x_1))}{P(x_1)}.
%\end{align*}

By Theorem~\ref{th_strict}, rate pairs $(R_1,R_2)$ satisfying the following conditions are achievable with strictly-causal cribbing.
\begin{align}
R_1 &> I(U,X_1;Z),\\
R_2&> I(X_1,T;Z)-H(X_1|U),\\
R_1+R_2&> I(X_1,T;Z).
\end{align}
with $H(X_1|U) > I(U,X_1;Z)$ for any joint distribution $P_{UX_1TZ}\eqdef P_{U}P_{X_1|U}P_{T|U}W^{+}_{Z|X_1T}$ with marginal $Q_Z$. Restricting the distribution to satisfy $P_{UX_1TZ}\eqdef P_{U} P_{X_1}P_{T}W^{+}_{Z|X_1T}$ yields:
\begin{align*}
H(X_1|U)&=H(X_1), \\
I(U,X_1;Z)&=I(X_1;Z), \\
I(X_1,T;Z)&=I(X_1,X_2,T;Z)=I(X_1,X_2;Z),
\end{align*}
and $P(x_1,x_2,z)=P(x_1)\sum_{t:t(x_1)=x_2}P(t)W(z|x_1,x_2)$. This is possible because of the fact that for an arbitrary distribution $P^{*}(x_1,x_2)$, there always exists a product distribution $P(x_1,t)=P(x_1)P(t)$ such that $P^*(x_1,x_2)=P(x_1) \sum_{t:t(x_1)=x_2}P(t)$.
This is achieved by choosing~\cite[Eq.~(44)]{cribbingmeulen}
\begin{align*}
    &P(x_1)=\sum_{x_2}P^*(x_1,x_2),\\
    &P(t)=\prod_{x_1} \frac{P^*(x_1,x_2=t(x_1))}{P(x_1)}.
\end{align*}
Note that the constraint $H(X_1|U)>I(U,X_1;Z)$ is now automatically satisfied if $H(X_1|Z)>0$.

The achievability scheme presented thus far depends on $H(X_1|Z)>0$. The same achievable rates can be attained for $H(X_1|Z)=0$, however, a different scheme is required for this extremal case, which is presented below.

Consider a distribution $P(x_1,x_2)=P(x_1)P(x_2|x_1)$ such that $\sum_{x_1,x_2} P(x_1,x_2)W(z|x_1,x_2)=Q_Z$. 
\begin{itemize}
\item Independently generate $2^{nR_1}$ codewords $x_1^n$ each with probability $P(x_1^n)$. Label them $x_1^n(m_1)$, $m_1 \in \llbracket 1,2^{nR_1}\rrbracket$.
\item For every $x_1^n(m_1)$, independently generate $2^{nR_2}$ codewords $x_2^n$ each with probability $P(x_2^n|x_1^n(m_1))=P_{X_2|X_1(m_1)}^{\otimes n}(x_2^n|x_1^n(m_1))$. Label them $x_{2}^n(x_1^n(m_1),m_2)$,  $m_{2} \in \llbracket 1,2^{nR_2}\rrbracket$.
\end{itemize}

For $m_{1} \in \llbracket 1,2^{nR_1} \rrbracket$ and $m_{2} \in \llbracket 1,2^{nR_2}\rrbracket$, let $X_1^n(m_1)$ and $X^n_2(X_1^n(m_1),m_2)$ denote  random variables representing the randomly generated codewords. The average \ac{KL} divergence can be bounded as:

\begin{align}
&\mathbb{E} \big( \mathbb{D} ( P_{Z^n}||Q_{Z}^{\otimes n})\big)  \twocolbreak
\hspaceonetwocol{0.0in}{0.2in} =\mathbb{E} \bigg(\sum_{z^n} \sum_{x_1^n} P(x_1^n,z^n) \log \frac{\sum_{x_1^n} P(x_1^n,z^n)}{\sum_{x_1^n} Q_{X_1,Z}^{\otimes n}(x_1^n,z^n)}\bigg) \nonumber\\
& \hspaceonetwocol{1.2in}{0.2in}\stackrel{(a)}{\leq} \mathbb{E} \bigg( \sum_{z^n} \sum_{x_1^n} P(x_1^n,z^n) \log \frac{ P(x_1^n,z^n)}{ Q_{X_1,Z}^{\otimes n}(x_1^n,z^n)} \bigg) \nonumber\\
&\hspaceonetwocol{1.2in}{0.2in}= \mathbb{E} \big( \mathbb{D} ( P_{X_1^n,Z^n}||Q_{X_1,Z}^{\otimes n})\big)  \nonumber\\
&\hspaceonetwocol{1.2in}{0.2in}\stackrel{(b)}{=} \mathbb{E} \big( \mathbb{D} ( P_{Z^n|X_1^n}||Q_{Z|X_1}^{\otimes n}|P_{X_1^n})\big)+\mathbb{E} \big( \mathbb{D} ( P_{X_1^n}||P_{X_1}^{\otimes n} ) \big) \label{eq:eff}
\end{align}
where\\
$Q_{X_1,Z}=\sum_{x_2}P(x_1,x_2)W(z|x_1,x_2)$;
\begin{enumerate}[(a)]
    \item follows by the log-sum inequality;
    \item follows from the chain rule of \ac{KL} divergence.
\end{enumerate}

\begin{figure*}
\centering
\includegraphics[width=0.7\textwidth]{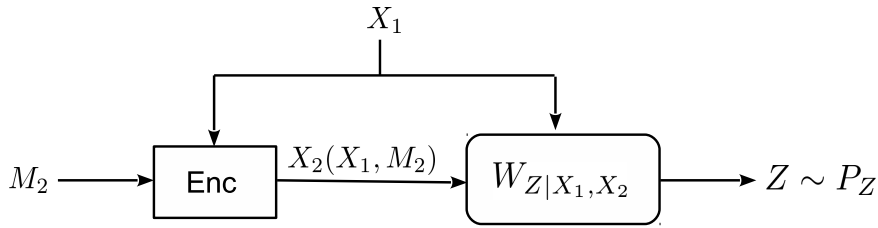} 
\caption{State-dependent point-to-point channel.}
\label{fig:fig_state_dep}
\end{figure*}

Let $R_1>H(X_1)$, in which case the second term of \eqref{eq:eff} vanishes as $n\to\infty$ and the channel is effectively a state-dependent point-to-point channel (Fig.~\ref{fig:fig_state_dep}). Using similar bounding techniques as those used earlier in this paper (e.g. the proof of \eqref{rho00}-\eqref{rho03}), the first term of \eqref{eq:eff} vanishes as $n\to\infty$ if $R_2>I(X_2;Z|X_1)$. Since $H(X_1|Z)=0$, the achievable region is
\begin{align*}
    R_1>H(X_1)&=I(X_1;Z),\\
    R_2>I(X_2;Z|X_1)&=I(X_1,X_2;Z)-H(X_1),\\
    R_1+R_2>H(X_1)+I(X_2;Z|X_1)&=I(X_1,X_2;Z).
\end{align*}

\subsubsection{Converse}\hfil\\
Since the \ac{MAC} with causal cribbing is a special case of the \ac{MAC} with non-causal cribbing, it follows that the converse of the latter holds for the causal cribbing scenario.

%=====================================================================================================

\section{Conclusion}
\label{sec:conc}

This paper develops inner and outer bounds for the resolvability rates of the multiple-access channel with cribbing that is either non-causal, causal, strictly causal, or in the form of degraded message sets. The derived inner and outer bounds are tight except for the strictly causal case. The key insight of this paper and the corresponding analytical contribution involves the  hiding of the cooperation among the encoders so that the dependencies created by cooperation are undetectable at the channel output.  This is made possible because the cribbing mechanism, even with a strict-causality constraint, creates an effective wiretap channel and allows the exchange of secret information. In the context of channel resolvability, this secret information plays the role of randomness that can be reused for cooperation without impacting the desired output approximation. Then, deriving secrecy from channel resolvability, achievable strong secrecy rates were derived for \ac{MAC} wiretap channels with cribbing.

%=====================================================================================================

\appendices

\section{Proof of Convexity of Resolvability Regions}
\label{appendix:convexity}

\subsection{Convexity of \ac{MAC} with Degraded Message Sets}\label{appendix:convexity_degraded}

Assume that $(R_1^{(1)},R_2^{(1)})$ and $(R_1^{(2)},R_2^{(2)})$ are achievable, which implies the existence of two distributions $P_{X_1,X_2,Z}^{(1)}$ and $P_{X_1,X_2,Z}^{(2)}$ with marginal $Q_Z$ such that,
\begin{align*}
R_1^{(1)} &\geq I(X_1^{(1)};Z^{(1)}),          \\    R_1^{(1)}+R_2^{(1)} &\geq I(X_1^{(1)},X_2^{(1)};Z^{(1)}),       \\
\text{and} \\
R_1^{(2)} &\geq I(X_1^{(2)};Z^{(2)}), \\
R_1^{(2)}+R_2^{(2)} &\geq I(X_1^{(2)},X_2^{(2)};Z^{(2)}).
\end{align*}

Let $P_{X_1,X_2|Z}^{(3)}=\lambda P_{X_1,X_2|Z}^{(1)} +(1-\lambda)P_{X_1,X_2|Z}^{(2)}$ for $\lambda \in \llbracket 0,1 \rrbracket$ and  $P_{X_1|Z}^{(3)}=\lambda P_{X_1|Z}^{(1)} +(1-\lambda)P_{X_1|Z}^{(2)}$. 
Note that $P^{(3)}_{X_1,X_2,Z}$ resulting from a convex combination of $P^{(1)}_{X_1,X_2,Z}$ and $P^{(2)}_{X_1,X_2,Z}$ exists unlike the \ac{MAC} with non-cooperating encoders, where the convex combination does not necessarily factorize into a product distribution.

From the convexity of $I(X_1,X_2;Z)$ with respect to $P_{X_1,X_2|Z}$ and the convexity of $I(X_1;Z)$ with respect to $P_{X_1|Z}$, it follows that for a fixed $Q_Z$:
\begin{align*}
I(X_1^{(3)};Z^{(3)}) &\leq \lambda I(X_1^{(1)};Z^{(1)})  +(1-\lambda) I(X_1^{(2)};Z^{(2)}),  \\
I(X_1^{(3)},X_2^{(3)};Z^{(3)}) &\leq \lambda I(X_1^{(1)},X_2^{(1)};Z^{(1)})  +(1-\lambda) I(X_1^{(2)},X_2^{(2)};Z^{(2)}).
\end{align*} 
Therefore we have 
\begin{align*}
I(X_1^{(3)};Z^{(3)}) &\leq \lambda R_1^{(1)} +(1-\lambda) R_1^{(2)},  \\
I(X_1^{(3)},X_2^{(3)};Z^{(3)}) &\leq \lambda (R_1^{(1)}+R_2^{(1)} )  +(1-\lambda) (R_1^{(2)}+R_2^{(2)} ).
\end{align*} 
which implies that $\big(\lambda R_1^{(1)} +(1-\lambda) R_1^{(2)},\lambda R_2^{(1)} +(1-\lambda) R_2^{(2)} \big)$ is inside the achievable region defined by $P^{(3)}_{X_1,X_2,Z}$.

\subsection{Convexity of \ac{MAC} with Non-Causal/Causal Cribbing}\label{appendix:convexity_causal}
Similar to the \ac{MAC} with degraded message sets, assume that $(R_1^{(1)},R_2^{(1)})$ and $(R_1^{(2)},R_2^{(2)})$ are achievable, which implies the existence of two distributions $P_{X_1,X_2,Z}^{(1)}$ and $P_{X_1,X_2,Z}^{(2)}$ with marginal $Q_Z$ such that,
\begin{align*}
R_1^{(1)} &\geq I(X_1^{(1)};Z^{(1)}),          \\   
R_2^{(1)} & \geq   I(X_1^{(1)},X_2^{(1)};Z^{(1)}) - H(X_1^{(1)}), \\                      
R_1^{(1)}+R_2^{(1)} &\geq I(X_1^{(1)},X_2^{(1)};Z^{(1)}),       \\
&\text{and} \\
R_1^{(2)} &\geq I(X_1^{(2)};Z^{(2)}), \\
R_2^{(1)} & \geq   I(X_1^{(2)},X_2^{(2)};Z^{(2)}) - H(X_1^{(2)}), \\  
R_1^{(2)}+R_2^{(2)} &\geq I(X_1^{(2)},X_2^{(2)};Z^{(2)}).
\end{align*}

Let $P_{X_1,X_2|Z}^{(3)}=\lambda P_{X_1,X_2|Z}^{(1)} +(1-\lambda)P_{X_1,X_2|Z}^{(2)}$ for $\lambda \in \llbracket 0,1 \rrbracket$. Then   $P_{X_1|Z}^{(3)}=\lambda P_{X_1|Z}^{(1)} +(1-\lambda)P_{X_1|Z}^{(2)}$ and $P_{X_1}^{(3)}=\lambda P_{X_1}^{(1)} +(1-\lambda)P_{X_1}^{(2)} $.

From the convexity of $I(X_1,X_2;Z)$ with respect to $P_{X_1,X_2|Z}$, the convexity of $I(X_1;Z)$ with respect to $P_{X_1|Z}$ for a fixed $Q_Z$ and the concavity of $H(X_1)$ with respect to $P_{X_1}$:
\begin{align*}
I(X_1^{(3)};Z^{(3)}) &\leq \lambda I(X_1^{(1)};Z^{(1)})  +(1-\lambda) I(X_1^{(2)};Z^{(2)}),  \\
H(X_1^{(3)}) &\geq  \lambda H(X_1^{(1)})  + (1-\lambda) H(X_1^{(2)}), \\
I(X_1^{(3)},X_2^{(3)};Z^{(3)}) &\leq \lambda I(X_1^{(1)},X_2^{(1)};Z^{(1)})  +(1-\lambda) I(X_1^{(2)},X_2^{(2)};Z^{(2)}).
\end{align*}

Therefore we have 
\begin{align*}
I(X_1^{(3)};Z^{(3)}) &\leq \lambda R_1^{(1)} +(1-\lambda) R_1^{(2)},  \\
I(X_1^{(3)},X_2^{(3)};Z^{(3)})  - H(X_1^{(3)}) &\leq \lambda R_2^{(1)} +(1-\lambda) R_2^{(2)},  \\
I(X_1^{(3)},X_2^{(3)};Z^{(3)}) &\leq \lambda (R_1^{(1)}+R_2^{(1)} )  +(1-\lambda) (R_1^{(2)}+R_2^{(2)} ).
\end{align*} 
which implies that $\big(\lambda R_1^{(1)} +(1-\lambda) R_1^{(2)},\lambda R_2^{(1)} +(1-\lambda) R_2^{(2)} \big)$ is inside the achievable region defined by $P^{(3)}_{X_1,X_2,Z}$.

\subsection{Convexity of \ac{MAC} with Strictly-Causal Cribbing}\label{appendix:convexity_strict}

To prove the convexity of the inner bound, assume that $(R_1^{(1)},R_2^{(1)})$ and $(R_1^{(2)},R_2^{(2)})$ are achievable, which implies the existence of two distributions $P_{U,X_1,X_2,Z}^{(1)}=P_{U}^{(1)}P_{X_1|U}^{(1)}P_{X_2|U}^{(1)}W_{Z|X_1,X_2}$ and $P_{U,X_1,X_2,Z}^{(2)}=P_{U}^{(2)}P_{X_1|U}^{(2)}P_{X_2|U}^{(2)}W_{Z|X_1,X_2}$ with marginal $Q_Z$ such that,
\begin{align*}
R_1^{(1)} &\geq I(U^{(1)},X_1^{(1)};Z^{(1)}),          \\   
R_2^{(1)} & \geq   I(X_1^{(1)},X_2^{(1)};Z^{(1)}) - H(X_1^{(1)}|U^{(1)}), \\                      
R_1^{(1)}+R_2^{(1)} &\geq I(X_1^{(1)},X_2^{(1)};Z^{(1)}),       \\
\text{with  } & H(X_1^{(1)}|U^{(1)})>I(U^{(1)},X_1^{(1)};Z^{(1)}),\\
&\text{and} \\
R_1^{(2)} &\geq I(U^{(2)},X_1^{(2)};Z^{(2)}), \\
R_2^{(1)} & \geq   I(X_1^{(2)},X_2^{(2)};Z^{(2)}) - H(X_1^{(2)}|U^{(2)}), \\  
R_1^{(2)}+R_2^{(2)} &\geq I(X_1^{(2)},X_2^{(2)};Z^{(2)}),\\
\text{with  } & H(X_1^{(2)}|U^{(2)})>I(U^{(2)},X_1^{(2)};Z^{(2)}).
\end{align*}

For $\lambda\in \llbracket 0,1 \rrbracket$, let $Q\in\{1,2\}$ with $\text{Pr}(Q=1)=\lambda$ and $\text{Pr}(Q=2)=1-\lambda$. Define $U^{(3)}\triangleq(U^{(Q)},Q)$, $X_1^{(3)}\triangleq X_1^{(Q)}$, $X_2^{(3)}\triangleq X_2^{(Q)}$ and $Z^{(3)}\triangleq Z^{(Q)}$. Let $Q$, $(U^{(1)},X_1^{(1)},X_2^{(1)},Z^{(1)})$ and $(U^{(2)},X_1^{(2)},X_2^{(2)},Z^{(2)})$ be independent so that $P_{U,X_1,X_2,Z}^{(3)}=\lambda P_{U,X_1,X_2,Z}^{(1)}+(1-\lambda)P_{U,X_1,X_2,Z}^{(2)}$ can be written as $P_{U,X_1,X_2,Z}^{(3)}=P_{U}^{(3)}P_{X_1|U}^{(3)}P_{X_2|U}^{(3)}W_{Z|X_1,X_2}$. From the definition of the random variables:
\begin{align*}
    H(X_1^{(3)}|U^{(3)})=\lambda H(X_1^{(1)}|U^{(1)}) + (1-\lambda)H(X_1^{(2)}|U^{(2)}).
\end{align*}

For a fixed $Q_Z$, we have $P_{X_1,X_2|Z}^{(3)}=\lambda P_{X_1,X_2|Z}^{(1)} +(1-\lambda)P_{X_1,X_2|Z}^{(2)}$ and $P_{U,X_1|Z}^{(3)}=\lambda P_{U,X_1|Z}^{(1)} +(1-\lambda)P_{U,X_1|Z}^{(2)}$. From the convexity of $I(U,X_1;Z)$ with respect to $P_{U,X_1|Z}$ and the convexity of $I(X_1,X_2;Z)$ with respect to $P_{X_1,X_2|Z}$:
\begin{align*}
I(U^{(3)},X_1^{(3)};Z^{(3)}) &\leq \lambda I(U^{(1)},X_1^{(1)};Z^{(1)})  +(1-\lambda) I(U^{(2)}X_1^{(2)};Z^{(2)}),  \\
I(X_1^{(3)},X_2^{(3)};Z^{(3)}) &\leq \lambda I(X_1^{(1)},X_2^{(1)};Z^{(1)})  +(1-\lambda) I(X_1^{(2)},X_2^{(2)};Z^{(2)}).
\end{align*}

Therefore, we have 
\begin{align*}
I(U^{(3)},X_1^{(3)};Z^{(3)}) &\leq \lambda R_1^{(1)} +(1-\lambda) R_1^{(2)},  \\
I(X_1^{(3)},X_2^{(3)};Z^{(3)})  - H(X_1^{(3)}|U^{(3)}) &\leq \lambda R_2^{(1)} +(1-\lambda) R_2^{(2)},  \\
I(X_1^{(3)},X_2^{(3)};Z^{(3)}) &\leq \lambda (R_1^{(1)}+R_2^{(1)} )  +(1-\lambda) (R_1^{(2)}+R_2^{(2)} ).
\end{align*} 
and
\begin{align*}
H(X_1^{(3)}|U^{(3)}) &=\lambda H(X_1^{(1)}|U^{(1)}) + (1-\lambda)H(X_1^{(2)}|U^{(2)})\\
& > \lambda I(U^{(1)},X_1^{(1)};Z^{(1)})+(1-\lambda)I(U^{(2)}X_1^{(2)};Z^{(2)})\\
& \geq I(U^{(3)},X_1^{(3)};Z^{(3)}).
\end{align*}
which implies that $\big(\lambda R_1^{(1)} +(1-\lambda) R_1^{(2)},\lambda R_2^{(1)} +(1-\lambda) R_2^{(2)} \big)$ is inside the achievable region defined by $P^{(3)}_{U,X_1,X_2,Z}$. The convexity of the outer bound is proven similarly but without the entropy constraint $H(X_1|U)>I(U,X_1;Z)$.  

%=====================================================================================================
\section{Proof of Equations \eqref{rho00}-\eqref{rho03} }
\label{appendix:resolv_rates}

For $m_0 \in \llbracket 1,2^{n\rho_0}\rrbracket$, $m_1' \in \llbracket 1,2^{n\rho_1}\rrbracket$ and $m_{2} \in \llbracket 1,2^{n\rho_3}\rrbracket$, let $U^r(m_0)$, $X_1(m_0,m_1',m_1'')$ and $X_2(m_0,m_2)$  denote the random variables representing the randomly generated codewords.

\begin{align}
& \mathbb{E} (   \mathbb{D}(\bar{P}_{Z_b^r,M_1''^{(b)}}||Q_Z^{\otimes r} \bar{P}_{M_1''^{(b)}} )    ) \nonumber\\
%------
& = \mathbb{E} \sum_{m_1'',z_b^r} \bar{P}(m_1'',z_b^r) \log \frac{\bar{P}(m_1'',z_b^r)}{\bar{P}_{M_1''^{(b)}}(m_1'') Q_Z^{\otimes r}(z^r)} \nonumber\\
%------
& = \mathbb{E} \sum_{m_1''} \bar{P}(m_1'') \sum_{z_b^r} \bar{P}(z_b^r|m_1'') \log \frac{\bar{P}(z_b^r|m_1'')}{Q_Z^{\otimes r}(z^r)} \nonumber\\
%------
& = \mathbb{E}  \sum_{m_1''} \bar{P}(m_1'') \sum_{z_b^r} \sum_{i,j,k} \frac{ W^{\otimes r}(z_b^r|U^r(i),X_1^r(i,j,m_1''),X_2^r(i,k))}{2^{r(\rho_0+\rho_1+\rho_3)}} \nonumber\\
&\hspaceonetwocol{1.5in}{0.2in}  \log \sum_{i',j',k'}\frac{  W^{\otimes r}(z_b^r|U^r(i'),X_1^r(i',j',m_1''),X_2^r(i',k'))}{2^{r(\rho_0+\rho_1+\rho_3)}Q_Z^{\otimes r}(z^r)} \nonumber\\
%------
& \stackrel{(a)}{\leq}\frac{1}{2^{r(\rho_0+\rho_1+\rho_3)}} \sum_{m_1''} \bar{P}(m_1'') \sum_{i,j,k} \sum_{z_b^r} \sum_{u^r(i)} \sum_{x_1^r(i,j,m_1'')} \sum_{x_2^r(i,k)} \twocolbreak
\hspaceonetwocol{0in}{0.2in}  \bar{P}(u^r(i),x_1^r(i,j,m_1''),x_2^r(i,k),z^r) \nonumber\\
&\hspaceonetwocol{1.5in}{0.2in}   \log\mathbb{E}_{\setminus (i,j,k)}  \!\! \sum_{i',j',k'} \!\!  \frac{  W^{\otimes r}(z_b^r|U^r(i'),X_1^r(i',j',m_1''),X_2^r(i',k'))}{2^{r(\rho_0+\rho_1+\rho_3)}Q_Z^{\otimes r}(z^r)} \nonumber\\
%------
& \leq \frac{1}{2^{r(\rho_0+\rho_1+\rho_3)}} \sum_{m_1''} \bar{P}(m_1'') \sum_{i,j,k} \sum_{z_b^r} \sum_{u^r(i)} \sum_{x_1^r(i,j,m_1'')} \sum_{x_2^r(i,k)} \twocolbreak
\hspaceonetwocol{0in}{0.2in}  \bar{P}(u^r(i),x_1^r(i,j,m_1''),x_2^r(i,k),z^r) \nonumber\\
&\hspaceonetwocol{1.5in}{0.2in}  \log \mathbb{E}_{\setminus (i,j,k)} \frac{1}{2^{r(\rho_0+\rho_1+\rho_3)}Q_Z^{\otimes r}(z^r)} \twocolbreak
\hspaceonetwocol{0in}{0.2in}  \bigg( W^{\otimes r} (z^r|u^r(i),x_1^r(i,j,m_1''),x_2^r(i,k)) \nonumber\\
&\hspaceonetwocol{1.5in}{0.2in}  + \sum_{\substack{j' \neq j\\ k' \neq k }} W^{\otimes r} (z^r|u^r(i),X_1^r(i,j',m_1''),X_2^r(i,k'))  \nonumber\\
&\hspaceonetwocol{1.5in}{0.2in} + \sum_{k' \neq k} W^{\otimes r} (z^r|u^r(i),x_1^r(i,j,m_1''),X_2^r(i,k'))  \nonumber\\
&\hspaceonetwocol{1.5in}{0.2in}  + \sum_{j' \neq j} W^{\otimes r} (z^r|u^r(i),X_1^r(i,j',m_1''),x_2^r(i,k))  \nonumber\\
&\hspaceonetwocol{1.5in}{0.2in}  + \sum_{\substack{i' \neq i\\j',k'}} W^{\otimes r} (z^r|U^r(i'),X_1^r(i',j',m_1''),X_2^r(i',k'))  \bigg) \nonumber\\
%------
& \leq \frac{1}{2^{r(\rho_0+\rho_1+\rho_3)}} \sum_{m_1''} \bar{P}(m_1'') \sum_{i,j,k} \sum_{z_b^r} \sum_{u^r(i)} \sum_{x_1^r(i,j,m_1'')} \sum_{x_2^r(i,k)} \twocolbreak
\hspaceonetwocol{0in}{0.2in}  \bar{P}(u^r(i),x_1^r(i,j,m_1''),x_2^r(i,k),z^r) \nonumber\\
&\hspaceonetwocol{1.5in}{0.2in}   \log  \frac{1}{2^{r(\rho_0+\rho_1+\rho_3)}Q_Z^{\otimes r}(z^r)}  \twocolbreak
\hspaceonetwocol{0in}{0.2in} \bigg( W^{\otimes r} (z^r|u^r(i),x_1^r(i,j,m_1''),x_2^r(i,k)) \nonumber\\
&\hspaceonetwocol{1.5in}{0.2in}  + \sum_{\substack{j' \neq j\\ k' \neq k }} \bar{P}^{\otimes r} (z^r|u^r(i))   + \sum_{k' \neq k} \bar{P}^{\otimes r} (z^r|u^r(i),x_1^r(i,j,m_1''))  \nonumber\\
&\hspaceonetwocol{1.5in}{0.2in}  + \sum_{j' \neq j}
\bar{P}^{\otimes r} (z^r|u^r(i),x_2^r(i,k))  +1 \bigg)  \nonumber\\
%------
& =\Psi_1+\Psi_2 \nonumber\\
\end{align}
where
\begin{enumerate}[(a)]
\item follows by Jensen's inequality.
\end{enumerate}
\begin{align}
\Psi_1 &\triangleq  \frac{1}{2^{r(\rho_0+\rho_1+\rho_3)}} \sum_{m_1''} \bar{P}(m_1'') \sum_{i,j,k}  \twocolbreak
\hspaceonetwocol{0in}{0.2in} \sum_{(u^r(i),x_1^r(i,j,m_1''),x_2^r(i,k),z_b^r)\in \mathcal{T}_\epsilon^r(P_{U,X_1,X_2,Z})} \nonumber\\
& \hspaceonetwocol{0.2in}{0.2in}\bar{P}(u^r(i),x_1^r(i,j,m_1''),x_2^r(i,k),z^r)  \twocolbreak
\hspaceonetwocol{0in}{0.2in} \log  \frac{1}{2^{r(\rho_0+\rho_1+\rho_3)}Q_Z^{\otimes r}(z^r)}  \nonumber\\
& \hspaceonetwocol{0.2in}{0.2in}\bigg( W^{\otimes r} (z^r|u^r(i),x_1^r(i,j,m_1''),x_2^r(i,k)) \twocolbreak
 \hspaceonetwocol{0in}{0.2in}+ \sum_{\substack{j' \neq j\\ k' \neq k }}   \bar{P}^{\otimes r} (z^r|u^r(i)) \nonumber\\
& \hspaceonetwocol{0.2in}{0.2in}+ \sum_{k' \neq k} \bar{P}^{\otimes r} (z^r|u^r(i),x_1^r(i,j,m_1''))  \twocolbreak 
 \hspaceonetwocol{0in}{0.2in}+ \sum_{j' \neq j} \bar{P}^{\otimes r} (z^r|u^r(i),x_2^r(i,k))  +1 \bigg)  \nonumber\\
%-----
&  \leq \log \Big( \frac{2^{-r(1-\epsilon)H(Z|X_1,X_2)}}{2^{r(\rho_0+\rho_1+\rho_3)}2^{-r(1+\epsilon)H(Z)}} \twocolbreak\hspaceonetwocol{0in}{0.2in}
+ \frac{2^{-r(1-\epsilon)H(Z|U)}}{2^{r\rho_0}2^{-r(1+\epsilon)H(Z)}} 
  +\frac{2^{-r(1-\epsilon)H(Z|U,X_1)}}{2^{r(\rho_0+\rho_1)}2^{-r(1+\epsilon)H(Z)}} \nonumber\\
&\hspaceonetwocol{3.5in}{0.2in}+\frac{2^{-r(1-\epsilon)H(Z|U,X_2)}}{2^{r(\rho_0+\rho_3)}2^{-r(1+\epsilon)H(Z)}} +1 \Big)\nonumber\\
%-----
 &\leq \log \Big( 2^{-r(\rho_0+\rho_1+\rho_3-I(X_1,X_2;Z)-2\epsilon H(Z))} \twocolbreak
\hspaceonetwocol{0in}{0.2in}+ 2^{-r(\rho_0-I(U;Z)-2\epsilon H(Z))}\twocolbreak
\hspaceonetwocol{0in}{0.2in}+ 2^{-r(\rho_0+\rho_1-I(U,X_1;Z)-2\epsilon H(Z))}  \nonumber\\
&\hspaceonetwocol{3.5in}{0.2in}+ 2^{-r(\rho_0+\rho_3-I(U,X_2;Z)-2\epsilon H(Z))} +1 \Big) \nonumber
\end{align}

\begin{align}
&\Psi_2 \triangleq \sum_{m_1''} \bar{P}(m_1'') \sum_i \sum_j \sum_k  \twocolbreak
\hspaceonetwocol{0in}{0.2in}\sum_{(u^r(i),x_1^r(i,j,m_1''),x_2^r(i,k),z_b^r) \notin \mathcal{T}_\epsilon^r(P_{U,X_1,X_2,Z})} \nonumber\\
&\hspaceonetwocol{0.2in}{0.2in}\bar{P}(u^r(i),x_1^r(i,j,m_1''),x_2^r(i,k),z^r) \twocolbreak
\hspaceonetwocol{0in}{0.2in} \log  \frac{1}{2^{r(\rho_0+\rho_1+\rho_3)}Q_Z^{\otimes r}(z^r)}  \nonumber\\
&\hspaceonetwocol{0.2in}{0.2in}\bigg( W^{\otimes r} (z^r|u^r(i),x_1^r(i,j,m_1''),x_2^r(i,k)) \twocolbreak
\hspaceonetwocol{0in}{0.2in} + \sum_{\substack{j' \neq j\\ k' \neq k }}  \bar{P}^{\otimes r} (z^r|u^r(i)) \nonumber\\
&\hspaceonetwocol{0.2in}{0.2in} + \sum_{k' \neq k}  \bar{P}^{\otimes r} (z^r|u^r(i),x_1^r(i,j,m_1''))  + \sum_{j' \neq j} \bar{P}^{\otimes r} (z^r|u^r(i),x_2^r(i,k))  +1 \bigg)  \nonumber \\
%-----
& \leq 2|\mathcal{U}| |\mathcal{X}_1| |\mathcal{X}_2| |\mathcal{Z}| e^{-r \epsilon^2 \mu_{U X_1 X_2 Z}} r\log (\frac{4}{\mu_Z}+1) \nonumber
\end{align}
 where
 \begin{align}
\mu_Z &=  \min_{z \in \mathcal{Z}} Q(z) \nonumber\\
\mu_{UX_1X_2Z} &= \min_{(u,x_1,x_2,z) \in (\mathcal{U},\mathcal{X}_1,\mathcal{X}_2,\mathcal{Z}) } Q(u,x_1,x_2,z) \nonumber
\end{align}

Combining the bounds on $\Psi_1$ and $\Psi_2$, $\mathbb{E} (   \mathbb{D}(\bar{P}_{Z_b^r,M_1''^{(b)}}||Q_Z^{\otimes r} \bar{P}_{M_1''^{(b)}} )    )\xrightarrow[]{r\to \infty} 0$  when \eqref{rho00}-\eqref{rho03} are satisfied.

%=====================================================================================================
\section{Achievability Proofs of the Strong Secrecy Region for \ac{MAC} with Cribbing}

\subsection{Achievability: Strong Secrecy of \ac{MAC} with Degraded Message Sets }
\label{appendix:mac_degraded}

Consider a distribution $P(x_1,x_2)=P(x_1)P(x_2|x_1)$ such that $\sum_{x_1,x_2} P(x_1,x_2) W(z|x_1,x_2)=Q_Z(z)$.

\textbf{Code Construction:}
\begin{itemize}
\item Independently generate $2^{n(R_1+R'_1)}$ codewords $x_1^n$ each with probability $P(X_1^n)=P_{X_1}^{\otimes n}(x_1^n)$. Label them $x_1^n(m_1,m'_1)$, $m_1 \in \llbracket 1,2^{nR_1} \rrbracket$ and $m'_1 \in \llbracket 1,2^{nR'_1}\rrbracket$.
\item For every $x_1^n(m_1,m_1')$, independently generate $2^{n(R_2+R'_2)}$ codewords $x_2^n$ each with probability $P(x_2^n|x_1^n(m_1,m_1'))=P_{X_2|X_1}^{\otimes n}(x_2^n|x_1^n(m_1,m_1'))$. Label them $x_{2}^n(m_1,m_1',m_2,m_2')$,  $m_2 \in \llbracket 1,2^{nR_2}\rrbracket$ and $m'_2 \in \llbracket 1,2^{nR'_2}\rrbracket$.
%\item Uniformly distributed random variables $M_1,M_1',M_2,M_2'$ correspond to indices $m_1,m_1',m_2,m_2'$ respectively.
\end{itemize}

\textbf{Encoding:}
 To send $m_1$, Encoder~1 transmits $x_1^n(m_1,m'_1)$. To send $m_2$, Encoder~2 cooperatively sends $x_2^n(m_1,m'_1,m_2,m'_2)$. $m'_1$ and $m'_2$ are independently chosen
at random from $\llbracket 1,2^{nR'_1}\rrbracket$ and $\llbracket 1,2^{nR'_2}\rrbracket$ respectively.

\textbf{Decoding:}
 The decoder finds $(m_1,m'_1,m_2,m'_2)$ such that \[(x_1^n(m_1,m'_1), x_2^n(m_1,m'_1,m_2,m'_2), y^n) \in \mathcal{T}_{\epsilon}^{(n)}(P_{X_1,X_2,Y}).
 \]
 
 \textbf{Probability of error analysis:}
 Using standard arguments, the probability of error averaged over all codebooks vanishes exponentially with $n$ if
\begin{align}
R_2+R'_2 &< I(X_2;Y|X_1) \label{eq70}\\
R_1+R'_1+R_2+R'_2 &< I(X_1,X_2;Y) \label{eq71}
\end{align} 

\textbf{Secrecy analysis:}
We will show that the information leakage, averaged over all codebooks vanishes, exponentially with $n$. We use the results of Theorem~\ref{th_degraded} to bound $\mathbb{E}_{M_1,M_2}[\mathbb{D}(P_{Z^n|M_1,M_2}||Q_Z^{\otimes n})]$ such that the channel output distribution at the wiretapper is, on average, independent of the transmitted messages and follows the i.i.d distribution $Q_Z^{\otimes n}$. This is sufficient to ensure secrecy because $I(M_1,M_2;Z^n)$ can be bounded by $\mathbb{E}_{M_1,M_2}[\mathbb{D}(P_{Z^n|M_1,M_2}||Q_Z^{\otimes n})]$, as follows:

\begin{align}
& I(M_1,M_2;Z^n) \twocolbreak
\hspaceonetwocol{0in}{0.2in} =\mathbb{D}(P_{M_1,M_2,Z^n}||P_{M_1,M_2}P_{Z^n}) \\
& \hspaceonetwocol{1in}{0.2in} = \sum_{m_1,m_2,z^n} P_{M_1,M_2,Z^n} (m_1,m_2,z^n) \twocolbreak
\hspaceonetwocol{0in}{0.4in}\log \frac{P_{M_1,M_2,Z^n} (m_1,m_2,z^n) }{P_{M_1,M_2}(m_1,m_2)P_{Z^n}(z^n)}\\
& \hspaceonetwocol{1 in}{0.2in} = \sum_{m_1,m_2} P_{M_1,M_2}(m_1,m_2) \mathbb{D}( P_{Z^n|M_1,M_2} || P_{Z^n} ) \label{eqqa}\\
& \hspaceonetwocol{1in}{0.2in} \stackrel{(a)}{\leq} \mathbb{E}_{M_1,M_2} \left( \mathbb{D} (P_{Z^n|M_1,M_2} || Q_Z^{\otimes n}) \right) \;, \label{eqa}
\end{align}
where (a) follows by adding $\mathbb{D}(P_{Z^n}||Q_Z^{\otimes n}) \geq 0$ to \eqref{eqqa}. With $P_{Z|M_1M_2}(z|m_1,m_2)=2^{-n(R_1'+R_2')}\sum_{i,j} W^{\otimes n}(z^n|x_1(m_1,i),x_2(m_1,i,m_2,j))$ and applying Theorem~\ref{th_degraded} to (\ref{eqa}), $I(M_1,M_2;Z^n) $ vanishes exponentially with $n$ if

\begin{align}
R'_1 &> I(X_1;Z) \label{eq76}\\
R'_1+R'_2 &> I(X_1,X_2;Z) \label{eq77}
\end{align}

Combining \eqref{eq70}, \eqref{eq71}, \eqref{eq76} and \eqref{eq77}, and using Fourier-Motzkin elimination, the following rate region is achievable
\begin{align}
    R_2&<I(X_2;Z|X_1), \\
    R_1+R_2&<I(X_1,X_2;Z).
\end{align}

%=====================================================================================================

\subsection{Achievability: Strong Secrecy of \ac{MAC} with Non-Causal Cribbing }
\label{appendix:mac_noncausal}

Consider a distribution  $P(x_1,x_2)=P(x_1)P(x_2|x_1)$ such that $\sum_{x_1,x_2} P(x_1,x_2) W(z|x_1,x_2)=Q_Z(z)$.

\textbf{Code Construction:}
\begin{itemize}
\item Independently generate $2^{n(R_1+R'_1)}$ codewords $x_1^n$ each with probability $P(X_1^n)=P_{X_1}^{\otimes n}(x_1^n)$. Label them $x_1^n(m_1,m'_1)$, $m_1 \in \llbracket 1,2^{nR_1}\rrbracket$ and $m'_1 \in \llbracket 1,2^{nR'_1}\rrbracket$.
\item For every $x_1^n(m_1,m'_1)$, independently generate $2^{n(R_2+R'_2)}$ codewords $x_2^n$ each with probability $P(x_2^n|x_1^n(m_1,m'_1))=P_{X_2|X_1}^{\otimes n}(x_2^n|x_1^n(m_1,m'_1))$. Label them $x_{2}^n(x_1^n(m_1,m'_1),m_2,m'_2)$,  $m_2 \in\llbracket 1,2^{nR_2}\rrbracket$ and $m'_2 \in \llbracket 1,2^{nR'_2}\rrbracket$.
\end{itemize}

We assume that each message is chosen independently and uniformly from its corresponding set.
As a result of cribbing, Encoder~2 knows $x_1^n$ in advance, therefore before transmission, it finds $(\hat{m}_1,\hat{m}_1')$ such that $(x_1^n(\hat{m}_1,\hat{m}_1'),x_1^n) \in \mathcal{T}_{\epsilon}^{(n)}(P_{X_1,X_1})$ where $(\hat{m}_1,\hat{m}_1')$ are the estimates of $(m_1,m_1')$.

\textbf{Encoding:}
To send $m_1$, Encoder~1 sends $x_1^n(m_1,m_1')$. To send $m_2$, Encoder~2 cooperatively sends $x_2^n(x_1^n(\hat{m}_1,\hat{m}_1'),m_2,m_2')$. 

\textbf{Decoding at the receiver:}
 The decoder finds $(\hat{\hat{m}}_1,\hat{\hat{m}}'_1,\hat{\hat{m}}_2,\hat{\hat{m}}'_2)$ such that $(x_1^n(\hat{\hat{m}}_1,\hat{\hat{m}}_1'),x_2^n(x_1^n(\hat{\hat{m}}_1,\hat{\hat{m}}_1'),\hat{\hat{m}}_2,\hat{\hat{m}}_2'),y^n) \in \mathcal{T}_{\epsilon}^{(n)}(P_{X_1,X_2,Y})$.

\textbf{Probability of error analysis and secrecy analysis:}
We follow similar steps as the cognitive \ac{MAC} case. Using standard arguments to bound the probability of error and using the resolvability results of Theorem~\ref{th_noncausal} we get
\begin{align}
R_1+R'_1 &< H(X_1) \label{eqw}\\
R_2+R'_2 &< I(X_2;Y|X_1)\\
R_1+R'_1+R_2+R'_2 &< I(X_1,X_2;Y)\\
R'_1 &> I(X_1;Z)\\
R'_2 &> I(X_1,X_2;Z)-H(X_1) \label{eqx}\\
R'_1+R'_2 &> I(X_1,X_2;Z) \label{eqy}
\end{align} 
Note that (\ref{eqx}) is implied by (\ref{eqw}) and (\ref{eqy}). Using Fourier-Motzkin elimination, the following rate region is achievable
\begin{align}
R_1 &< H(X_1) -I(X_1;Z), \\
R_2 &< I(X_2;Y|X_1),\\
R_1+R_2 &< I(X_1,X_2;Y)-I(X_1,X_2;Z). 
\end{align} 

%=====================================================================================================

\subsection{Achievability: Strong Secrecy of \ac{MAC} with Strictly-Causal Cribbing }
\label{appendix:mac_strictlycausal}

\begin{figure*}
\centering
\includegraphics[width=0.9\textwidth]{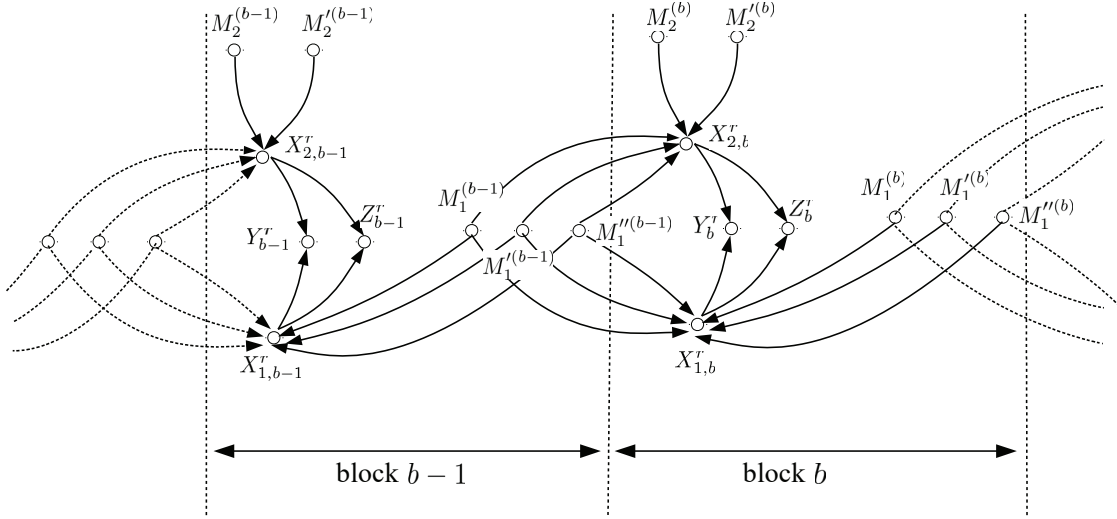} 
\caption{Functional dependence graph for the block-Markov encoding scheme for the \ac{MAC} with strictly-causal cribbing.}
\label{fig:fdg_secrecy}
\end{figure*}

The interesting challenge being addressed in this section is that the decoding at output $Y$ and secrecy at output $Z$ have dissonant requirements under strictly causal cribbing. For decoding, Willems and van der Muelen~\cite{cribbingmeulen} proposed a block-Markov superposition coding technique where all the information carried by the cribbing signal is used as cloud centers for the purpose of cooperation between the two encoders. The tightness of the inner and outer bounds in~\cite{cribbingmeulen}  strongly suggests (via continuity arguments in joint probability distributions) that leaving out any part of cribbing information from cooperation can incur a rate loss for decoding at $Y$. On the other hand, the resolvability results of this paper strongly suggest that for simulating a desired probability distribution at $Z$, it is beneficial to have a local randomness component at $X_1$ that does not take part in cooperation with $X_2$. The contribution of this section is to produce a coding strategy that reconciles these two dissonant requirements.

We begin by  informally describing the main idea of this section with a simplified notation. The codebook for $X_1$ is driven by three variables: $(M_1,M_1',M_1'')$. $M_1$ is the secret message, and $M_1',M_1''$ are uniformly distributed dithers. The encoder for $X_2$ will decode the cribbing signal $X_1$ and use all its three components as cloud center for the next transmission, which we call $M_0,M_0',M_0''$. This, as mentioned earlier, is crucial for decoding at $Y$. Now we introduce an additional constraint (enforced by proper assignment of rates) so that $Z$ is independent of $M_0'$, one of the dither components of $X_1$. Thus, as far as the distribution of $Z$ is concerned, one of the two dither components of $X_1$ is local (private) to $X_1$ and is not used by $X_2$. To elaborate further, due to the imposed independence, the cloud centers that only differ in their $M_0'$ index must give rise to the same distribution in $Z$, therefore Encoder 2 has been effectively enjoined from cooperation or coordination with part of the dither of Encoder~1, which for all practical purposes becomes local to $X_1$ as far as the eavesdropper is concerned. We now make these ideas precise in full detail, which includes direct reference to block indices as well as accounting for discrepancies between message/dither indices and their estimated values.

We use a combination of block-Markov encoding and backward decoding. Independently and uniformly distributed messages $m_{1}^{(b)} \in \llbracket 1,2^{rR_1}\rrbracket$ and $m_{2}^{(b)} \in \llbracket 1,2^{rR_2}\rrbracket$  will be sent over $B$ blocks. Each block consists of $r$ transmissions so that $n=rB$. 
Consider a distribution $P(u,x_1,x_2)=P(u)P(x_1|u)P(x_2|u)$ such that $\sum_{u,x_1,x_2} P(u,x_1,x_2) W(z|x_1,x_2)=Q_Z(z)$. 

\textbf{Code Construction:}
In each block $b \in \llbracket 1,B \rrbracket$:
\begin{itemize}
\item Independently generate $2^{r(R_1+\rho_1'+\rho_1'')}$ codewords $u_b^r$ each with probability $P(u^r)=P_{U}^{\otimes r}(u^r)$. Label them $u^r(m_{0}^{(b)},m_{0}'^{(b)},m_{0}''^{(b)})$, $m_{0}^{(b)} \in \llbracket 1,2^{rR_1}\rrbracket$, $m_{0}'^{(b)} \in \llbracket 1,2^{r\rho_1'}\rrbracket$ and $m_{0}''^{(b)} \in \llbracket 1,2^{r\rho_1''}\rrbracket$.
\item For every $u^r(m_{0}^{(b)},m_{0}'^{(b)},m_{0}''^{(b)})$, independently generate $2^{r(R_1+\rho_1'+\rho_1'')}$ codewords $x_{1b}^r$ each with probability $P(x_1^r|u^r(m_{0}^{(b)},m_{0}'^{(b)},m_{0}''^{(b)}))=P_{X_1|U}^{\otimes r}(x_1^r|u^r(m_{0}^{(b)},m_{0}'^{(b)},m_{0}''^{(b)}))$. Label them $x_{1}^r(m_{0}^{(b)},m_{0}'^{(b)},m_{0}''^{(b)},m_{1}^{(b)},m_{1}'^{(b)},m_{1}''^{(b)})$,  $m_{1}^{(b)} \in \llbracket 1,2^{rR_1}\rrbracket$, $m_{1}'^{(b)} \in \llbracket 1,2^{r\rho_1'}\rrbracket$ and $m_{1}''^{(b)} \in \llbracket 1,2^{r\rho_1''}\rrbracket$.
\item For every $u^r(m_{0}^{(b)},m_{0}'^{(b)},m_{0}''^{(b)})$, independently generate 
$2^{r(R_2+\rho_2)}$ codewords $x_{2b}^r$ each with probability $P(x_2^r|u^r(m_{0}^{(b)},m_{0}'^{(b)},m_{0}''^{(b)}))=P_{X_2|U}^{\otimes r}(x_2^r|u^r(m_{0}^{(b)},m_{0}'^{(b)},m_{0}''^{(b)}))$.
Label them $x_{2}^r(m_{0}^{(b)},m_{0}'^{(b)},m_{0}''^{(b)},m_{2}^{(b)},m_{2}'^{(b)})$,  $m_{2}^{(b)} \in \llbracket 1,2^{rR_2}\rrbracket$ and $m_{2}'^{(b)} \in \llbracket 1,2^{r\rho_2}\rrbracket$.
\end{itemize}

%In every block $b$, we assume each message is chosen independently and uniformly from its corresponding set. We establish the conditions for $R_1$, $\rho_1'$, $\rho_1''$, $R_2$ and $\rho_2$ such that:
We intend to use these codebooks in the following manner:
\begin{enumerate}
    \item Block Markov encoding via $M_0^{(b)}=M_1^{(b-1)}$, $M_0'^{(b)}=M_1'^{(b-1)}$ and $M_0''^{(b)}=M_1''^{(b-1)}$; 
    \item $M_1^{(b)}$, $M_1'^{(b)}$ and $M_1''^{(b)}$ can be decoded from $X_{1b}^r$ knowing $(M_0^{(b)},M_0'^{(b)},M_0''^{(b)})$;
    \item \{$M_1^{(1)},\dots,M_1^{(B)}\}$ and $\{M_2^{(1)},\dots,M_2^{(B)}\}$ are secret from $\{Z_1^r,\dots,Z_B^r$\};
    \item $M_1''^{(b)}$ is the common randomness to be used by both encoders in block $b+1$;
    \item $M_1'^{(b)}$ is local randomness used by Encoder~1 and $M_2'^{(b)}$ is local randomness used by Encoder~2;
    \item The messages $M_0^{(b)}$, $M_0'^{(b)}$, $M_0''^{(b)}$, $M_1^{(b)}$, $M_1'^{(b)}$, $M_1''^{(b)}$, $M_2^{(b)}$ and $M_2'^{(b)}$ can be decoded at the receiver from $Y_b^r$ and the messages decoded in future blocks $b+1$ to $B$ (backward decoding).
\end{enumerate}

As a result of cribbing, after block $b$, Encoder~2 finds estimates $(\hat{m}_1^{(b)},\hat{m}_1'^{(b)},\hat{m}_1''^{(b)} )$ for $(m_1^{(b)},m_1'^{(b)},m_1''^{(b)} )$ such that 
\begin{align}
(u^r(\hat{m}_0^{(b)},\hat{m}_0'^{(b)},\hat{m}_0''^{(b)}),x_{1}^r(\hat{m}_0^{(b)},\hat{m}_0'^{(b)},\hat{m}_0''^{(b)},\hat{m}_1^{(b)},\hat{m}_1'^{(b)},\hat{m}_1''^{(b)}),x_{1b}^r) \linesplit
\in \mathcal{T}_{\epsilon}^{(r)}(P_{U,X_1,X_1}).
\end{align} 
where $(\hat{m}_0^{(b)},\hat{m}_0'^{(b)},\hat{m}_0''^{(b)})=(\hat{m}_1^{(b-1)},\hat{m}_1'^{(b-1)},\hat{m}_1''^{(b-1)})$.

\textbf{Encoding:}
We apply block-Markov encoding as follows. 
In block $b$, the encoders send:
\begin{align*}
 x_{1b}^r &= x_1^m(m_0^{(b)},m_0'^{(b)},m_0''^{(b)},m_1^{(b)},m_1'^{(b)},m_1''^{(b)}) \\
 x_{2b}^r &= x_2^m(\hat{m}_0^{(b)},\hat{m}_0'^{(b)},\hat{m}_0''^{(b)},m_2^{(b)},m_2'^{(b)})
\end{align*}
where $(m_0^{(b)},m_0'^{(b)},m_0''^{(b)})=(m_1^{(b-1)},m_1'^{(b-1)},m_1''^{(b-1)})$ and $(\hat{m}_0^{(b)},\hat{m}_0'^{(b)},\hat{m}_0''^{(b)})=(\hat{m}_1^{(b-1)},\hat{m}_1'^{(b-1)},\hat{m}_1''^{(b-1)})$. We also assume that the encoders and decoder have access to $(M_0^{(1)},M_0'^{(1)},M_0''^{(1)},M_1^{(B)},M_1'^{(B)},M_1''^{(B)},M_2^{(B)},M_2'^{(B)})$ through private common randomness.

 \textbf{Decoding at the receiver:} 
 The legitimate receiver waits until all $B$ blocks are transmitted and then performs backward decoding. The decoder first finds $(\hat{\hat{m}}_0^{(B)},\hat{\hat{m}}_0'^{(B)},\hat{\hat{m}}_0''^{(B)})$ such that 
\begin{align*}
(u^r(\hat{\hat{m}}_0^{(B)},\hat{\hat{m}}_0'^{(B)},\hat{\hat{m}}_0''^{(B)}),x^r_1(\hat{\hat{m}}_0^{(B)},\hat{\hat{m}}_0'^{(B)},\hat{\hat{m}}_0''^{(B)},\hat{\hat{m}}_1^{(B)},\hat{\hat{m}}_1'^{(B)},\hat{\hat{m}}_1''^{(B)}),\nonumber\\
x^r_2(\hat{\hat{m}}_0^{(B)},\hat{\hat{m}}_0'^{(B)},\hat{\hat{m}}_0''^{(B)},\hat{\hat{m}}_2^{(B)},\hat{\hat{m}}_2'^{(B)}),
 y^r_B)\in \mathcal{T}_{\epsilon}^{(r)}(P_{U,X_1,X_2,Y}).
\end{align*}
Assuming that $(m_0^{(B)},m_0'^{(B)},m_0''^{(B)})$, $(m_0^{(B-1)},m_0'^{(B-1)},m_0''^{(B-1)})$, \ldots, $(m_0^{(b+1)},m_0'^{(b+1)},m_0''^{(b+1)})$ have been decoded, the decoder sets $(\hat{\hat{m}}_1^{(b)},\hat{\hat{m}}_1'^{(b)},\hat{\hat{m}}_1''^{(b)})=(\hat{\hat{m}}_0^{(b+1)},\hat{\hat{m}}_0'^{(b+1)},\hat{\hat{m}}_0''^{(b+1)})$ and finds $(\hat{\hat{m}}_0^{(b)},\hat{\hat{m}}_0'^{(b)},\hat{\hat{m}}_0''^{(b)})$ and $(\hat{\hat{m}}_2^{(b)},\hat{\hat{m}}_2'^{(b)})$ such that
\begin{align*}
(u^r(\hat{\hat{m}}_0^{(b)},\hat{\hat{m}}_0'^{(b)},\hat{\hat{m}}_0''^{(b)}),x^r_1(\hat{\hat{m}}_0^{(b)},\hat{\hat{m}}_0'^{(b)},\hat{\hat{m}}_0''^{(b)},\hat{\hat{m}}_1^{(b)},\hat{\hat{m}}_1'^{(b)},\hat{\hat{m}}_1''^{(b)}),\nonumber\\
x^r_2(\hat{\hat{m}}_0^{(b)},\hat{\hat{m}}_0'^{(b)},\hat{\hat{m}}_0''^{(b)},\hat{\hat{m}}_2^{(b)},\hat{\hat{m}}_2'^{(b)}),y^r_b) \in \mathcal{T}_{\epsilon}^{(r)}(P_{U,X_1,X_2,Y}).
\end{align*}

 \textbf{Probability of error analysis:}
  Using the arguments for error analysis from~\cite[Lemma 4]{permuter2}, the probability of error of each block vanishes exponentially with $r$ and in turn vanishes across blocks  if 
     \begin{align}
 R_1+\rho_1'+\rho_1'' &< H(X_1|U), \label{mika11}\\
 R_2+\rho_2 &< I(X_2;Y|X_1,U),\\
 R_1+\rho_1'+\rho_1''+R_2+\rho_2&< I(X_1,X_2;Y). \label{mika22}
 \end{align}

\textbf{Secrecy analysis:}
Let $\bar{P}$ be the probability induced when  both encoders use $(M_0^{(b)},M_0'^{(b)},M_0''^{(b)})$. Let $P$ be the probability when Encoder~1 uses $(M_0^{(b)},M_0'^{(b)},M_0''^{(b)})$ and Encoder~2 uses the estimate $(\hat{M}_0^{(b)},\hat{M}_0'^{(b)},\hat{M}_0''^{(b)})$. For the secrecy analysis we find conditions so that $I(M_0^{(b)},M_1^{(b)},M_1'^{(b)},M_1''^{(b)},\hat{M}_0^{(b)},\hat{M}_0'^{(b)},M_2^{(b)};Z_b^r)$ vanishes exponentially with $r$. This is motivated by: 
%This bound on per-block information leakage is then used to establish strong secrecy for the composite codeword that is transmitted over $B$ blocks. We desire $(M_0^{(b)},M_1'^{(b)},M_1''^{(b)},\hat{M}_0^{(b)},\hat{M}_0'^{(b)})$ to be independent of $Z_b^r$ because
\begin{itemize}
\item
$(M_1^{(b)}, M_0^{(b)}, \hat{M}_0^{(b)}, M_2^{(b)})$ are the Encoder~1 secret message in the present and the past, the estimate of the latter (at Encoder~2), and Encoder~2 secret message, which must be kept secret from $Z_b^r$, obviously.
\item
$(M_1^{(b)},M_1'^{(b)},M_1''^{(b)})$ must be kept independent of $Z_b^r$ according to the functional dependence graph (Fig.~\ref{fig:fdg_secrecy}) to ensure the distribution of $Z$ remains i.i.d. across blocks
\item
$\hat{M}_0'^{(b)}$ is kept independent from $Z_b^r$ to allow Encoder~1 to possess a {\em local} randomness that is separate from the common randomness shared with Encoder~2: Resolvability analysis showed us that having a local randomness at Encoder~1 can be beneficial for achievable rates.
% \item $M_0^{(b)}$ and $\hat{M}_0^{(b)}$ carry the secret message of User~1 from block $b-1$, which must be protected;
% \item  $\hat{M}_0'^{(b)},\hat{M}_1'^{(b)},\hat{M}_1''^{(b)}$ carry local and common randomness, used for secrecy encoding, from two consecutive blocks. Their independence from $Z_b^r$ will ensure that the block Markov scheme has not introduced dependence between blocks of $Z$.
% %\item $\hat{M}_0'^{(b)}$ carries the local randomness of User~1 in the current block and we want to make sure it is distinguishable from the common randomness used by both users, so we don't allow User~2 to randomize with $\hat{M}_0'^{(b)}$;
%\item $\hat{M}_1'^{(b)}$ carries the local randomness of User~1 in the next block;
%\item $\hat{M}_1''^{(b)}$ carries the common randomness of both users in the next block and we want to keep the cooperation between the two users secret from $Z_b^r$.
\end{itemize}

Let $\bar{I}(\cdot ; \cdot)$ be the mutual information according to $\bar{P}$
\begin{align}
\bar{I}(M_0^{(b)}&,M_1^{(b)},M_1'^{(b)},M_1''^{(b)},\hat{M}_0^{(b)},\hat{M}_0'^{(b)},M_2^{(b)};Z_b^r)\nonumber\\
& = \mathbb{D}(\bar{P}_{M_0^{(b)} M_1^{(b)} M_1'^{(b)} M_1''^{(b)} \hat{M}_0^{(b)} \hat{M}_0'^{(b)} M_2^{(b)} Z_b^r}||\bar{P}_{M_0^{(b)} M_1^{(b)} M_1'^{(b)} M_1''^{(b)} \hat{M}_0^{(b)} \hat{M}_0'^{(b)} M_2^{(b)}} \bar{P}_{Z_b^r})\nonumber\\
& \leq \mathbb{D}(\bar{P}_{M_0^{(b)} M_1^{(b)} M_1'^{(b)} M_1''^{(b)} \hat{M}_0^{(b)} \hat{M}_0'^{(b)} M_2^{(b)} Z_b^r}||\bar{P}_{M_0^{(b)} M_1^{(b)} M_1'^{(b)} M_1''^{(b)} \hat{M}_0^{(b)} \hat{M}_0'^{(b)} M_2^{(b)}} Q_Z^{\otimes r})\label{eq:mando}
\end{align}
 
 $\mathbb{D}(\bar{P}_{M_0^{(b)} M_1^{(b)} M_1'^{(b)} M_1''^{(b)} \hat{M}_0^{(b)} \hat{M}_0'^{(b)} M_2^{(b)} Z_b^r}||\bar{P}_{M_0^{(b)} M_1^{(b)} M_1'^{(b)} M_1''^{(b)} \hat{M}_0^{(b)} \hat{M}_0'^{(b)} M_2^{(b)}} Q_Z^{\otimes r})$ can be shown, similar to Appendix~\ref{appendix:resolv_rates}, to vanish exponentially with $r$ if:
\begin{align}
 \rho_1'' &>I(U;Z), \label{eq:zoka1}\\
 \rho_1'+\rho_1'' &> I(U,X_1;Z),\\
 \rho_1'+\rho_1''+\rho_2 &> I(X_1,X_2;Z),\\
 \rho_1''+\rho_2 &>I(U,X_2;Z) \label{eq:zoka2}. 
\end{align}

Define $M^{(a:b)}=\{ M^{(a)}, \dots, M^{(b)} \}$ and $Z^{(1:b),r}=\{ Z_1^r, \dots, Z_b^r \}$.
%\small 
\begin{align}
 & \bar{I}(M_1^{(1:b)},M_2^{(1:b)};Z^{(1:b),r})\nonumber\\
 &\leq \bar{I}(M_0^{(b)},M_1^{(1:b)},M_1'^{(b)},M_1''^{(b)},\hat{M}_0^{(b)},\hat{M}_0'^{(b)},M_2^{(1:b)};Z^{(1:b),r})\label{mika1}\\
  &= \bar{I}(M_0^{(b)},M_1^{(1:b)},M_1'^{(b)},M_1''^{(b)},\hat{M}_0^{(b)},\hat{M}_0'^{(b)},M_2^{(1:b)};Z_b^r)\nonumber\\
  &\hspaceonetwocol{0.2in}{0.2in} + \bar{I}(M_0^{(b)},M_1^{(1:b)},M_1'^{(b)},M_1''^{(b)},\hat{M}_0^{(b)},\hat{M}_0'^{(b)},M_2^{(1:b)};Z^{(1:b-1),r}|Z_b^r)\\
 &= \bar{I}(M_0^{(b)},M_1^{(b)},M_1'^{(b)},M_1''^{(b)},\hat{M}_0^{(b)},\hat{M}_0'^{(b)},M_2^{(b)};Z_b^r)\nonumber\\
 &\hspaceonetwocol{0.2in}{0.2in} + \bar{I}(M_1^{(1:b-1)},M_2^{(1:b-1)};Z_b^r|M_0^{(b)},M_1^{(b)},M_1'^{(b)},M_1''^{(b)},\hat{M}_0^{(b)},\hat{M}_0'^{(b)},M_2^{(b)})\nonumber\\
 &\hspaceonetwocol{0.2in}{0.2in} +
 \bar{I}(M_0^{(b)},M_1^{(1:b)},M_1'^{(b)},M_1''^{(b)},\hat{M}_0^{(b)},\hat{M}_0'^{(b)},M_2^{(1:b)};Z^{(1:b-1),r}|Z_b^r)\\
 &\leq \bar{I}(M_0^{(b)},M_1^{(b)},M_1'^{(b)},M_1''^{(b)},\hat{M}_0^{(b)},\hat{M}_0'^{(b)},M_2^{(b)};Z_b^r)\nonumber\\
 &\hspaceonetwocol{0.2in}{0.2in} + \bar{I}(M_1^{(1:b-1)},M_2^{(1:b-1)};M_1''^{(b-1)},Z_b^r|M_0^{(b)},M_1^{(b)},M_1'^{(b)},M_1''^{(b)},\hat{M}_0^{(b)},\hat{M}_0'^{(b)},M_2^{(b)})\nonumber\\
 &\hspaceonetwocol{0.2in}{0.2in} +
 \bar{I}(M_0^{(b)},M_1^{(1:b)},M_1'^{(b)},M_1''^{(b)},\hat{M}_0^{(b)},\hat{M}_0'^{(b)},M_2^{(1:b)};Z^{(1:b-1),r}|Z_b^r)\\
 &\stackrel{(a)}= \bar{I}(M_0^{(b)},M_1^{(b)},M_1'^{(b)},M_1''^{(b)},\hat{M}_0^{(b)},\hat{M}_0'^{(b)},M_2^{(b)};Z_b^r)\nonumber\\
 &\hspaceonetwocol{0.04in}{0.2in} +\! \bar{I}(M_1^{(1:b-1)},M_2^{(1:b-1)};Z_b^r|M_0^{(b)},M_1^{(b)},M_1'^{(b)},M_1''^{(b)},\hat{M}_0^{(b)},\hat{M}_0'^{(b)},M_2^{(b)},M_1^{(b-1)},M_1'^{(b-1)},M_1''^{(b-1)})\nonumber\\
 &\hspaceonetwocol{0.05in}{0.2in} +
 \bar{I}(M_0^{(b)},M_1^{(1:b)},M_1'^{(b)},M_1''^{(b)},\hat{M}_0^{(b)},\hat{M}_0'^{(b)},M_2^{(1:b)};Z^{(1:b-1),r}|Z_b^r)\\
 &\stackrel{(b)}{\leq}  2^{-\alpha r}+\bar{I}(M_0^{(b)},M_1^{(1:b)},M_1'^{(b)},M_1''^{(b)},\hat{M}_0^{(b)},\hat{M}_0'^{(b)},M_2^{(1:b)};Z^{(1:b-1),r}|Z_b^r)\\
 &\leq  2^{-\alpha r} \nonumber\\
  &\hspaceonetwocol{0.04in}{0.1in}  +\!\! \bar{I}(\!M_0^{(b)}\!,\!M_1^{(1:b)}\!,\!M_1'^{(b)}\!,\!M_1''^{(b)}\!,\!\hat{M}_0^{(b)}\!,\!\hat{M}_0'^{(b)}\!,\!M_2^{(1:b)}\!,Z_b^r,\!M_0^{(b-1)}\!,\!M_1'^{(b-1)}\!,\!M_1''^{(b-1)}\!,\!\hat{M}_0^{(b-1)}\!,\hat{M}_0'^{(b-1)};\!Z^{(1:b-1),r}\!) \\
 &\stackrel{(c)}{=} 2^{-\alpha r}
  +\bar{I}(M_0^{(b-1)},M_1^{(1:b-1)},M_1'^{(b-1)},M_1''^{(b-1)},\hat{M}_0^{(b-1)},\hat{M}_0'^{(b-1)},M_2^{(1:b-1)};Z^{(1:b-1),r}) \label{mika2}\\
  &\stackrel{(d)}{\leq} b \times  2^{-\alpha r} \nonumber\\ 
\end{align}
Therefore $\bar{I}(M_1,M_2;Z^n)\leq B \times 2^{-\alpha r}$ where, 
\begin{enumerate}[(a)]
    \item holds because $M_0^{(b)}=\hat{M}_0^{(b)}=M_1^{(b-1)}$, $\hat{M}_0'^{(b)}=M_1'^{(b-1)}$ and $M_1''^{(b-1)}$ is independent of $(M_1^{(1:b-1)},M_2^{(1:b-1)})$ by construction;
    \item holds because $\bar{I}(M_0^{(b)},M_1^{(b)},M_1'^{(b)},M_1''^{(b)},\hat{M}_0^{(b)},\hat{M}_0'^{(b)},M_2^{(b)};Z_b^r)\leq 2^{-\alpha r}$ by \eqref{eq:mando}-\eqref{eq:zoka2} and $M_1^{(1:b-1)},M_2^{(1:b-1)}\rightarrow M_0^{(b)},M_1^{(b)},M_1'^{(b)},M_1''^{(b)},\hat{M}_0^{(b)},\hat{M}_0'^{(b)},M_2^{(b)},M_1^{(b-1)},M_1'^{(b-1)},M_1''^{(b-1)}\rightarrow Z_b^r$ (see Fig.~\ref{fig:fdg_secrecy});
    \item holds because $M_0^{(b)},M_1^{(b)},M_1'^{(b)},M_1''^{(b)},\hat{M}_0^{(b-1)},\hat{M}_0'^{(b)},M_2^{(b)},Z_b^r\rightarrow  M_0^{(b-1)},M_1^{(1:b-1)},M_1'^{(b-1)},M_1''^{(b-1)},\hat{M}_0^{(b-1)},\hat{M}_0'^{(b-1)},M_2^{(1:b-1)}  \rightarrow Z^{(1:b-1),r}$ (see Fig.~\ref{fig:fdg_secrecy}).
    \item holds by repeating \eqref{mika1}-\eqref{mika2} $b-1$ times.
\end{enumerate}

Next we show that $I(M_1^{(1:b)},M_2^{(1:b)};Z^{(1:b),r})$ is not too different from $\bar{I}(M_1^{(1:b)},M_2^{(1:b)};Z^{(1:b),r})$. 
\begin{align}
&I(M_1^{(1:b)},M_2^{(1:b)};Z^{(1:b),r}) \nonumber\\
&\hspaceonetwocol{0.2in}{0.1in} = \mathbb{D}(P_{M_1^{(1:b)} M_2^{(1:b)} Z^{(1:b),r}} || P_{M_1^{(1:b)} M_2^{(1:b)}} P_{Z^{(1:b),r}}) \nonumber\\
&\hspaceonetwocol{0.2in}{0.1in} \stackrel{(a)}{\leq} \mathbb{D}(P_{M_1^{(1:b)} M_2^{(1:b)} Z^{(1:b),r}} || P_{M_1^{(1:b)} M_2^{(1:b)}} Q_Z^{\otimes br}) \nonumber\\
&\hspaceonetwocol{0.2in}{0.1in}=\sum_{m_1^{(1:b)},m_2^{(1:b)},z^{(1:b),r}} P(m_1^{(1:b)},m_2^{(1:b)},z^{(1:b),r}) \log \frac{P(m_1^{(1:b)},m_2^{(1:b)},z^{(1:b),r})}{\bar{P}(m_1^{(1:b)},m_2^{(1:b)},z^{(1:b),r})}\nonumber\\
&\hspaceonetwocol{0.6in}{0.15in}+ \sum_{m_1^{(1:b)},m_2^{(1:b)},z^{(1:b),r}} P(m_1^{(1:b)},m_2^{(1:b)},z^{(1:b),r}) \log \frac{\bar{P}(m_1^{(1:b)},m_2^{(1:b)},z^{(1:b),r})}{P(m_1^{(1:b)},m_2^{(1:b)})Q_Z^{\otimes br}}
\nonumber\\
&\hspaceonetwocol{0.6in}{0.15in}+ \mathbb{D}(\bar{P}_{M_1^{(1:b)} M_2^{(1:b)} Z^{(1:b,r)}}||P_{M_1^{(1:b)} M_2^{(1:b)}} Q_Z^{\otimes br})-\mathbb{D}(\bar{P}_{M_1^{(1:b)} M_2^{(1:b)} Z^{(1:b),r}}||P_{M_1^{(1:b)} M_2^{(1:b)}} Q_Z^{\otimes br})\nonumber\\
&\hspaceonetwocol{0.2in}{0.1in} = \mathbb{D}(P_{M_1^{(1:b)} M_2^{(1:b)} Z^{(1:b),r}} || \bar{P}_{M_1^{(1:b)} M_2^{(1:b)} Z^{(1:b),r}}) + \mathbb{D}(\bar{P}_{M_1^{(1:b)} M_2^{(1:b)} Z^{(1:b),r}}||P_{M_1^{(1:b)} M_2^{(1:b)}} Q_Z^{\otimes br}) \nonumber\\
&\hspaceonetwocol{0.6in}{0.15in}+ \sum_{m_1^{(1:b)},m_2^{(1:b)},z^{(1:b),r}}(P_{M_1^{(1:b)} M_2^{(1:b)} Z^{(1:b),r}}-\bar{P}_{M_1^{(1:b)} M_2^{(1:b)} Z^{(1:b),r}})\log\frac{\bar{P}(m_1^{(1:b)},m_2^{(1:b)},z^{(1:b),r})}{P(m_1^{(1:b)},m_2^{(1:b)})Q_Z^{\otimes br}}\nonumber\\
&\hspaceonetwocol{0.2in}{0.1in} \stackrel{(b)}{\leq} \mathbb{D}(P_{M_1^{(1:b)} M_2^{(1:b)} Z^{(1:b),r}} || \bar{P}_{M_1^{(1:b)} M_2^{(1:b)} Z^{(1:b),r}}) + \mathbb{D}(\bar{P}_{M_1^{(1:b)} M_2^{(1:b)} Z^{(1:b),r}}||\bar{P}_{M_1^{(1:b)} M_2^{(1:b)}} Q_Z^{\otimes br}) \nonumber\\
&\hspaceonetwocol{0.6in}{0.15in}+ \log \frac{1}{\mu} \mathbb{V}(P_{M_1^{(1:b)} M_2^{(1:b)} Z^{(1:b),r}},\bar{P}_{M_1^{(1:b)} M_2^{(1:b)} Z^{b,r}})\nonumber\\
&\hspaceonetwocol{0.2in}{0.1in} \stackrel{(c)}{\leq} 2\log \frac{1}{\mu} \mathbb{V}(P_{M_1^{(1:b)} M_2^{(1:b)} Z^{(1:b),r}},\bar{P}_{M_1^{(1:b)} M_2^{(1:b)} Z^{(1:b),r}}) + \mathbb{D}(\bar{P}_{M_1^{(1:b)} M_2^{(1:b)} Z^{b,r}}||\bar{P}_{M_1^{(1:b)} M_2^{(1:b)}} \bar{P}_{Z^{(1:b),r}})\nonumber\\
&\hspaceonetwocol{0.6in}{0.15in}+\mathbb{D}(\bar{P}_{Z^{(1:b),r}}||Q_Z^{\otimes br}) \nonumber\\
&\hspaceonetwocol{0.2in}{0.1in}= 2\log \frac{1}{\mu} \mathbb{V}(P_{M_1^{(1:b)} M_2^{(1:b)} Z^{(1:b),r}},\bar{P}_{M_1^{(1:b)} M_2^{(1:b)} Z^{(1:b),r}})+\bar{I}(M_1^{(1:b)},M_2^{(1:b)}; Z^{(1:b),r})\nonumber\\
&\hspaceonetwocol{0.6in}{0.15in}+\mathbb{D}(\bar{P}_{Z^{(1:b),r}}||Q_Z^{\otimes br})\label{nuu}
\end{align}
where
\begin{enumerate}[(a)]
    \item follows by adding $\mathbb{D}(P_{Z^{b,r}}||Q_Z^{\otimes br})$;
    \item follows because $\bar{P}_{M_1^{(1:b)}, M_2^{(1:b)}}=P_{M_1^{(1:b)}, M_2^{(1:b)}}$, $(P_{M_1^{(1:b)} M_2^{(1:b)} Z^{(1:b),r}}-\bar{P}_{M_1^{(1:b)} M_2^{(1:b)} Z^{(1:b),r}})\leq|P_{M_1^{(1:b)} M_2^{(1:b)} Z^{(1:b),r}}-\bar{P}_{M_1^{(1:b)} M_2^{(1:b)} Z^{(1:b),r}}|$ and by defining $\mu \triangleq \min_{z^{b,r}} Q_Z^{\otimes br}(z^{b,r})$;
    \item follows by Lemma~\ref{lemma1} and because $\mathbb{D}(\bar{P}_{M_1^{(1:b)} M_2^{(1:b)} Z^{(1:b),r}}||\bar{P}_{M_1^{(1:b)} M_2^{(1:b)}} Q_Z^{\otimes br})=\mathbb{D}(\bar{P}_{M_1^{(1:b)} M_2^{(1:b)} Z^{b,r}}||\bar{P}_{M_1^{(1:b)} M_2^{(1:b)}} \bar{P}_{Z^{(1:b),r}})+\mathbb{D}(\bar{P}_{Z^{(1:b),r}}||Q_Z^{\otimes br})$.
\end{enumerate}
The first and third terms of~\eqref{nuu} vanish exponentially with $br$ similar to Section~\ref{strictlycausal}. 

We now derive an achievable rate region by choosing values for $\rho_1'$, $\rho_1''$, $\rho_2$, $R_1$ and $R_2$ that satisfy the constraints for secrecy and probability of error. We find it more convenient to separately derive achievable rate regions under the two conditions $H(X_1|U) \lessgtr I(U,X_1;Y)$, and then merge them.

When $H(X_1|U) > I(U,X_1;Y)$, The following rates satisfy all error and secrecy constraints:
 \begin{align*}
     \rho_1''&=I(U;Z) +\epsilon, \\
     \rho_1'&=I(X_1;Z|U) +\epsilon, \\ 
     \rho_2&=I(X_2;Z|X_1,U)+\epsilon, \\
     R_1&=H(X_1|U)-I(U,X_1;Z)-2\epsilon,\\
     R_2&=I(X_1,X_2;Y)-I(X_2;Z|X_1,U)-H(X_1|U)-\epsilon,
 \end{align*}
 and the same is true for the following rates:
  \begin{align*}
     \rho_1''&=I(U,X_2;Z) +\epsilon, \\
     \rho_1'&=I(X_1,X_2;Z)-I(U,X_2;Z) +\epsilon, \\ 
     \rho_2&=\epsilon, \\
     R_1&=I(X_1,X_2;Y)-I(X_2;Y|X_1,U)-I(X_1,X_2;Z)-2\epsilon,\\
     R_2&=I(X_2;Y|X_1,U)-\epsilon.
 \end{align*}
 Considering the above two corner points, the following rate region is achievable.
 \begin{align*}
 R_1&\leq H(X_1|U)-I(U,X_1;Z)\\
 R_2&\leq I(X_2;Y|X_1,U)\\
 R_1+R_2&\leq I(X_1,X_2;Y)-I(X_1,X_2;Z)
 \end{align*}
 which is identical to Eq.~\eqref{eq:proposition3} absent one of the two sum rate constraints.
 
 %\textcolor{red}{Note that the achievable sum rate can be written as $R_1+R_2 \leq\min\{ I(X_1,X_2;Y),H(X_1|U)+I(X_2;Y|X_1,U) \}-I(X_1,X_2;Z)$ where  $\min\{ I(X_1,X_2;Y),H(X_1|U)+I(X_2;Y|X_1,U) \}=\min\{ I(X_1,X_2;Y),I(X_1,X_2;Y)+H(X_1|U)-I(U,X_1;Y) \} = I(X_1,X_2;Y)$ for $H(X_1|U)>I(U,X_1;Y)$.}
 
When $H(X_1|U) \leq I(U,X_1;Y)$, the following rates satisfy all error and secrecy constraints:
 \begin{align*}
     \rho_1''&=I(U;Z) +\epsilon, \\
     \rho_1'&=I(X_1;Z|U) +\epsilon, \\ 
     \rho_2&=I(X_2;Z|X_1,U)+\epsilon, \\
     R_1&=H(X_1|U)-I(U,X_1;Z)-2\epsilon,\\
     R_2&=I(X_2;Y|X_1,U)-I(X_2;Z|X_1,U)-\epsilon,
 \end{align*}
 and the same is true for the following rates:
  \begin{align*}
     \rho_1''&=I(U,X_2;Z) +\epsilon, \\
     \rho_1'&=I(X_1,X_2;Z)-I(U,X_2;Z) +\epsilon, \\ 
     \rho_2&=\epsilon, \\
     R_1&=H(X_1|U)-I(X_1,X_2;Z)-2\epsilon,\\
     R_2&=I(X_2;Y|X_1,U)-\epsilon.
 \end{align*}
Considering the above two corner points, the following rate region is achievable.
 \begin{align*}
 R_1&\leq H(X_1|U)-I(U,X_1;Z)\\
 R_2&\leq I(X_2;Y|X_1,U)\\
 R_1+R_2&\leq H(X_1|U)+I(X_2;Y|X_1,U)-I(X_1,X_2;Z)
 \end{align*}
 which is again identical to Eq.~\eqref{eq:proposition3} absent one of the two sum rate constraints.
 
 Thus far, we have two achievable rate regions for the two conditions $ H(X_1|U) \lessgtr I(U,X_1;Y)$, and the overall achievable rate region is usually specified as the union of the two. However, a more compact representation is possible via the following useful information inequality:
 \[
 H(X_1|U) \lessgtr I(U,X_1;Y) \quad \Rightarrow \quad   H(X_1|U)+ I(X_2;Y|X_1,U) \lessgtr I(X_1,X_2;Y)
 \]
 which holds because of $I(X_1,X_2;Y)=I(U,X_1,X_2;Y)$ and the chain rule. It then follows that the smaller of the two derived sum rate constraints is always active. Therefore we can simplify the expression of the achievable region by using the intersection of the two sum rate constraints.
 
 This concludes the proof of Proposition~\ref{strictly_secrecy}.

\subsection{Achievability: Strong Secrecy of \ac{MAC} with Causal Cribbing }
\label{appendix:mac_causal}

This proof is similar to the achievability proof of the strictly-causal case presented in Appendix~\ref{appendix:mac_strictlycausal} and we again use a Shannon strategy for generating $X_2$ rather than codewords \cite{cribbingmeulen}.  Using the same notation as in Section~\ref{causal} for the strategies $T$, we find from Proposition~\ref{strictly_secrecy} that rate pairs $(R_1,R_2)$ satisfying the following secrecy are achievable with strictly-causal cribbing:
\begin{align*}
R_1 &< H(X_1|U)-I(U,X_1;Z),\\
R_2 &< I(T;Y|X_1,U),\\
R_1+R_2 &< I(X_1,T;Y)-I(X_1,T;Z),
\end{align*}
for any joint distribution $P_{UX_1TYZ}\eqdef P_{U}P_{X_1|U}P_{T|U}W^{+}_{YZ|X_1T}$ with marginal $Q_Z$. Restricting the distribution to satisfy $P_{UX_1TYZ}\eqdef P_{U} P_{X_1}P_{T}W^{+}_{YZ|X_1T}$ yields:
\begin{align*}
H(X_1|U)&=H(X_1), \\
I(U,X_1;Z)&=I(X_1;Z)\\
I(T;Y|X_1,U)&=I(T,X_2;Y|X_1,U)=I(X_2;Y|X_1), \\
I(X_1,T;Y)&=I(X_1,X_2,T;Y)=I(X_1,X_2;Y),\\
I(X_1,T;Z)&=I(X_1,X_2,T;Z)=I(X_1,X_2;Z).
\end{align*}
and $P(x_1,x_2,y,z)=P(x_1)\sum_{t:t(x_1)=x_2}P(t)W(y,z|x_1,x_2)$.
 To complete the proof, we again follow \cite[Eq.~(44)]{cribbingmeulen} to note that for an arbitrary distribution $P^*(x_1,x_2)$ there exists a product distribution $P(x_1,t)=P(x_1)P(t)$ such that $P^*(x_1,x_2)=P(x_1)\sum_{t:t(x_1)=x_2}P(t)$.

%=====================================================================================================

\bibliographystyle{IEEEtran}
\bibliography{IEEEabrv,noha}

% Generated by IEEEtran.bst, version: 1.14 (2015/08/26)
\begin{thebibliography}{10}
\providecommand{\url}[1]{#1}
\csname url@samestyle\endcsname
\providecommand{\newblock}{\relax}
\providecommand{\bibinfo}[2]{#2}
\providecommand{\BIBentrySTDinterwordspacing}{\spaceskip=0pt\relax}
\providecommand{\BIBentryALTinterwordstretchfactor}{4}
\providecommand{\BIBentryALTinterwordspacing}{\spaceskip=\fontdimen2\font plus
\BIBentryALTinterwordstretchfactor\fontdimen3\font minus
  \fontdimen4\font\relax}
\providecommand{\BIBforeignlanguage}[2]{{%
\expandafter\ifx\csname l@#1\endcsname\relax
\typeout{** WARNING: IEEEtran.bst: No hyphenation pattern has been}%
\typeout{** loaded for the language `#1'. Using the pattern for}%
\typeout{** the default language instead.}%
\else
\language=\csname l@#1\endcsname
\fi
#2}}
\providecommand{\BIBdecl}{\relax}
\BIBdecl

\bibitem{resolvability_coding}
Y.~Steinberg and S.~Verd\'u, ``Channel simulation and coding with side
  information,'' \emph{{IEEE} Trans. Inform. Theory}, vol.~40, no.~3, pp.
  634--646, May 1994.

\bibitem{resolvability_distortion}
------, ``Simulation of random processes and rate-distortion theory,''
  \emph{{IEEE} Trans. Inform. Theory}, vol.~42, no.~1, pp. 63--86, Jan. 1996.

\bibitem{synthesis}
P.~Cuff, ``Distributed channel synthesis,'' \emph{{IEEE} Trans. Inform.
  Theory}, vol.~59, no.~11, pp. 7071--7096, Nov. 2013.

\bibitem{Hayashi2006}
M.~Hayashi, ``General nonasymptotic and asymptotic formulas in channel
  resolvability and identification capacity and their application to the
  wiretap channels,'' \emph{{IEEE} {T}rans. {I}nf. {T}heory}, vol.~52, no.~4,
  pp. 1562--1575, April 2006.

\bibitem{strong_broadcast_resolvability}
M.~R. Bloch and J.~N. Laneman, ``Strong secrecy from channel resolvability,''
  \emph{{IEEE} Trans. Inform. Theory}, vol.~59, no.~12, pp. 8077--8098, Dec.
  2013.

\bibitem{strong_interference_resolvability_journal}
Z.~Wang, R.~F. Schaefer, M.~Skoglund, M.~Xiao, and H.~V. Poor, ``Strong secrecy
  for interference channels based on channel resolvability,'' \emph{{IEEE}
  Trans. Inform. Theory}, vol.~64, no.~7, pp. 5110--5130, Jul. 2018.

\bibitem{strong_secrecy_cooperative}
Z.~Goldfeld, G.~Kramer, H.~H. Permuter, and P.~Cuff, ``Strong secrecy for
  cooperative broadcast channels,'' \emph{{IEEE} Trans. Inform. Theory},
  vol.~63, no.~1, pp. 469--495, Jan. 2017.

\bibitem{e_resolvability}
J.~Liu, P.~Cuff, and S.~Verd\'u, ``Resolvability in ${E}_{\gamma}$ with
  applications to lossy compression and wiretap channels,'' in \emph{IEEE
  International Symposium on Information Theory (ISIT)}, Jun. 2015, pp.
  755--759.

\bibitem{wyner_common}
A.~Wyner, ``The common information of two dependent random variables,''
  \emph{{IEEE} Trans. Inform. Theory}, vol.~21, no.~2, pp. 163--179, Mar. 1975.

\bibitem{resolvability}
T.~S. Han and S.~Verd\'u, ``Approximation theory of output statistics,''
  \emph{{IEEE} Trans. Inform. Theory}, vol.~39, no.~3, pp. 752--772, May 1993.

\bibitem{Hou2013}
J.~Hou and G.~Kramer, ``Informational divergence approximations to product
  distributions,'' in \emph{13th Canadian Workshop on Information Theory},
  Toronto, ON, Canada, Jun. 2013, pp. 76--81.

\bibitem{resolvability_mac}
Y.~Steinberg, ``Resolvability theory for the multiple-access channel,''
  \emph{{IEEE} Trans. Inform. Theory}, vol.~44, no.~2, pp. 472--487, Mar. 1998.

\bibitem{Frey2017}
M.~Frey, I.~Bjelakovi\'c, and S.~Stan\'czak, ``The {MAC} resolvability region,
  semantic security and its operational implications,'' arXiv preprint:
  1710.02342, 2017.

\bibitem{strongmac}
M.~H. Yassaee and M.~R. Aref, ``Multiple access wiretap channels with strong
  secrecy,'' in \emph{IEEE Information Theory Workshop (ITW)}, 2010, pp. 1--5.

\bibitem{strong_twoway}
A.~J. Pierrot and M.~R. Bloch, ``Strongly secure communications over the
  two-way wiretap channel,'' \emph{{IEEE} Trans. Inf. Forensics Security.},
  vol.~6, no.~3, pp. 595--605, Sep. 2011.

\bibitem{cribbingmeulen}
F.~Willems and E.~van~der Meulen, ``The discrete memoryless multiple-access
  channel with cribbing encoders,'' \emph{{IEEE} Trans. Inform. Theory},
  vol.~31, no.~3, pp. 313--327, May 1985.

\bibitem{almost_indep}
I.~Csisz\'ar, ``Almost independence and secrecy capacity,'' in \emph{Problems
  of Information Transmission}, vol.~32, no.~1, 1996, p. 40–47.

\bibitem{weak2strong}
U.~Maurer and S.~Wolf, ``Information-theoretic key agreement: From weak to
  strong secrecy for free,'' in \emph{Advances in Cryptology --- EUROCRYPT
  2000}, B.~Preneel, Ed.\hskip 1em plus 0.5em minus 0.4em\relax Berlin,
  Heidelberg: Springer Berlin Heidelberg, 2000, pp. 351--368.

\bibitem{simeonecognitive}
O.~Simeone and A.~Yener, ``The cognitive multiple access wire-tap channel,'' in
  \emph{43rd Annual Conference on Information Sciences and Systems (CISS)},
  Mar. 2009, pp. 158--163.

\bibitem{mac_conference}
P.~Xu, Z.~Ding, and X.~Dai, ``Rate regions for multiple access channel with
  conference and secrecy constraints,'' \emph{{IEEE} Trans. Inf. Forensics
  Security.}, vol.~8, no.~12, pp. 1961--1974, Dec. 2013.

\bibitem{tang2007multiple}
X.~Tang, R.~Liu, P.~Spasojevic, and H.~V. Poor, ``Multiple access channels with
  generalized feedback and confidential messages,'' in \emph{IEEE Information
  Theory Workshop (ITW)}.\hskip 1em plus 0.5em minus 0.4em\relax IEEE, 2007,
  pp. 608--613.

\bibitem{Awan:TIFS13}
Z.~H. Awan, A.~Zaidi, and L.~Vandendorpe, ``Multiaccess channel with partially
  cooperating encoders and security constraints,'' \emph{{IEEE} Trans. Inf.
  Forensics Security.}, vol.~8, no.~7, pp. 1243--1254, Jul. 2013.

\bibitem{interference_gf}
S.~I. Bross, ``The discrete memoryless interference channel with one-sided
  generalized feedback and secrecy,'' \emph{{IEEE} Trans. Inform. Theory},
  vol.~63, no.~5, pp. 2710--2725, May 2017.

\bibitem{Liang:IT09}
Y.~Liang, A.~Somekh-Baruch, H.~V. Poor, S.~Shamai, and S.~Verd\'u, ``Capacity
  of cognitive interference channels with and without secrecy,'' \emph{{IEEE}
  Trans. Inform. Theory}, vol.~55, no.~2, pp. 604--619, Feb. 2009.

\bibitem{Farsani:isit2014}
R.~K. Farsani and R.~Ebrahimpour, ``Capacity theorems for the cognitive radio
  channel with confidential messages,'' in \emph{IEEE International Symposium
  on Information Theory (ISIT)}, Jun. 2014, pp. 1416--1420.

\bibitem{relaywt}
L.~Lai and H.~E. Gamal, ``The relay eavesdropper channel: Cooperation for
  secrecy,'' \emph{{IEEE} Trans. Inform. Theory}, vol.~54, no.~9, pp.
  4005--4019, Sep. 2008.

\bibitem{relay_wt}
M.~Yuksel and E.~Erkip, ``The relay channel with a wire-tapper,'' in \emph{41st
  Annual Conference on Information Sciences and Systems}, Mar. 2007, pp.
  13--18.

\bibitem{coop_relay_bc}
E.~Ekrem and S.~Ulukus, ``Secrecy in cooperative relay broadcast channels,''
  \emph{{IEEE} Trans. Inform. Theory}, vol.~57, no.~1, pp. 137--155, Jan. 2011.

\bibitem{wyner_wiretap}
A.~Wyner, ``The wire-tap channel,'' \emph{The Bell System Technical Journal},
  vol.~54, no.~8, pp. 1355--1387, Oct. 1975.

\bibitem{csiszar_wiretap}
I.~Csisz\'ar and J.~Korner, ``Broadcast channels with confidential messages,''
  \emph{{IEEE} Trans. Inform. Theory}, vol.~24, no.~3, pp. 339--348, May 1978.

\bibitem{Watanabe:IT14}
S.~Watanabe and Y.~Oohama, ``Cognitive interference channels with confidential
  messages under randomness constraint,'' \emph{{IEEE} Trans. Inform. Theory},
  vol.~60, no.~12, pp. 7698--7707, Dec. 2014.

\bibitem{Chou:ISIT16}
R.~A. Chou and A.~Yener, ``Polar coding for the multiple access wiretap channel
  via rate-splitting and cooperative jamming,'' in \emph{IEEE International
  Symposium on Information Theory (ISIT)}, Jul. 2016, pp. 983--987.

\bibitem{strong_coord}
M.~R. Bloch and J.~Kliewer, ``Strong coordination over a line network,'' in
  \emph{IEEE International Symposium on Information Theory (ISIT)}, Istanbul,
  Turkey, Jul. 2013, pp. 2319--2323.

\bibitem{Sason2016a}
I.~Sason and S.~Verdú, ``$f$ -divergence inequalities,'' \emph{{IEEE} Trans.
  Inform. Theory}, vol.~62, no.~11, pp. 5973--6006, Nov. 2016.

\bibitem{Cover_ElGamal_Relay}
T.~Cover and A.~Gamal, ``Capacity theorems for the relay channel,''
  \emph{{IEEE} Trans. Inform. Theory}, vol.~25, no.~5, pp. 572--584, Sep. 1979.

\bibitem{permuter2}
H.~Asnani and H.~H. Permuter, ``Multiple-access channel with partial and
  controlled cribbing encoders,'' \emph{{IEEE} Trans. Inform. Theory}, vol.~59,
  no.~4, pp. 2252--2266, Apr. 2013.

\end{thebibliography}
%\
\end{document}